\newcommand{\I}{{\mathbb{I}}}
\newcommand{\C}{{\mathbb{C}}}
\newcommand{\N}{{\mathbb{N}}}
\newcommand{\cV}{{\mathcal{V}}}
\newcommand{\cW}{{\mathcal{W}}}
\newcommand{\cX}{{\mathcal{X}}}
\newcommand{\cY}{{\mathcal{Y}}}
\newcommand{\cZ}{{\mathcal{Z}}}
\newcommand{\Tr}{\mathop{\mathrm{Tr}}}
\newcommand{\rank}{\mathop{\mathrm{rank}}}
\newcommand{\spn}{\mathop{\mathrm{span}}}
\newcommand{\be}{\begin{equation}}
\newcommand{\ee}{\end{equation}}
\def\ba#1\ea{\begin{align}#1\end{align}}
\newtheorem{theorem}{Theorem}
\newtheorem{lemma}[theorem]{Lemma}
\newtheorem{claim}[theorem]{Claim}
\theoremstyle{definition}
\begin{document}

%%%%%%%%%%%%%%%%%%%%%%%%%%%%%%%%%%%%%%%%%%%%%%%%%%%%%%%%%%%%%%%%%%%%%%%%%%%%%%

\title{Fixed Space of Positive Trace-Preserving Super-Operators}

\author{
\normalsize Ansis Rosmanis\thanks{arosmani@cs.uwaterloo.ca} \\[.5ex]
\small David R.\ Cheriton School of Computer Science \\
\small and Institute for Quantum Computing \\
\small University of Waterloo
}

\date{} 
\maketitle

%======================================================================

\begin{abstract}
We examine the fixed space of positive trace-preserving super-operators.
We describe a specific structure that this space must have and what the projection onto it must look like. 
We show how these results, in turn, lead to an alternative proof of the complete characterization of the fixed space of completely positive trace-preserving super-operators. 
 \end{abstract}

%======================================================================
\section{Introduction}
%======================================================================

Completely positive trace-preserving (CPTP) super-operators are very important in the field of quantum information processing as they are the most general quantum operations one can apply to a quantum system \cite{nielsen,laflamme}.
Because CPTP super-operators are a special case of positive trace-preserving (PTP) super-operators, it is interesting to know which properties of CPTP super-operators are inherited from PTP super-operators and which are unique.
 In this paper we examine what form the fixed space of PTP and CPTP super-operators can take; so, naturally, we are considering only super-operators whose input and output spaces are the same.
We show that the fixed space of PTP and CPTP super-operators have a specific common structure.
However, there are PTP super-operators (the transpose operation, for example) such that no CPTP super-operator has the same fixed space as they do.

The study of the fixed space of CPTP super-operators is important in determining the computational power of closed timelike curves \cite{AW}.
 The characterization of the fixed space may also be useful in analyzing the experimental magic state distillation \cite{SZRL}, a specific approach to experimental quantum computation.
Positive but not completely positive trace-preserving super-operators are not as well studied as CPTP super-operators,
yet they are still of importance in the quantum information theory as, for example, they are used to detect entanglement between two quantum systems \cite{PW}. 

 Let $L(\cX)$ denote the space of all linear operators that map $\cX$ to itself. 
 The complete characterization of the fixed space of CPTP super-operators is known \cite{KNPV}:
\begin{theorem} \label{th:KNPV}
 Let $\Psi$ be a CPTP super-operator acting on $L(\cV)$.
 There exist spaces $\cY_1,\ldots,\cY_n$ and $\cZ_1,\ldots,\cZ_n$, and, for all $i\in[1\,..\,n]$, a density operator $\rho_i$ acting on $\cZ_i$ of rank $\dim\cZ_i$ such that 
$\bigoplus_{i=1}^n\cY_i\otimes\cZ_i \subseteq \cV$
 and the fixed space of $\Psi$ is $\bigoplus_{i=1}^n L(\cY_i)\otimes \rho_i$.
\end{theorem}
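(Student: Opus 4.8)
The plan is to reduce to a map with a faithful fixed state, invoke the classical fact that the Heisenberg-picture fixed points of a unital completely positive map form a $*$-algebra, and then transport that structure back to the Schr\"odinger picture through the adjoint of the ergodic projection. First I would record that the fixed space $\mathcal F:=\{X\in L(\cV):\Psi(X)=X\}$ is spanned by fixed states and concentrated on a single support. Since a completely positive map commutes with the adjoint operation, $\mathcal F$ is spanned by its Hermitian elements, and if $X=X_+-X_-$ is the Jordan decomposition of a Hermitian $X\in\mathcal F$, then compressing $\Psi(X_+)-\Psi(X_-)=X_+-X_-$ to the support of $X_+$ and using $\Tr\Psi(X_\pm)=\Tr X_\pm$ forces $\Psi(X_+)=X_+$ and $\Psi(X_-)=X_-$; hence $\mathcal F$ is spanned by fixed subnormalized density operators. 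As $\Psi$ maps density operators to density operators it has a fixed density operator (by compactness and convexity); I would fix one, $\sigma$, of maximal rank, with support $\cV_0$. A fixed state with support not inside $\cV_0$, averaged with $\sigma$, would yield a fixed state of larger rank, so $\mathcal F\subseteq L(\cV_0)$; and any $\rho\ge 0$ supported on $\cV_0$ satisfies $\rho\le c\sigma$, so $\Psi(\rho)\le c\sigma$ is again supported on $\cV_0$. Thus $\Psi$ restricts to a CPTP map on $L(\cV_0)$ with the same fixed space, and I may assume $\cV=\cV_0$ and that $\sigma$ is faithful.

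Next I would pass to the Hilbert--Schmidt adjoint $\Psi^*$, which is unital and completely positive; the identity $\Psi(\sigma)=\sigma$ says exactly that the faithful state $\omega_\sigma:=\Tr(\sigma\,\cdot)$ is $\Psi^*$-invariant. A standard Schwarz-inequality argument, using faithfulness of $\omega_\sigma$, shows that every fixed point of $\Psi^*$ lies in its multiplicative domain, so $\mathcal B:=\{Y:\Psi^*(Y)=Y\}$ is a unital $*$-subalgebra of $L(\cV_0)$ and $\Psi^*$ is $\mathcal B$-bimodular. By the mean ergodic theorem the Ces\`aro averages of the powers of $\Psi^*$ converge to a conditional expectation $E\colon L(\cV_0)\to\mathcal B$ with $\omega_\sigma\circ E=\omega_\sigma$, and its Hilbert--Schmidt adjoint $E^*$ --- the Ces\`aro limit of the powers of $\Psi$ --- is the projection onto $\mathcal F$. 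Since $\ker E=\{X:\Tr(\sigma bX)=0\text{ for all }b\in\mathcal B\}$, which equals $(\mathcal B\sigma)^{\perp}$ in the Hilbert--Schmidt inner product, passing to orthogonal complements gives $\mathcal F=E^*(L(\cV_0))=\mathcal B\sigma$.

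Finally I would identify $\mathcal B\sigma$. By the Artin--Wedderburn classification of finite-dimensional $*$-subalgebras there are spaces with $\cV_0=\bigoplus_{i=1}^n\cY_i\otimes\cZ_i$ and $\mathcal B=\bigoplus_{i=1}^n L(\cY_i)\otimes\I_{\cZ_i}$. Because $\mathcal F$ is closed under the adjoint operation, $\mathcal B\sigma=(\mathcal B\sigma)^\dagger=\sigma\mathcal B$, so $\sigma\mathcal B\sigma^{-1}=\mathcal B$; by continuity $\mathcal B$ is then stable under the whole modular flow $\sigma^{\ii t}(\cdot)\sigma^{-\ii t}$, which forces $\sigma$ to commute with the minimal central projections of $\mathcal B$, so $\sigma=\bigoplus_i\sigma_i$ is block-diagonal. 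On the $i$-th block the flow preserves both $L(\cY_i)\otimes\I_{\cZ_i}$ and its commutant, hence acts as $\sigma_i^{\ii t}=u_i(t)\otimes v_i(t)$, whence $\sigma_i=\xi_i\otimes\eta_i$ with $\xi_i>0$ on $\cY_i$ and $\eta_i>0$ on $\cZ_i$. Consequently $\mathcal F=\mathcal B\sigma=\bigoplus_i\big(L(\cY_i)\xi_i\big)\otimes\eta_i=\bigoplus_i L(\cY_i)\otimes\rho_i$ with $\rho_i:=\eta_i/\Tr\eta_i$ a density operator on $\cZ_i$ of full rank (because $\sigma$ is faithful), and since $\bigoplus_i\cY_i\otimes\cZ_i=\cV_0\subseteq\cV$ this is the assertion.

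The step I expect to be the real obstacle is the passage through the adjoint: showing that $\mathcal B$ is a genuine $*$-algebra rather than merely a self-adjoint operator system, that the ergodic limit is an $\omega_\sigma$-preserving conditional expectation, and that $E^*$ recovers $\mathcal F$ in the stated block form. This is precisely where complete positivity enters, through the Schwarz inequality, together with the maximal-rank choice of $\sigma$; mere positivity does not force the blocks of the fixed space to be positive, as the transpose map already illustrates. I would expect the paper's own argument to proceed differently --- first describing the fixed space of a general positive trace-preserving map, and the projection onto it, and only then using complete positivity to upgrade the blocks of that general description to full-rank density operators.
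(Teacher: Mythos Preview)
Your argument is correct: it is the classical operator-algebraic proof, passing to the unital completely positive adjoint $\Psi^*$, using the Schwarz inequality together with a faithful invariant state to identify the fixed points of $\Psi^*$ as a unital $*$-algebra $\mathcal B$, invoking Artin--Wedderburn, and reading off the block structure of $\sigma$ from $\sigma\mathcal B\sigma^{-1}=\mathcal B$. (The one step worth tightening is ``by continuity'': the passage from $\sigma\mathcal B\sigma^{-1}=\mathcal B$ to invariance under all $\sigma^{it}$ is really because $\mathrm{Ad}_\sigma$ is positive definite, hence diagonalizable, in the Hilbert--Schmidt inner product, so any invariant subspace is spanned by eigenvectors and is therefore preserved by every $\mathrm{Ad}_{\sigma^t}$.)

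The paper, however, proceeds exactly as you anticipated in your final sentence and takes a genuinely different route: it never touches the adjoint, the Schwarz inequality, or Wedderburn theory. It works entirely in the Schr\"odinger picture with the projection $\Phi$ onto the fixed space (Theorem~\ref{the:AW}) and, assuming only positivity and trace preservation, analyzes $\Phi$ by hand through the nonnegativity of $2\times2$ and $3\times3$ principal minors of operators of the form $\Phi(vv^*)$. This yields the structural Lemmas~\ref{lem:div} and~\ref{lem:struc} for arbitrary positive trace-preserving $\Psi$; complete positivity enters only at the very end, via a single principal minor of the Choi matrix, to show that $L(\cX_i,\cX_j)$ and $L(\cX_j,\cX_i)$ are separately invariant, upgrading Lemma~\ref{lem:struc} to Lemma~\ref{lem:stcomp} and giving the tensor decomposition. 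Your route is shorter and more conceptual once the operator-algebra toolkit is granted; the paper's route is elementary and, more to its stated purpose, produces nontrivial information about the fixed space of merely positive trace-preserving maps along the way, with Theorem~\ref{th:KNPV} recovered as a corollary.
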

In this paper we try to obtain a similar characterization of the fixed space of PTP super-operators.
While we do not obtain a complete characterization, we show many interesting properties that the fixed space of PTP super-operators and the projection onto it must satisfy.
As a result, these properties easily provide an alternative proof of Theorem \ref{th:KNPV}.

 In Section \ref{sec:np} we introduce notation and define necessary concepts.
 In Section \ref{sec:mr} we state the two main lemmas (Lemma \ref{lem:div} and Lemma \ref{lem:struc}) of the paper which regard the fixed space of PTP super-operators, and we prove them in Sections \ref{sec:pos} and \ref{sec:struc}, respectively.
 Section \ref{sec:spec} considers a special case of Lemma \ref{lem:struc} in which we can completely describe the structure of the fixed space.
 In Section \ref{sec:compos} we consider CPTP super-operators and we show how Lemmas \ref{lem:div} and \ref{lem:struc} imply Theorem \ref{th:KNPV}.
 And in Section \ref{sec:conc} we conclude with a discussion of open problems.

%======================================================================
\section{Notation and preliminaries} \label{sec:np}
%======================================================================

We use scripted capital letters $\cV$, $\cW$, $\cX$, $\cY$, $\cZ$ to denote complex Euclidean spaces, and $\cY\subseteq\cX$ denotes that $\cY$ is a subspace  of $\cX$.
 Let $\Pi_\cX$ denote the projector to $\cX$.
 For $\cY\subseteq\cX$, we define $\cX\setminus\cY$ to be the complementary subspace of $\cY$ into $\cX$.
 Let $L(\cX,\cY)$ be the set of all linear operators that map $\cX$ to $\cY$, and let $L(\cX)$ be short for $L(\cX,\cX)$.
 The set $L(\cX,\cY)$ forms a vector space itself.
 We define $T(\cX)$ to be the set of all linear super-operators that map $L(\cX)$ to $L(\cX)$. 
 For $\Psi\in T(\cX)$, we say that $M\subseteq L(\cX)$ is invariant under $\Psi$ if $\Psi[M]\subseteq M$, and we say that $\mu\in L(\cX)$ is a fixed point of $\Psi$, or, simply, is fixed, if $\Psi(\mu)=\mu$.
 The fixed space of $\Psi$ is the space of its fixed points.

Let $\otimes$ denote the tensor product  and let $\oplus$ denote the direct sum.
 We define the direct sum of two super-operators $\Psi\in T(\cX)$ and $\Xi\in T(\cY)$, where $\cX$ and $\cY$ are orthogonal spaces, to be the super-operator $\Psi\oplus\Xi\in T(\cX\oplus\cY)$ that maps  every $\mu\in L(\cX\oplus\cY)$ to $\Psi(\Pi_\cX\mu\Pi_\cX)+\Xi(\Pi_\cY\mu\Pi_\cY)$.
Let $\I_{L(\cX)}$ denote the identity super-operator on $L(\cX)$.

Let $I$ denote the imaginary unit. For a complex number $a$, let $a^*$ denote its complex conjugate.
For a linear operator $A$, let $A^*$ denote its complex conjugate transpose.
When we write $x\in\cX$, we think of $x$ as a column vector, and, thus, $x^*$ is a row vector.

We say that $A\in L(\cX)$ is Hermitian if $A=A^*$.
 All eigenvalues of a Hermitian operator are known to be real.
 We say that a Hermitian operator $A$ is positive semi-definite if all its eigenvalues are non-negative, and we write $A\succcurlyeq0$.
 An operator $A\in L(\cX)$ is positive semi-definite if and only if all its central minors are non-negative, or, equivalently, if and only if $x^*Ax\geq0$ for all $x\in\cX$.
 Thus, one can easily show that, if $A\in L(\cX)$ is positive semi-definite and there exists $x\in\cX$ such that $x^*Ax=0$, then $Ax=0$.

A super-operator $\Psi\in T(\cX)$ is Hermiticity-preserving if it maps Hermitian operators to Hermitian operators, or, equivalently, if $\Psi(\mu^*)=(\Psi(\mu))^*$ for all $\mu\in L(\cX)$.
 A Hermiticity-preserving super-operator is positive if it maps positive semi-definite operators to positive semi-definite operators.
 A positive super-operator $\Psi\in T(\cX)$ is completely positive if $\Psi\otimes\I_{L(\cY)}\in T(\cX\otimes\cY)$ is positive for all $\cY$.
 That is, $\Psi$ is completely positive if it remains positive when we suppose that it acts on a part of a larger system. 

We use $[a\,..\,b]$ to denote the set $\{a,a+1,\ldots,b\}$, where $a,b\in\N$.
 Let $\{x_1,\ldots,x_n\}$ be an orthonormal basis of $\cX$.
 Choi matrix $J(\Psi)$ of super-operator $\Psi\in T(\cX)$  is $n^2$ dimensional square matrix defined as $J(\Psi)_{(i,j),(k,l)}=x_i^*\Phi(x_j x_l^*)x_k$, where $i,j,k,l\in[1\,..\,n]$.
 It is known that $\Psi$ is completely positive if and only if $J(\Psi)$ is positive semi-definite (this condition is basis-independent).

Let $D(\cX)$ be the set of all positive semi-definite operators in $L(\cX)$ having trace $1$.
 We call elements of $D(\cX)$ density operators. 
 The support of $\rho\in D(\cX)$ is the space spanned by the eigenvectors of $\rho$ corresponding to non-zero eigenvalues.
 We say that a super-operator $\Psi\in T(\cX)$ is trace-preserving if $\Tr\Psi(\mu)=\Tr\mu$ for all $\mu\in L(\cX)$.

%======================================================================
\section{Main results} \label{sec:mr}
%======================================================================

Let $\cV$ be a complex Euclidean space and let $\Psi\in T(\cV)$ be a PTP super-operator.
 We are interested what are characteristics of the fixed space of $\Psi$.
 Since we are interested only in the fixed space, the following theorem will be very useful.
\begin{theorem}
\label{the:AW}
Let $\Psi\in T(\cV)$ be a PTP super-operator. There exists a PTP super-operator $\Phi\in T(\cV)$ such that, for all $\mu\in L(\cV)$, $\Phi(\mu)$ is a fixed point of $\Psi$ and every fixed point of $\Psi$ is also a fixed point of $\Phi$. 
\end{theorem}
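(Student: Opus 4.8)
The plan is to construct $\Phi$ as a limit of Cesàro averages of powers of $\Psi$. Concretely, for each $N \in \N$ define the super-operator
\[
\Phi_N = \frac{1}{N}\sum_{k=0}^{N-1}\Psi^k,
\]
which is again PTP, being a convex combination of PTP super-operators (positivity and trace-preservation are both preserved under convex combinations, and $\Psi^k$ is PTP because PTP super-operators are closed under composition). The idea is that averaging washes out everything except the part of $L(\cV)$ on which $\Psi$ acts as a ``rotation'' or permutation, leaving only fixed points in the limit. I would then show that $\Phi_N$ converges, as $N \to \infty$, to a super-operator $\Phi$ with the claimed properties.

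The key steps, in order, are as follows. First, establish that the sequence $(\Phi_N)_N$ has a convergent subsequence: the super-operators $\Phi_N$ all have bounded norm (for a suitable norm, using that PTP maps are trace-norm contractive on Hermitian operators, hence bounded), so by compactness of bounded sets in the finite-dimensional space $T(\cV)$ some subsequence converges; call a limit point $\Phi$. Second, verify $\Phi$ is PTP: positivity and trace-preservation are closed conditions, so they pass to the limit. Third, show $\Psi\circ\Phi = \Phi\circ\Psi = \Phi$: from the telescoping identity $\Psi\circ\Phi_N - \Phi_N = \frac{1}{N}(\Psi^N - \I_{L(\cV)})$, whose right-hand side tends to $0$ because $\Psi^N$ stays bounded; this gives $\Psi\circ\Phi = \Phi$ along the subsequence, and the symmetric computation gives $\Phi\circ\Psi = \Phi$. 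In particular $\Phi(\mu)$ is a fixed point of $\Psi$ for every $\mu$. Fourth, show every fixed point of $\Psi$ is fixed by $\Phi$: if $\Psi(\mu) = \mu$ then $\Psi^k(\mu) = \mu$ for all $k$, hence $\Phi_N(\mu) = \mu$ for all $N$, hence $\Phi(\mu) = \mu$. (This also shows $\Phi$ is idempotent, a projection onto the fixed space of $\Psi$, although the statement does not ask for that.)

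The main obstacle is the convergence argument: a priori $\Phi_N$ need only have a convergent \emph{subsequence}, so the limit point $\Phi$ might depend on the subsequence chosen. This is not a problem for the statement as written, since we only need the \emph{existence} of one such $\Phi$, and all the defining properties above hold for any subsequential limit. If one wanted genuine convergence of the full sequence, one would decompose $L(\cV)$ into the generalized eigenspaces of $\Psi$ and argue that, because $\Psi$ is trace-norm contractive, all its eigenvalues lie in the closed unit disk and those on the unit circle are non-defective (semisimple); the Cesàro average then kills the eigenvalue-$\lambda$ contributions for $|\lambda|=1$, $\lambda\neq 1$, and the strictly-inside-the-disk contributions, leaving exactly the eigenvalue-$1$ eigenspace. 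But for the theorem as stated, invoking finite-dimensional compactness to extract a single limit point is the cleanest route, and I expect that is what the author does.
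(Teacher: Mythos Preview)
Your proposal is correct; the paper does not itself prove this theorem but attributes it to Aaronson and Watrous, noting only that the key input---a norm bound on $\Psi$ (spectral norm of the natural matrix representation at most $1$, from Terhal--DiVincenzo)---needs only positivity rather than complete positivity. Your Ces\`aro-averaging construction is precisely the standard argument underlying that citation, with your uniform-boundedness step (via trace-norm contractivity of PTP maps on Hermitian operators, hence uniform boundedness of all $\Psi^k$ and $\Phi_N$) playing the same role as the spectral-norm bound the paper mentions.
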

This theorem is basically proved by Aaronson and Watrous \cite{AW}, except that they consider CPTP rather than PTP super-operators $\Psi$ and $\Phi$.
The proof is based on the fact that the natural matrix representation of $\Psi$ has spectral norm at most $1$.  That, in turn, was proved by Terhal and DiVincenzo \cite{TD}, and one can see that their proof requires only positivity of $\Psi$, not complete positivity.

In essence, $\Phi$ is a projection onto the fixed space of $\Psi$. 
 Thus, because the only thing about a super-operator we are interested in is its fixed space, it is enough to consider only projections, that is, super-operators $\Phi\in T(\cV)$ such that $\Phi(\Phi(\mu))=\Phi(\mu)$ for all $\mu\in L(\cV)$ (or $\Phi^2=\Phi$, for short).
 Let us restrict the class of super-operators we need to consider even further. 
 Let $\cX^\perp\subset\cV$ be the space of all vectors $y\in\cV$ such that $\Phi(\mu)y=0$ for all $\mu\in L(\cV)$, and let $\cX=\cV\setminus\cX^\perp$.
 For every non-zero vector $x \in\cX$, there exists $\mu\in L(\cV)$ such that $\Phi(\mu)x\neq 0$.
 Also note that $\Phi[L(\cV)]\subseteq L(\cX)$.
 Therefore, since $\Phi^2=\Phi$, we can restrict our attention to the action of $\Phi$ on the space $L(\cX)$.
 The following two lemmas characterize this action.
% %
\begin{lemma}
\label{lem:div}
The space $\cX$ can be divided into orthogonal subspaces $\cX_1,\ldots,\cX_l$ such that, for every $\cX_i$, there is a density operator $\rho_i\in D(\cX_i)$ of full rank (i.e., rank $\dim\cX_i$) satisfying $\Phi(\mu)=\Tr(\mu)\rho_i$ for all $\mu\in L(\cX_i)$. Moreover, $\Phi(\mu)=0$ for all $\mu\in L(\cX_i,\cX_j)$ whenever $\dim \cX_i\neq\dim\cX_j$.
\end{lemma}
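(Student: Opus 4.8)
The plan is to work with the projection $\Phi$ on $L(\cX)$ already isolated in the excerpt (so $\Phi$ is positive, trace‑preserving, $\Phi^2=\Phi$, $\Phi[L(\cV)]\subseteq L(\cX)$, and for every nonzero $x\in\cX$ some $\mu$ has $\Phi(\mu)x\neq0$), and to build the $\cX_i$ as supports of carefully chosen fixed density operators. First I would record a reference fixed state: $\sigma_*:=\Phi(\Pi_\cX)/\dim\cX$ is a density operator because $\Phi$ is positive and trace‑preserving, it is fixed because $\Phi^2=\Phi$, and it has full rank on $\cX$ — for any $\rho\in D(\cX)$ we have $\rho\preccurlyeq\Pi_\cX$, hence $\Phi(\rho)\preccurlyeq(\dim\cX)\sigma_*$, so $\sigma_* y=0$ would force $\Phi(\rho)y=0$ for every density operator $\rho$ and therefore $\Phi(\mu)y=0$ for every $\mu$, contradicting the choice of $\cX$. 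Note also that $\Phi$ sends $D(\cX)$ into the set $\mathcal{F}$ of fixed density operators, which is thus nonempty, compact and convex.

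Call $\rho\in\mathcal{F}$ an \emph{atom} if its support is inclusion‑minimal among supports of members of $\mathcal{F}$. The positivity core of the lemma is the assertion: if $\rho$ is an atom with support $\cY$, then $\Phi(\mu)=\Tr(\mu)\rho$ for all $\mu\in L(\cY)$. Since $D(\cY)$ spans $L(\cY)$, it is enough to prove $\Phi(\tau)=\rho$ for every $\tau\in D(\cY)$. As $\rho$ has full rank on $\cY$, there is $\epsilon\in(0,1)$ with $\rho-\epsilon\tau\succcurlyeq0$; writing $\rho=\epsilon\tau+(1-\epsilon)\rho'$ with $\rho'=(\rho-\epsilon\tau)/(1-\epsilon)\in D(\cY)$ and applying $\Phi$, the relation $\rho=\epsilon\Phi(\tau)+(1-\epsilon)\Phi(\rho')$ exhibits $\rho$ as a sum of positive semidefinite operators, so $\Phi(\tau)\in\mathcal{F}$ has support inside $\cY$, hence support exactly $\cY$ by minimality. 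Finally $\rho$ is the only element of $\mathcal{F}$ with support $\cY$: if $\eta$ were another, then since $\Tr(\rho-\eta)=0$ the operator $\rho-\eta$ is not positive semidefinite, so there is a largest $t_0\in(0,1)$ with $\rho-t_0\eta\succcurlyeq0$, and $(\rho-t_0\eta)/(1-t_0)$ is then a member of $\mathcal{F}$ whose support is strictly inside $\cY$ — impossible. Two consequences I will use: distinct atoms have trivially intersecting supports (a unit vector $w$ in the intersection would give $\rho=\Phi(ww^*)=\rho'$), and — decomposing $\sigma_*$ into extreme points of $\mathcal{F}$, which are easily seen to be atoms — the supports of the atoms span $\cX$.

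To assemble the decomposition I would take a \emph{maximal} family $\rho_1,\dots,\rho_l$ of atoms with pairwise orthogonal supports $\cX_1,\dots,\cX_l$ (such a family is finite and nonempty), and put $\cW:=\cX\setminus(\cX_1\oplus\dots\oplus\cX_l)$. If one knows that $\Phi$ maps $L(\cW)$ into $L(\cW)$, then $\Phi(\Pi_\cW)/\dim\cW$ is a member of $\mathcal{F}$ supported inside $\cW$, hence a convex combination of atoms supported inside $\cW$; any such atom is orthogonal to every $\cX_i$ and could be adjoined to the family, so maximality forces $\cW=0$, i.e.\ $\cX=\cX_1\oplus\dots\oplus\cX_l$. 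Combined with the core assertion this gives $\Phi(\mu)=\Tr(\mu)\rho_i$ on each $L(\cX_i)$, with $\rho_i$ of full rank on $\cX_i=\mathrm{supp}(\rho_i)$, which is the main statement.

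I expect the real obstacle to be precisely the claim that $\Phi[L(\cW)]\subseteq L(\cW)$, equivalently $\langle x,\Phi(vv^*)x\rangle=0$ for every unit $v\in\cW$ and every $x\in\cX_1\oplus\dots\oplus\cX_l$. Positivity must be used in earnest here: for $x\in\cX_i$ unit, positivity of $\Phi\big((\alpha x+\beta v)(\alpha x+\beta v)^*\big)$ together with $\Phi(xx^*)=\rho_i$ yields the $2\times2$ constraint $|\langle x,\Phi(xv^*)x\rangle|^2\le\langle x,\rho_i x\rangle\langle x,\Phi(vv^*)x\rangle$, and one has to feed these constraints, over all $x$ and all blocks, into the fact that $\Phi(vv^*)$ is itself in $\mathcal{F}$ (so a convex combination of atoms whose supports meet each $\cX_i$ trivially) and that $\Phi$ is idempotent, in order to kill $\langle x,\Phi(vv^*)x\rangle$. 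Already the qubit case requires a genuine computation — a positive trace‑preserving idempotent whose fixed states form a segment must have that segment be a full diameter of the Bloch ball, never a shorter chord — and the general argument is the substance of Section~\ref{sec:pos}. The ``Moreover'' clause I would obtain afterwards by the same kind of positivity bookkeeping applied to $\Phi$ restricted to $L(\cX_i\oplus\cX_j)$ once $d_i=\dim\cX_i\neq\dim\cX_j=d_j$ is in force.
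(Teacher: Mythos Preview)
Your atom framework and the core assertion are correct and amount to the paper's Lemma~\ref{sublem}: a fixed state of minimal support is the unique fixed state on that support, and $\Phi$ collapses $L(\cY)$ onto it. You have also correctly located the crux, the invariance of $L(\cW)$ for $\cW=\cX\setminus(\cX_1\oplus\cdots\oplus\cX_l)$, which in the paper is Lemma~\ref{cor:inv}. However, your sketch for that step does not go through. You use that distinct atoms have trivially \emph{intersecting} supports and propose to write $\Phi(vv^*)$ as a mixture of atoms whose supports meet each $\cX_i$ trivially; but trivial intersection is not orthogonality (any two non-orthogonal lines in $\C^2$ already show this), so such a mixture can still have $x^*\Phi(vv^*)x>0$ for $x\in\cX_i$, and your $2\times2$ minor bound then says nothing. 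The paper's argument for Lemma~\ref{cor:inv} is different in kind: it produces a specific fixed density operator $\rho$ with nonzero $\cY^\perp$-block, diagonalizes $\rho$ blockwise, and then uses \emph{trace preservation} (summing the diagonal entries of $\Phi(y_jy_j^*)$ to $1$) together with Lemma~\ref{lem:pos2} to force the unwanted matrix entries to vanish, carving out an invariant piece of $\cY^\perp$ that is enlarged inductively. Your positivity-only sketch does not supply this mechanism.

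The ``Moreover'' clause is likewise not just bookkeeping. The paper's proof is a rank contradiction that again leans on Lemma~\ref{cor:inv}: assuming a nonzero fixed $\theta$ in the off-diagonal block with $n_i<n_j$, one perturbs $\tfrac12\rho_i+\tfrac12\rho_j$ along $\theta$ until the rank drops, takes the orthogonal complement $\cY^\perp$ of the support (dimension at most $n_i$), shows $L(\cY^\perp)$ is invariant via Lemmas~\ref{lem:st} and~\ref{cor:inv}, and then observes that any fixed state in $D(\cY^\perp)$ other than $\rho_i$ must have rank at least $n_j>n_i\ge\dim\cY^\perp$. The dimension discrepancy is detected through this invariance-and-rank argument, not through positivity inequalities alone; without the complement-invariance lemma in hand, the argument cannot start.
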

% %
As the next lemma will show, even a stronger result holds: $\Phi(\mu)=0$ for all $\mu\in L(\cX_i,\cX_j)$ whenever $\rho_i$ and $\rho_j$ have different eigenspectra.
For convenience, let $\cY$ and $\cZ$ denote, respectively, $\cX_i$ and $\cX_j$ and let $\rho$ and $\sigma$ denote, respectively,  $\rho_i$ and $\rho_j$.
Suppose there exists $\theta\in L(\cZ,\cY)\oplus L(\cY,\cZ)$ such that $\Phi(\theta)\neq 0$.
 Since $\Phi^2=\Phi$, $\Phi(\theta)$ is a fixed point of $\Phi$.
 So are Hermitian operators $\Phi(\theta+\theta^*)$ and $\Phi(I\theta-I\theta^*)$, and, because $\Phi(\theta)\neq 0$, at least one of them is non-zero.
 Therefore, we can restrict our attention to Hermitian fixed points of $\Phi$.
% %
\begin{lemma}
\label{lem:struc}
 Let $\cY$ and $\cZ$ be two $m$-dimensional orthogonal subspaces of $\cX$ such that $\Phi(\mu)=\Tr(\mu)\rho$ for all $\mu\in L(\cY)$ and $\Phi(\mu)=\Tr(\mu)\sigma$ for all $\mu\in L(\cZ)$, where $\rho\in D(\cY)$ and $\sigma\in D(\cZ)$ both have rank  $m$.
 Suppose there exists a Hermitian operator $\xi\in L(\cZ,\cY)\oplus L(\cY,\cZ)$ fixed by $\Phi$ such that $\xi\neq 0$. Then, let 
\[
\Pi_\cY\xi\Pi_\cZ=c\sum_{k=1}^m r_ky_kz_k^*
\]
be a singular value decomposition of $\Pi_\cY\xi\Pi_\cZ$, where $c>0$, 
$(r_1,\ldots,r_m)$ is a a probability vector, and $\{y_1,\ldots,y_m\}$ and $\{z_1,\ldots,z_m\}$ are orthonormal bases of $\cY$ and $\cZ$, respectively. We have
\begin{align*}
&
\rho=\sum_{k=1}^m r_ky_ky_k^\ast \quad \text{and} \quad
\sigma=\sum_{k=1}^mr_kz_kz_k^\ast;
\\ &
\Phi(y_iz_i^*+z_iy_i^*)=\sum_{k=1}^m r_k(y_kz_k^*+z_ky_k^*)=\xi/c \quad\text{for all }i\in[1\,..\,m];
\\ &
\Phi(y_iz_j^*+z_iy_j^*)=0 \quad\text{for all }i,j\in[1\,..\,m]\text{ such that }i\neq j.
\end{align*}
\end{lemma}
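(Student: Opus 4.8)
The plan is to extract all the stated identities from the single hypothesis that $\xi$ is a non-zero Hermitian fixed point living in the off-diagonal block, by repeatedly applying $\Phi$, using $\Phi^2=\Phi$, positivity, and trace-preservation, together with the already-established diagonal behaviour $\Phi(\mu)=\Tr(\mu)\rho$ on $L(\cY)$ and $\Phi(\mu)=\Tr(\mu)\sigma$ on $L(\cZ)$. Write $\xi = \alpha + \alpha^*$ with $\alpha := \Pi_\cY\xi\Pi_\cZ = c\sum_k r_k y_k z_k^*$ (note $\Pi_\cY\xi\Pi_\cY = 0$ and $\Pi_\cZ\xi\Pi_\cZ = 0$ since $\xi$ is purely off-diagonal), and fix the singular value decomposition as in the statement. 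Throughout I will use that $\cY\oplus\cZ$ is $\Phi$-invariant in the relevant sense and that $\Phi$ restricted to Hermitian operators is Hermiticity-preserving and positive.

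First I would pin down $\rho$ and $\sigma$. The key trick is to feed $\Phi$ a positive semidefinite operator built from $\xi$ and exploit that $\Phi$ must return something positive semidefinite with controlled trace. Consider $\eta_t := t\Pi_\cY + t^{-1}\Pi_\cZ + \xi$ for a scalar $t>0$; in the basis adapted to the singular value decomposition this is block $\bigl(\begin{smallmatrix} t I & c\,\mathrm{diag}(r) \\ c\,\mathrm{diag}(r) & t^{-1} I\end{smallmatrix}\bigr)$, which is positive semidefinite precisely when $t\cdot t^{-1}\ge c^2 r_k^2$ for all $k$, i.e.\ for $t$ in a suitable range as long as $c r_k \le 1$. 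Applying $\Phi$ and using the diagonal identities gives $\Phi(\eta_t) = t\rho + t^{-1}\sigma + \xi \succcurlyeq 0$ for all such $t$; letting $t\to\infty$ and $t\to 0$ and massaging the positivity constraint (the Schur-complement / central-minor condition together with "if $A\succcurlyeq 0$ and $x^*Ax=0$ then $Ax=0$") forces $\rho$ and $\sigma$ to be supported exactly on the $y_k$'s and $z_k$'s respectively and to have $y_k,z_k$ as eigenvectors. To get the eigenvalues I would instead apply $\Phi$ to $y_1 y_1^* + \xi/c + z_1 z_1^*$-type operators, or more cleanly compute $\Phi(\xi) = \xi$ and also $\Phi$ of the rank-one positive operator $(y_k + z_k)(y_k+z_k)^*$; comparing the $\cY$-block of the output (which is $\rho$) against what the fixed-point condition on $\xi$ forces yields $\rho = \sum_k r_k y_k y_k^*$ and $\sigma = \sum_k r_k z_k z_k^*$, so in particular the $r_k$ are automatically a probability vector (consistency with $\Tr\rho = 1$) and $c r_k \le 1$ holds, justifying the range of $t$ used above.

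Next, for the diagonal entries $y_i z_i^* + z_i y_i^*$: each of these is a Hermitian operator in the off-diagonal block, and I want to show $\Phi$ sends it to $\xi/c$. The natural move is again positivity: $(y_i + z_i)(y_i+z_i)^* = y_i y_i^* + z_i z_i^* + (y_i z_i^* + z_i y_i^*)$ is positive semidefinite with trace $2$, so its image $\rho\cdot(\text{weight}) + \sigma\cdot(\text{weight}) + \Phi(y_i z_i^* + z_i y_i^*)$ is positive semidefinite — wait, more carefully $\Phi(y_iy_i^*) = r_i$-independent: $\Phi(y_iy_i^*) = \Tr(y_iy_i^*)\rho = \rho$. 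So $\Phi((y_i+z_i)(y_i+z_i)^*) = \rho + \sigma + \Phi(y_iz_i^* + z_iy_i^*) \succcurlyeq 0$. Since $\rho + \sigma = \sum_k r_k(y_ky_k^* + z_kz_k^*)$ is block-diagonal of the special form already known, the off-diagonal part of $\Phi(y_iz_i^*+z_iy_i^*)$ must be "dominated" by the diagonal, and running this for $(y_i - z_i)(y_i-z_i)^*$ as well sandwiches things. Combined with the fixed-point relation $\Phi(\xi)=\xi$ and linearity — $\xi/c = \sum_i r_i(y_iz_i^* + z_iy_i^*)$, so $\xi/c = \Phi(\xi/c) = \sum_i r_i \Phi(y_iz_i^*+z_iy_i^*)$ — and the positivity constraints forcing each $\Phi(y_iz_i^*+z_iy_i^*)$ to be a scalar multiple of the same matrix $\xi/c$, a counting/normalization argument gives $\Phi(y_iz_i^*+z_iy_i^*) = \xi/c$ for every $i$. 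For the off-diagonal-in-$(i,j)$ terms with $i\ne j$, the cleanest route is to apply $\Phi$ to positive operators like $(y_i + z_j)(y_i+z_j)^*$ and $(y_i - z_j)(y_i-z_j)^*$ (for $i\ne j$); their images are $\rho + \sigma \pm \Phi(y_iz_j^* + z_jy_i^*)$, both positive semidefinite, which squeezes $\Phi(y_iz_j^*+z_jy_i^*)$ to lie in a bounded set, and then a scaling argument — replace $y_i$ by $N y_i$, i.e.\ use $(Ny_i + z_j)(Ny_i+z_j)^* \succcurlyeq 0$ and let $N\to\infty$ — forces $\Phi(y_iz_j^*+z_jy_i^*)=0$. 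The same scaling idea independently re-derives the shape of $\rho,\sigma$ if one prefers.

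The main obstacle I anticipate is the bookkeeping in the positivity arguments: one must carefully track that $\Phi$ applied to a positive operator is positive \emph{and} that its $\cY$- and $\cZ$-diagonal blocks are already completely known (equal to $\rho$, $\sigma$ times traces), so that the only freedom is in the off-diagonal block, and then convert "block matrix with fixed diagonal is PSD for a family of perturbations" into the rigidity statements. The scaling trick $y_i \mapsto N y_i$ is what turns soft positivity bounds into exact vanishing/equality statements, and getting the eigenvalue labels of $\rho$ and $\sigma$ to line up correctly with the singular values $r_k$ of $\Pi_\cY\xi\Pi_\cZ$ — rather than some permutation — is the point that needs the fixed-point equation $\Phi(\xi)=\xi$ in an essential way, since positivity alone only sees unordered spectra. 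I would also need to handle the degenerate case where some $r_k$ coincide, where the singular vectors are not unique, but that only loosens the conclusion in a harmless way (any valid choice of SVD works).
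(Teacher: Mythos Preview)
Your proposal shares the paper's high-level strategy --- apply $\Phi$ to rank-one positives like $(y_i+z_j)(y_i+z_j)^*$ and combine positivity with the fixed-point equation --- but two of your three arguments have genuine gaps.

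\medskip

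\textbf{The third identity targets the wrong operator.} The lemma asks for $\Phi(y_iz_j^*+z_iy_j^*)=0$, which is \emph{not} the Hermitian combination; the paper explicitly flags this distinction right after stating the lemma. Your rank-one inputs $(y_i\pm z_j)(y_i\pm z_j)^*$ only access $\Phi(y_iz_j^*+z_jy_i^*)$, and that quantity need not vanish (its $(y_k,z_k)$-entry equals $2\ii\,\Im(y_k^*\Phi(y_iz_j^*)z_k)$, which the paper leaves unresolved in general). Moreover, your scaling move $y_i\mapsto Ny_i$ gives
\[
N^2\rho+\sigma+N\,\Phi(y_iz_j^*+z_jy_i^*)\succcurlyeq 0,
\]
whose Schur-complement condition $\rho\succcurlyeq H_{YZ}\sigma^{-1}H_{YZ}^*$ (with $H_{YZ}=\Pi_\cY\Phi(y_iz_j^*+z_jy_i^*)\Pi_\cZ$) is \emph{independent of $N$}; the limit $N\to\infty$ extracts nothing. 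The paper instead introduces a second rank-one family $C_{i,j}^\beta=\Phi((y_i+\beta y_j+z_i+\beta z_j)(\cdots)^*)$, computes explicit $2\times2$ and $3\times3$ central minors, and varies the unimodular phase $\beta$ to force $w_{i,j}^{k,l}:=y_k^*\Phi(y_iz_j^*+z_iy_j^*)z_l=0$ entrywise --- after first establishing an auxiliary identity $u_{i,j}^{k,k}+(v_{j,i}^{k,k})^*=0$ via yet another rank-one family.

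\medskip

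\textbf{The identification of $\rho,\sigma$ needs a saturation argument you have not supplied.} Your $\eta_t=t\Pi_\cY+t^{-1}\Pi_\cZ+\xi$ trick again yields a $t$-independent Schur-complement constraint, so the limits $t\to 0,\infty$ carry no information. The paper's mechanism is different and quite sharp: from the $2\times2$ minor of $A_i=\Phi((y_i+z_i)(y_i+z_i)^*)$ one gets $\Re(w_{i,i}^{k,k})\le\sqrt{\rho^{k,k}\sigma^{k,k}}$, and since $(\sqrt{\rho^{k,k}})_k$ and $(\sqrt{\sigma^{k,k}})_k$ are unit vectors, $\sum_k\Re(w_{i,i}^{k,k})\le 1$. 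But the fixed-point equation $\Phi(\xi)=\xi$ forces $\sum_i r_i\sum_k\Re(w_{i,i}^{k,k})=1$, so equality holds throughout, which simultaneously gives $\rho^{k,k}=\sigma^{k,k}=r_k$ and $w_{i,i}^{k,k}=r_k$. A subsequent $3\times3$ minor then kills the off-diagonal entries $\rho^{k,l}$. Your sketch correctly anticipates that the fixed-point equation is essential for aligning the eigenvalue labels, but the Cauchy--Schwarz-type saturation is the missing concrete step.
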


Note that in the last equality we consider $\Phi(y_iz_j^*+z_iy_j^*)$, not a Hermitian operator $\Phi(y_iz_j^*+z_jy_i^*)$.
 Due to Theorem \ref{the:AW}, Lemmas \ref{lem:div} and \ref{lem:struc} tell a lot about the fixed space of $\Psi$, an arbitrarily chosen PTP super-operator. 
 In the next two sections we prove these lemmas.

%======================================================================
\section{Decomposition of $L(\cX)$ into invariant subspaces} \label{sec:pos}
%======================================================================

Let $\Phi\in T(\cX)$ be a PTP super-operator satisfying $\Phi^2=\Phi$.
Let us assume that, for every non-zero vector $\psi\in\cX$, there exists $\mu\in D(\cX)$ such that $\Phi(\mu)\psi\neq 0$.
 (Note: equivalently we could have assumed that there exists $\mu\in L(\cX)$ satisfying this property, because every such $\mu$ can be expressed as a linear combination of density operators.)
 
 In this section we will prove Lemma \ref{lem:div}.
Let us first lay the groundwork for the proof.
 The following two lemmas hold for any positive super-operator $\Phi\in T(\cX)$ (see Appendix \ref{app}).

\begin{lemma} \label{lem:pos1}
   Suppose $x,y,z\in\cX$ satisfy $z^*\Phi(xx^*)z=0$ and $z^*\Phi(yy^*)z=0$. Then $\Phi(xy^*)z=0$.
\end{lemma}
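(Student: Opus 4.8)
The plan is to exploit the standard fact, recalled in the preliminaries, that for a positive semi-definite operator $A$ and a vector $v$, $v^*Av = 0$ forces $Av = 0$. First I would form, for an arbitrary complex scalar $\alpha$, the rank-one positive semi-definite operator $(x + \alpha y)(x + \alpha y)^* = xx^* + \alpha xy^* + \alpha^* yx^* + |\alpha|^2 yy^*$. Applying the positive super-operator $\Phi$ to it yields a positive semi-definite operator $\Phi\big((x+\alpha y)(x+\alpha y)^*\big)$, hence $z^*\Phi\big((x+\alpha y)(x+\alpha y)^*\big) z \geq 0$. Expanding by linearity of $\Phi$ and using the two hypotheses $z^*\Phi(xx^*)z = 0$ and $z^*\Phi(yy^*)z = 0$, the diagonal terms vanish and we are left with
\[
\alpha\, z^*\Phi(xy^*)z + \alpha^*\, z^*\Phi(yx^*)z \geq 0 \qquad \text{for all } \alpha \in \C .
\]
Since this linear-in-$(\alpha,\alpha^*)$ real quantity is bounded below (in fact nonnegative) for all $\alpha$, it must be identically zero; concretely, choosing $\alpha$ and $-\alpha$ shows the expression is $\leq 0$ as well, so $z^*\Phi(xy^*)z = -\big(z^*\Phi(yx^*)z\big)^*$... — more cleanly, pick $\alpha = 1$ and $\alpha = \ii$ to conclude both $\mathrm{Re}\big(z^*\Phi(xy^*)z\big) = 0$ and $\mathrm{Im}\big(z^*\Phi(xy^*)z\big) = 0$ (using that $\Phi$ is Hermiticity-preserving, so $z^*\Phi(yx^*)z = \overline{z^*\Phi(xy^*)z}$), hence $z^*\Phi(xy^*)z = 0$.

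Knowing $z^*\Phi(xy^*)z = 0$ is not yet the claim; I need $\Phi(xy^*)z = 0$ as a vector. To get there I would again use positivity: the hypothesis $z^*\Phi(yy^*)z = 0$ together with $\Phi(yy^*) \succcurlyeq 0$ gives $\Phi(yy^*)z = 0$, and similarly $\Phi(xx^*)z = 0$. Then consider the positive semi-definite operator $\Phi\big((x + \alpha y)(x + \alpha y)^*\big)$ once more and apply it to $z$: it equals $\alpha\,\Phi(xy^*)z + \alpha^*\,\Phi(yx^*)z$ after the diagonal terms are killed. The key move is to pair this with the quadratic form in a different test vector: for any $w \in \cX$, positivity gives $w^*\Phi\big((x+\alpha y)(x+\alpha y)^*\big)w \geq 0$, but more to the point, since $\Phi\big((x+\alpha y)(x+\alpha y)^*\big)$ is PSD and we have just shown $z^*\Phi\big((x+\alpha y)(x+\alpha y)^*\big)z = 0$ (all four terms vanish against $z$ from both sides by the computations above), the PSD-plus-zero-quadratic-form fact yields $\Phi\big((x+\alpha y)(x+\alpha y)^*\big)z = 0$, i.e. $\alpha\,\Phi(xy^*)z + \alpha^*\,\Phi(yx^*)z = 0$ for every $\alpha \in \C$. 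Taking $\alpha = 1$ and $\alpha = \ii$ and solving the resulting $2 \times 2$ linear system gives $\Phi(xy^*)z = 0$.

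The main obstacle, and the step to be careful about, is the passage from the scalar identity to the vector identity: naively one only controls the quadratic form $z^*(\cdot)z$ rather than the action $(\cdot)z$. The resolution above hinges on verifying that \emph{all} of $z^*\Phi(xx^*)z$, $z^*\Phi(yy^*)z$, $z^*\Phi(xy^*)z$, and $z^*\Phi(yx^*)z$ vanish, so that the full operator $\Phi\big((x+\alpha y)(x+\alpha y)^*\big)$ — which is genuinely PSD, not merely Hermitian — has a zero quadratic form at $z$ and hence annihilates $z$. Once that is in place, extracting the off-diagonal piece by varying $\alpha$ over two $\R$-independent values is routine. I would also note that Hermiticity-preservation of $\Phi$ (guaranteed since $\Phi$ is positive) is used to relate $\Phi(yx^*) = \Phi((xy^*)^*) = (\Phi(xy^*))^*$, which keeps the bookkeeping clean.
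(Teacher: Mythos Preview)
Your proof is correct and follows essentially the same approach as the paper's: first show the scalar $z^*\Phi(xy^*)z$ vanishes by testing positivity of $\Phi\big((x+\alpha y)(x+\alpha y)^*\big)$ at $z$, then use the PSD-plus-zero-quadratic-form fact to upgrade to $\Phi\big((x+\alpha y)(x+\alpha y)^*\big)z=0$, and finally extract $\Phi(xy^*)z=0$ from two choices of $\alpha$. The only cosmetic difference is that the paper picks the single value $\alpha=-\beta$ with $\beta=z^*\Phi(xy^*)z$ to get $-2|\beta|^2\ge 0$ directly, whereas you argue by varying $\alpha$; the remaining steps are identical.
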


\begin{lemma} \label{lem:pos2}
   Suppose $x\in\cX$ and $\cZ\subseteq\cX$ satisfy $\Pi_\cZ\Phi(xx^*)\Pi_\cZ=0$. Then $\Pi_\cZ\Phi(xy^*)\Pi_\cZ=0$ for all $y\in\cX$.
\end{lemma}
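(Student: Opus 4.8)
The plan is to work directly from positivity of $\Phi$, reducing the operator identity $\Pi_\cZ\Phi(xy^*)\Pi_\cZ=0$ to a family of scalar quadratic-form identities on vectors of $\cZ$, and then to recover the full operator from its quadratic form by the complex polarization identity. The hypothesis $\Pi_\cZ\Phi(xx^*)\Pi_\cZ=0$ says precisely that $z^*\Phi(xx^*)z=0$ for every $z\in\cZ$, since $\Pi_\cZ z=z$ for such $z$; this is the only information I will extract from it.

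First I would fix an arbitrary $z\in\cZ$ and an arbitrary $y\in\cX$, and form the $2\times2$ matrix $M$ with entries $M_{11}=z^*\Phi(xx^*)z$, $M_{12}=z^*\Phi(xy^*)z$, $M_{21}=z^*\Phi(yx^*)z$, and $M_{22}=z^*\Phi(yy^*)z$. For any $\alpha,\beta\in\C$ the operator $(\alpha x+\beta y)(\alpha x+\beta y)^*$ is positive semi-definite, so $\Phi$ applied to it is positive semi-definite, so sandwiching by $z$ yields a nonnegative real number. Expanding, that number is exactly the quadratic form of $M$ at $(\alpha,\beta)$, so $M\succcurlyeq0$. (Hermiticity of $M$ is automatic from the form being real and nonnegative, and in any case follows because positive super-operators are Hermiticity-preserving, giving $M_{21}=M_{12}^*$.)

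Next I would exploit $M_{11}=0$. By the elementary fact recalled in \sec{np} --- a positive semi-definite operator annihilates any vector on which its quadratic form vanishes --- the relation $M\succcurlyeq0$ together with $M_{11}=0$ forces $M$ to kill the first coordinate vector, hence $M_{21}=0$ and therefore $M_{12}=z^*\Phi(xy^*)z=0$. Since $z\in\cZ$ and $y\in\cX$ were arbitrary, this gives $z^*\Phi(xy^*)z=0$ for every $z\in\cZ$ and every $y$.

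Finally I would upgrade this family of scalar identities to the desired operator identity. Fixing $y$ and setting $B=\Pi_\cZ\Phi(xy^*)\Pi_\cZ$, we have $z^*Bz=z^*\Phi(xy^*)z=0$ for all $z\in\cZ$; over a complex space the vanishing of a quadratic form on all vectors forces the operator itself to vanish (polarize using $z\mapsto z+z'$ and $z\mapsto z+Iz'$), so $B=0$, which is exactly the claim. The main obstacle, and the step most easily overlooked, is this last one: a non-Hermitian operator such as $\Phi(xy^*)$ can have every diagonal entry zero in a fixed basis without being zero, so it is essential both that the quadratic form vanishes on \emph{all} of $\cZ$ and that the scalar field is $\C$, which is what legitimizes the polarization. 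A secondary subtlety worth flagging is that only the ``$xx^*$'' block of the hypothesis is available (never a ``$yy^*$'' condition), which is why the argument must exploit $M\succcurlyeq0$ with merely $M_{11}=0$, rather than a symmetric vanishing on both diagonal entries.
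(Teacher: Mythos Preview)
Your proof is correct and follows essentially the same approach as the paper's: first establish $z^*\Phi(xy^*)z=0$ for every $z\in\cZ$ by exploiting positivity of $\Phi$ on the rank-one inputs $(\alpha x+\beta y)(\alpha x+\beta y)^*$, then polarize over $\cZ$ to recover the full operator identity. The only difference is packaging---you organize the constraints as a $2\times2$ positive semi-definite matrix with vanishing $(1,1)$ entry and invoke the kernel fact from \sec{np}, whereas the paper plugs in the specific test pair $(\alpha,\beta)=(z^*\Phi(yy^*)z,\,-z^*\Phi(xy^*)z)$ directly and treats the degenerate case $z^*\Phi(yy^*)z=0$ separately via \lem{pos1}; your version handles both cases uniformly.
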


Brouwer's fixed point theorem (see \cite{gdbook}) implies that, for any $\cY\subseteq\cX$, if $L(\cY)$ is invariant (under $\Phi$), then there is a fixed point $\rho\in D(\cY)$.

\begin{lemma}
\label{lem:st}
Let $\rho\in D(\cX)$ be fixed and let $\cY\subseteq\cX$ be the support of $\rho$. Then $L(\cY)$ is invariant.
\end{lemma}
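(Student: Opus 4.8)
The plan is to show that if $\cY$ is the support of a fixed density operator $\rho$, then $\Phi$ cannot ``leak'' any part of an operator supported on $\cY$ outside of $\cY$; that is, $\Pi_{\cX\setminus\cY}\Phi(\mu)\Pi_{\cX\setminus\cY}=0$ and the off-diagonal blocks vanish as well, for every $\mu\in L(\cY)$. First I would set $\cZ=\cX\setminus\cY$ and observe that, since $\rho$ is fixed and positive semi-definite with support exactly $\cY$, we have $\Pi_\cZ\Phi(\rho)\Pi_\cZ=\Pi_\cZ\rho\,\Pi_\cZ=0$. Because $\Phi$ is positive, $\Phi(\rho)\succcurlyeq0$, and more importantly $\Phi$ maps the positive operator $\rho$ — which is a positive combination $\rho=\sum_k p_k\, y_ky_k^*$ of rank-one operators $y_ky_k^*$ with the $y_k$ spanning $\cY$ — to a positive operator whose compression to $\cZ$ is zero. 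Since each $\Phi(y_ky_k^*)\succcurlyeq0$ and the coefficients $p_k$ are strictly positive, $\Pi_\cZ\Phi(y_ky_k^*)\Pi_\cZ\succcurlyeq0$ for each $k$ and their positive sum is $0$, forcing $\Pi_\cZ\Phi(y_ky_k^*)\Pi_\cZ=0$ for every $k$.

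Next I would invoke the preliminary lemmas to upgrade this from the diagonal rank-one generators to all of $L(\cY)$. By Lemma~\ref{lem:pos2}, $\Pi_\cZ\Phi(y_ky_k^*)\Pi_\cZ=0$ implies $\Pi_\cZ\Phi(y_k w^*)\Pi_\cZ=0$ for every $w\in\cX$, and applying it once more (or taking adjoints using Hermiticity-preservation) gives $\Pi_\cZ\Phi(y_k y_l^*)\Pi_\cZ=0$ for all $k,l$. Since the operators $y_ky_l^*$ span $L(\cY)$, linearity yields $\Pi_\cZ\Phi(\mu)\Pi_\cZ=0$ for all $\mu\in L(\cY)$. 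It remains to kill the off-diagonal blocks $\Pi_\cY\Phi(\mu)\Pi_\cZ$ and $\Pi_\cZ\Phi(\mu)\Pi_\cY$. For a positive operator $A\succcurlyeq0$ with $\Pi_\cZ A\Pi_\cZ=0$: for any $z\in\cZ$ we get $z^*Az=0$, hence (by the fact recalled in the preliminaries) $Az=0$, so $\Pi_\cY A\Pi_\cZ=0$ and $\Pi_\cZ A\Pi_\cY=0$ too. Applying this to $A=\Phi(y_ky_k^*)$ gives $\Phi(y_ky_k^*)z=0$ for all $z\in\cZ$, and then Lemma~\ref{lem:pos1} (with the role of the vanishing vector played by any $z\in\cZ$, using $z^*\Phi(y_ky_k^*)z=0=z^*\Phi(y_ly_l^*)z$) gives $\Phi(y_k y_l^*)z=0$ for all $k,l$ and all $z\in\cZ$. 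Thus $\Phi(\mu)\Pi_\cZ=0$ and, by Hermiticity-preservation, $\Pi_\cZ\Phi(\mu)=0$, for every $\mu\in L(\cY)$. Combining, $\Phi(\mu)=\Pi_\cY\Phi(\mu)\Pi_\cY\in L(\cY)$, i.e.\ $\Phi[L(\cY)]\subseteq L(\cY)$, which is exactly invariance.

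I expect the only delicate point to be the bookkeeping of \emph{which} vanishing-compression hypothesis feeds \emph{which} preliminary lemma: Lemma~\ref{lem:pos1} is stated for rank-one inputs and a single ``test vector'' $z$, while Lemma~\ref{lem:pos2} is stated for a subspace $\cZ$ but again a rank-one diagonal input, so one has to first establish the rank-one facts $\Phi(y_ky_k^*)z=0$ (equivalently $\Pi_\cZ\Phi(y_ky_k^*)=0$) from positivity of $\Phi(\rho)$, then bootstrap to general $y_ky_l^*$ via these lemmas, and only then pass to arbitrary $\mu\in L(\cY)$ by linearity. Everything else is routine: positivity of $\Phi$ plus the elementary fact that a positive semi-definite operator annihilates any vector on which its quadratic form vanishes does all the real work, and the chosen eigendecomposition $\rho=\sum_k p_k y_ky_k^*$ with $p_k>0$ is what lets us strip off the coefficients.
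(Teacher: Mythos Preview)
Your proposal is correct and follows essentially the same route as the paper: diagonalize $\rho=\sum_k p_k y_ky_k^*$ with $p_k>0$, use $\Phi(\rho)=\rho$ and positivity to deduce $z^*\Phi(y_ky_k^*)z=0$ for every $z$ in the orthogonal complement, and then apply Lemma~\ref{lem:pos1} to conclude $\Phi(y_ky_l^*)z=0$ (and its adjoint). The only difference is cosmetic: your detour through Lemma~\ref{lem:pos2} to kill the $\cZ$--$\cZ$ block is unnecessary, since the later application of Lemma~\ref{lem:pos1} already yields $\Phi(y_ky_l^*)z=0$ for all $z\in\cZ$, which is strictly stronger; the paper skips straight to Lemma~\ref{lem:pos1}.
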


\begin{proof}
Let $n=\dim\cX$ and $r=\rank \rho$. There is an orthonormal basis $\{x_1,\ldots,x_n\}$ of $\cX$ such that $\rho=\sum_{i=1}^r\lambda_ix_ix_i^\ast$, where $\lambda_i>0$ for all $i\in[1\,..\,r]$. Consider an arbitrary $k\in[r+1\,..\,n]$. We have both
\[
   x_k^\ast \Phi(\rho)x_k = \sum_{i=1}^r \lambda_i x_k^\ast \Phi(x_ix_i^\ast)x_k
   \quad\text{and}\quad
   x_k^\ast \Phi(\rho)x_k = x_k^\ast\rho x_k =0.
\]
Thus, $x_k^\ast \Phi(x_ix_i^\ast)x_k=0$ for all $i\in[1\,..\,r]$. Lemma \ref{lem:pos1} then implies that $\Phi(x_ix_j^*)x_k=0$ and $x_k^*\Phi(x_ix_j^*)=0$ for all $i,j\in[1\,..\,r]$, and, therefore, $L(\spn\{x_1,\ldots,x_r\})$ is invariant.
\end{proof}

\begin{lemma}
\label{cor:inv}
Suppose $\cY\subseteq\cX$ and $\cZ\subseteq\cY$ are two subspaces such that both $L(\cY)$ and $L(\cZ)$ are invariant, and let $\cZ^\perp=\cY\setminus\cZ$.
 Then $L(\cZ^\perp)$ is also invariant.
\end{lemma}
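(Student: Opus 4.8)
The plan is to reduce the statement to one about rank‑one inputs and then play the positivity Lemmas~\ref{lem:pos1}--\ref{lem:pos2} off against Lemma~\ref{lem:st}. Write $\cZ^\perp=\cY\setminus\cZ$ and let $\mu\in L(\cZ^\perp)$. Invariance of $L(\cY)$ already gives $\Phi(\mu)\in L(\cY)$, so what must be shown is that every block of $\Phi(\mu)$ other than the $\cZ^\perp$‑to‑$\cZ^\perp$ block vanishes. Since $\Phi(xx^\ast)\succcurlyeq 0$, and a positive semi‑definite operator in $L(\cY)$ whose compression $\Pi_\cZ(\cdot)\Pi_\cZ$ vanishes lies entirely in $L(\cZ^\perp)$ (if $z^\ast Bz=0$ for all $z\in\cZ$ then $Bz=0$, hence $B\Pi_\cZ=\Pi_\cZ B=0$), and since every element of $L(\cZ^\perp)$ is a complex‑linear combination of operators $xx^\ast$ with $x\in\cZ^\perp$, it suffices to prove
\[
\Pi_\cZ\,\Phi(xx^\ast)\,\Pi_\cZ=0\qquad\text{for every unit }x\in\cZ^\perp .
\]
Granting this, Lemma~\ref{lem:pos2} (with the vector $x$ and the subspace $\cZ$) together with bilinearity of $\Phi$ upgrades it to $\Phi[L(\cZ^\perp)]\subseteq L(\cZ^\perp)$.

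To control the interaction between $\cZ$ and $\cZ^\perp$, I would first extract a fixed density operator from $\cZ$. The Brouwer remark applied to the invariant space $L(\cZ)$ yields a fixed $\tau\in D(\cZ)$, and by taking a maximal one (sums of fixed density operators are fixed, with support the span of the summands' supports) I may assume $\tau$ has support $\cZ_1$ with $L(\cZ_1)$ invariant by Lemma~\ref{lem:st}. A crucial subclaim is that $\cZ_1=\cZ$, i.e.\ that $\cZ$ carries a fixed density operator of full rank $\dim\cZ$. To prove it, suppose not, pick a unit $z\in\cZ$ with $z\perp\cZ_1$, and use the standing assumption on $\cX$ to get a $y$ with $z^\ast\Phi(yy^\ast)z>0$; writing $y=\Pi_\cZ y+\Pi_{\cZ^\perp}y$, the operator $\Phi\bigl((\Pi_\cZ y)(\Pi_\cZ y)^\ast\bigr)$ is fixed, lies in $L(\cZ)$, and hence (being a positive multiple of a fixed density operator) is supported in $\cZ_1$, so it is annihilated by $z$; Lemma~\ref{lem:pos1} then forces the cross terms to be annihilated by $z$ as well, so the only surviving contribution to $z^\ast\Phi(yy^\ast)z$ comes from $\Phi\bigl((\Pi_{\cZ^\perp}y)(\Pi_{\cZ^\perp}y)^\ast\bigr)$. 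Iterating this, and using that $\Phi$ of the $\cZ$‑block of any fixed positive semi‑definite operator is again a fixed density operator on $\cZ$ (so supported in $\cZ_1$), one should close the loop into a contradiction with the standing assumption. (If, in the intended application inside the proof of Lemma~\ref{lem:div}, the pair $(\cY,\cZ)$ is always already of this form, this subclaim may simply be assumed.)

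With $\tau$ of full rank $\dim\cZ$ in hand, the last step is to rule out $z^\ast\Phi(xx^\ast)z>0$ for a unit $z\in\cZ$ and a unit $x\in\cZ^\perp$. Put $A:=\Phi(xx^\ast)$, which is fixed and positive semi‑definite, and decompose $A=\Phi(A)=\Phi(\Pi_\cZ A\Pi_\cZ)+\Phi(M)+\Phi(\Pi_{\cZ^\perp}A\Pi_{\cZ^\perp})$, where $M$ is the cross part. As above, $\Phi(\Pi_\cZ A\Pi_\cZ)$ is a fixed positive semi‑definite operator in $L(\cZ)$, hence supported in $\cZ_1=\cZ$; writing $\Pi_\cZ A\Pi_\cZ$ and $\Pi_{\cZ^\perp}A\Pi_{\cZ^\perp}$ as sums of rank‑ones and feeding this into Lemma~\ref{lem:pos1}, while keeping track that $z$ lies in the support $\cZ$ of $\tau$ (so the only obstruction can be on the $\cZ^\perp$ side), one is driven back to the very same situation with a possibly different unit vector of $\cZ^\perp$; combined with the standing assumption this should force $z^\ast A z=0$. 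Together with the reduction of the first paragraph this completes the proof.

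The step I expect to be the real obstacle is the subclaim $\cZ_1=\cZ$: without the standing assumption the lemma is actually false — a super‑operator that collapses $L(\cZ)$ onto a proper corner of $\cZ$ is a counterexample — so the argument must genuinely exploit the standing assumption on the ambient space $\cX$ together with $\Phi^2=\Phi$, ruling out both transient directions inside $\cZ$ and directions of $\cZ$ that are reachable only from outside $\cZ$. Everything after full‑rankness of $\tau$ on $\cZ$ is secured is then a matter of pushing positivity through Lemmas~\ref{lem:pos1}, \ref{lem:pos2} and \ref{lem:st}.
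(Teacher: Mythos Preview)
Your opening reduction is correct, but both of the steps that follow are genuine gaps rather than compressible sketches. For the subclaim $\cZ_1=\cZ$: after writing $y=\Pi_\cZ y+\Pi_{\cZ^\perp}y$ you correctly note that the $\cZ$--block is mapped into $L(\cZ_1)$, but the $\cZ^\perp$--block and the cross terms put you squarely back into the unknown territory (what $\Phi$ does to vectors in $\cZ^\perp$), so ``iterating'' has no termination and ``closing the loop'' is not an argument. The last paragraph is circular in the same way: in the decomposition $A=\Phi(\Pi_\cZ A\Pi_\cZ)+\Phi(M)+\Phi(\Pi_{\cZ^\perp}A\Pi_{\cZ^\perp})$ the term you cannot control is $\Phi(\Pi_{\cZ^\perp}A\Pi_{\cZ^\perp})$, and controlling it \emph{is} the invariance of $L(\cZ^\perp)$. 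Having a full-rank fixed $\tau$ on $\cZ$ only tells you via Lemma~\ref{lem:st} that $L(\cZ)$ is invariant, which is a hypothesis; it does not prevent $\Phi(xx^\ast)$ from leaking into $\cZ$.

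The ingredient you never invoke is trace preservation, and that is what actually drives the paper's proof. The paper does not look for a full-rank fixed state on $\cZ$. It first reduces to showing that $L(\cY^\perp)$ is invariant, where $\cY^\perp=\cX\setminus\cY$ (once that is known, Lemma~\ref{lem:pos1} makes $L(\cY^\perp\oplus\cZ)$ invariant, and one reruns the argument with $\cY^\perp\oplus\cZ$ in place of $\cY$). For $\cY^\perp$ the standing assumption produces a fixed $\rho=\Phi(\mu)\in D(\cX)$ with $\Pi_{\cY^\perp}\rho\Pi_{\cY^\perp}\neq0$; one diagonalises the $\cY$-- and $\cY^\perp$--blocks of $\rho$ and computes $\sum_{k}y_k^\ast\Phi(\rho)y_k$ two ways: once as $\sum_k y_k^\ast\rho y_k$ from $\Phi(\rho)=\rho$, and once by expanding $\rho$ and using $\Tr\Phi(y_iy_i^\ast)=1$ together with the vanishing forced by invariance of $L(\cY)$ and Lemma~\ref{lem:pos2}. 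Equating the two forces $y_k^\ast\Phi(y_jy_j^\ast)y_k=0$ for $y_j$ in the support of $\Pi_{\cY^\perp}\rho\Pi_{\cY^\perp}$ and $y_k$ outside it, which via Lemma~\ref{lem:pos1} carves out a nontrivial invariant piece $\cW\subseteq\cY^\perp$; iterating on $\cY\oplus\cW$ finishes. Positivity and idempotence alone, which is all your outline uses, do not appear to be enough without this trace bookkeeping.
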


\begin{proof}
First, let $\cY^\perp=\cX\setminus\cY$, therefore $\cZ^\perp=\cX\setminus(\cY^\perp\oplus\cZ)$.
 It is enough to prove that $L(\cY^\perp)$ is invariant because then Lemma \ref{lem:pos1} would imply that $L(\cY^\perp\oplus\cZ)$ is invariant as well and an analogous proof that considers $\cY^\perp\oplus\cZ$ instead of  $\cY$  would prove that  $L(\cZ^\perp)$ is invariant.
 Second---without loss of generality, assume that $\cY\neq\cX$---to prove that $L(\cY^\perp)$ is invariant, it is enough to prove that there exists a subspace $\cW\subseteq\cY^\perp$ of dimension at least $1$ such that $L(\cW)$ is invariant because either $\cW=\cY^\perp$ or we replace $\cY$ by $\cY\oplus\cW$ and repeat the proof.

Let $n=\dim\cX$ and $l=\dim\cY<n$.
 Choose an arbitrary $x\in\cY^\perp$.
 Due to initial assumptions, there exists $\mu\in D(\cX)$ such that $\Phi(\mu) x\neq 0$.
 Let $\rho=\Phi(\mu)$, which is a fixed point.
 Because $\rho$ is positive semi-definite and $\rho x\neq 0$, $\Pi_{\cY^\perp}\rho\Pi_{\cY^\perp}$ is positive semi-definite and non-zero.
 There exist orthonormal bases $\{y_1,\ldots,y_l\}$ and $\{y_{l+1},\ldots,y_n\}$ of $\cY$ and $\cY^\perp$, respectively, such that $\Pi_\cY\rho\Pi_\cY$ is diagonal in $\{y_1,\ldots,y_l\}$ and $\Pi_{\cY^\perp}\rho\Pi_{\cY^\perp}$ is diagonal in $\{y_{l+1},\ldots,y_n\}$.
 Let $m=n+1-\rank(\Pi_{\cY^\perp}\rho\Pi_{\cY^\perp})\in[l+1\,..\,n]$.
Therefore,
\[
\rho=\sum_{i=1}^l\gamma_iy_iy_i^\ast+\sum_{j=m}^n\zeta_jy_jy_j^\ast+\sum_{i=1}^l\sum_{j=m}^n(\beta_{i,j}y_iy_j^\ast+\beta_{i,j}^\ast y_jy_i^\ast),
\]
where $\gamma_i\geq0$, $\zeta_j> 0$, and $\beta_{i,j}\in\C$ for all $i\in[1\,..\,l]$ and $j\in[m\,..\,n]$.

For all $k\in[l+1\,..\,n]$, $L(\cY)$ being invariant implies that $y_k^\ast\Phi(y_iy_i^\ast)y_k=0$ for all $i\in[1\,..\,l]$, and thus Lemma \ref{lem:pos2} implies that $y_k^\ast\Phi(y_iy_j^\ast)y_k=0$ whenever $i$ or $j$ (or both) is in $[1\,..\,l]$. Therefore, because $\Phi$ is trace-preserving, we have
\[
 \sum_{k=1}^{m-1} y_k^\ast\Phi(y_iy_i^\ast)y_k = \sum_{k=1}^{n} y_k^\ast\Phi(y_iy_i^\ast)y_k = \Tr(\Phi(y_iy_i^\ast))=1
\]
and 
\[
 \sum_{k=1}^{m-1} y_k^\ast\Phi(y_iy_j^\ast)y_k = \sum_{k=1}^{n} y_k^\ast\Phi(y_iy_j^\ast)y_k = \Tr(\Phi(y_iy_j^\ast))=0
\]
for all $i\in[1\,..\,l]$ and $j\in[m\,..\,n]$. Hence, on the one hand we have
\begin{align*}
\sum_{k=1}^{m-1} y_k^\ast \Phi(\rho)y_k
 = &
\sum_{i=1}^l\gamma_i\sum_{k=1}^{m-1} y_k^\ast\Phi(y_iy_i^\ast)y_k   +   \sum_{j=m}^n\zeta_j\sum_{k=1}^{m-1} y_k^\ast\Phi(y_jy_j^\ast)y_k  \\
& +  \sum_{i=1}^l\sum_{j=m}^n\Big(\beta_{i,j}\sum_{k=1}^{m-1} y_k^\ast\Phi(y_iy_j^\ast)y_k+\beta_{i,j}^\ast\sum_{k=1}^{m-1} y_k^\ast\Phi(y_jy_i^\ast)y_k\Big) \\
 = &
\sum_{i=1}^l\gamma_i +  \sum_{j=m}^n\zeta_j\sum_{k=1}^{m-1} y_k^\ast\Phi(y_jy_j^\ast)y_k,
\end{align*}
while on the other hand, because $\rho$ is fixed, we have
\[
  \sum_{k=1}^{m-1} y_k^\ast \Phi(\rho)y_k    =   \sum_{k=1}^{m-1} y_k^\ast \rho y_k   =   \sum_{k=1}^{l} \gamma_k.
\]
This means that, since $\zeta_j> 0$ and $\Phi(y_jy_j^\ast)$ is positive semi-definite, $y_k^\ast\Phi(y_jy_j^\ast)y_k = 0$ for all $j\in[m\,..\,n]$ and $k\in[1\,..\,m-1]$.
Finally, choose $\cW=\spn\{y_m,\ldots,y_n\}\subseteq\cY^\perp$, and $L(\cW)$ is invariant  by Lemma \ref{lem:pos1}.
\end{proof}

Now we are ready to prove Lemma \ref{lem:div}.

\medskip

\noindent {\bf Lemma \ref{lem:div}.}
{\em
The space $\cX$ can be divided into orthogonal subspaces $\cX_1,\ldots,\cX_l$ such that, for every $\cX_i$, there is a density operator $\rho_i\in D(\cX_i)$ of full rank (i.e., rank $\dim\cX_i$) satisfying $\Phi(\mu)=\Tr(\mu)\rho_i$ for all $\mu\in L(\cX_i)$. Moreover, $\Phi(\mu)=0$ for all $\mu\in L(\cX_i,\cX_j)$ whenever $\dim \cX_i\neq\dim\cX_j$.
}

\begin{proof}%[Proof of Lemma \ref{lem:div}]

Let us first prove the following lemma whose repetitive application will give us the first part of Lemma \ref{lem:div}.

\begin{lemma}
\label{sublem}
Let $\cY$ be a subspace of $\cX$ such that $L(\cY)$ is invariant, and let $\rho\in D(\cY)$ be a (not necessarily unique) fixed point of $\Phi$ such that the rank of any other fixed point in $D(\cY)$ is at least the rank of $\rho$.
Let $\cZ\subseteq\cY$ be the support of $\rho$.
Then $\Phi(\mu)=\Tr(\mu)\rho$ for all $\mu\in L(\cZ)$.
\end{lemma}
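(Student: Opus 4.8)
The plan is to reduce the statement to a single clean claim about the restriction of $\Phi$ to the support of $\rho$, and then to rule out the only bad case by a perturbation argument. First I would observe that, since $\rho$ is a fixed density operator whose support is $\cZ$, Lemma~\ref{lem:st} shows that $L(\cZ)$ is invariant; hence $\Phi$ restricts to a positive trace-preserving super-operator on $L(\cZ)$ with $(\Phi|_{L(\cZ)})^2=\Phi|_{L(\cZ)}$, and $\rho$ is a fixed density operator of full rank $\dim\cZ$ for this restriction. Because a projection maps onto its fixed space, $\Phi[L(\cZ)]$ equals the fixed space of $\Phi|_{L(\cZ)}$; so it suffices to prove that this fixed space is exactly $\spn\{\rho\}$. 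Indeed, if that holds then $\Phi(\mu)=c(\mu)\rho$ for some linear functional $c$ on $L(\cZ)$, and trace-preservation together with $\Tr\rho=1$ forces $c(\mu)=\Tr(\mu)$, which is precisely the desired identity $\Phi(\mu)=\Tr(\mu)\rho$ on $L(\cZ)$.

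Next I would argue by contradiction: suppose the fixed space of $\Phi|_{L(\cZ)}$ has dimension at least $2$. Since $\Phi$ is Hermiticity-preserving, its fixed space is spanned by Hermitian operators, so I can pick a Hermitian fixed point $\tau\in L(\cZ)$ that is not a scalar multiple of $\rho$. The point is to perturb $\rho$ along $\tau$ until the rank drops while positivity is preserved. Because $\rho$ is positive definite on $\cZ$, both sets $\{t\ge 0:\rho-t\tau\succcurlyeq 0\}$ and $\{t\ge 0:\rho+t\tau\succcurlyeq 0\}$ contain a neighbourhood of $0$; and they cannot both be unbounded, since dividing $\rho\mp t\tau\succcurlyeq 0$ by $t$ and letting $t\to\infty$ would give $\pm\tau\succcurlyeq 0$, hence $\tau=0$. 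Replacing $\tau$ by $-\tau$ if necessary, assume $\{t\ge 0:\rho-t\tau\succcurlyeq 0\}$ is bounded with maximum $t_0>0$. Then $\rho-t_0\tau$ is positive semi-definite, is fixed (being a linear combination of the fixed points $\rho$ and $\tau$), is nonzero (else $\rho=t_0\tau$, contradicting linear independence), and—by maximality of $t_0$ and continuity of eigenvalues—has $0$ as an eigenvalue, so $\rank(\rho-t_0\tau)<\dim\cZ=\rank\rho$. Normalizing, $(\rho-t_0\tau)/\Tr(\rho-t_0\tau)\in D(\cZ)\subseteq D(\cY)$ is a fixed density operator of rank strictly smaller than that of $\rho$, contradicting the minimality assumption on $\rho$. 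Hence the fixed space of $\Phi|_{L(\cZ)}$ is one-dimensional, equal to $\spn\{\rho\}$, and the reduction above finishes the proof.

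The only step that requires genuine care—and the one I expect to be the main obstacle—is the construction in the second paragraph: squeezing out of a fixed space of dimension $\ge 2$ a positive semi-definite fixed point of strictly smaller rank. The boundedness dichotomy for the two ray segments, together with the observation that the smallest eigenvalue must vanish at the endpoint $t_0$, is exactly what delivers the rank drop; the rest is routine bookkeeping with invariance, Hermiticity-preservation, and trace-preservation.
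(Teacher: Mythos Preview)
Your proposal is correct and follows essentially the same route as the paper: invoke Lemma~\ref{lem:st} to get invariance of $L(\cZ)$, reduce via $\Phi^2=\Phi$ and trace-preservation to showing the fixed space of $\Phi|_{L(\cZ)}$ is $\spn\{\rho\}$, and rule out a second independent fixed direction by a perturbation that pushes to the boundary of the positive cone to drop the rank. The only cosmetic difference is that the paper perturbs along $\sigma-\rho$ for a second fixed density $\sigma\in D(\cZ)$ (writing $\xi_\alpha=(1+\alpha)\rho-\alpha\sigma$), whereas you perturb along an arbitrary Hermitian fixed $\tau$; the rank-drop mechanism and the contradiction with minimality are identical.
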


\begin{proof}
Because $\rho$ is fixed, Lemma \ref{lem:st} implies that $L(\cZ)$ is invariant.
All fixed points in $D(\cY)$ has rank at least $r=\rank\rho=\dim\cZ$, therefore all fixed points in $D(\cZ)$ must have rank exactly $r$.
However, if there exists a fixed point $\sigma\in D(\cZ)$ such that $\sigma\neq\rho$, then there exists $\alpha>0$ such that $\xi_\alpha=(1+\alpha)\rho-\alpha\sigma$ is in $D(\cZ)$ and it has rank strictly less than $r$.
This is due to the fact that the vector of sorted eigenvalues of $\xi_\alpha$ is continuous in $\alpha$ (see \cite[Theorem 8.1]{WatrousLNotes}).
However, we assumed that $\rho$ is a fixed point having the minimum rank, therefore $\rho$ is the only fixed point in $D(\cZ)$ and, thus, up to scalars, in $L(\cZ)$. 
Finally, $\Phi^2=\Phi$ implies that $\Phi(\mu)\propto\rho$ for all $\mu\in L(\cZ)$, and the lemma follows from $\Phi$ being trace-preserving.
\end{proof}

Consider the following algorithm:
\begin{enumerate}
  \item Set $\cY:=\cX$ and $i:=1$;
  \item \label{s2} Choose $\rho_i$ to be a fixed point in $D(\cY)$ having the minimum rank among all such points;
  \item Set $\cX_i$ to be the support of $\rho_i$;
  \item Set $\cY:=\cY\setminus\cX_i$;
  \item If $\cY\neq\{0\}$, increase $i$ and go back to Step \ref{s2}.
\end{enumerate}
Given that $\rho_i$ is fixed, Lemma \ref{lem:st} assures that $L(X_i)$ is invariant.
Thus, $L(\cY)$ is always invariant due to Lemma \ref{cor:inv}.
That further implies that, unless $\cY=\{0\}$, $D(\cY)$ contains a fixed point.
Therefore all steps of the algorithm are valid.
Lemma \ref{sublem} implies that subspaces $\cX_1,\ldots,\cX_l$ and fixed points $\rho_1,\ldots,\rho_l$, output by the algorithm, satisfy the first part of the theorem.

Regarding the second part: let $\cX_i$ and $\cX_j$ be orthogonal subspaces of $\cX$ of dimension $n_i$ and $n_j$, respectively, satisfying $n_i<n_j$, and let $\rho_i\in D(\cX_i)$ and $\rho_j\in D(\cX_j)$  be operators of full rank (i.e., $n_i$ and $n_j$, respectively) such that $\Phi(\mu)=\Tr(\mu)\rho_i$ for every $\mu\in L(\cX_i)$ and $\Phi(\mu)=\Tr(\mu)\rho_j$ for every $\mu\in L(\cX_j)$.
Because $L(\cX_i)$ and $L(\cX_j)$ are invariant, Lemmas \ref{lem:pos1} and \ref{lem:pos2} imply that $L(\cX_i,\cX_j)\oplus L(\cX_j,\cX_i)$ is also invariant.
Thus, all fixed points in $D(\cX_i\oplus\cX_j)$ can be written as 
\[
 \xi_{\beta,\theta}:=\beta\rho_i+(1-\beta)\rho_j+\theta+\theta^*,
\]
where $\beta\in[0,1]$ and $\theta\in L(\cX_i,\cX_j)\oplus L(\cX_j,\cX_i)$.
Note that, unless $\beta=1$ and, thus, $\theta+\theta^*=0$, the rank of $\xi_{\beta,\theta}$ is at least $n_j$.
Suppose the contrary: there is an operator $\mu\in L(\cX_i,\cX_j)$ such that $\theta=\Phi(\mu)\in L(\cX_i,\cX_j)\oplus L(\cX_j,\cX_i)$ is non-zero.
Without loss of generality, we assume $\theta+\theta^*\neq0$.
Because $\Phi^2=\Phi$, $\theta$ is a fixed point; so is $\theta^*$.
Thus, for all $\alpha$, $\xi_{1/2,\alpha\theta}$ is also a fixed point.
There exists $\alpha_0>0$ such that  $\xi_{1/2,\alpha\theta}\in D(\cX_i\oplus\cX_j)$ for all $\alpha\in[0,\alpha_0]$ and the rank of $\xi_{1/2,\alpha_0\theta}$ is strictly less that $n_i+n_j$.
Let $\cY$ be the support of $\xi_{1/2,\alpha_0\theta}$ and let $\cY^\perp=(\cX_i\oplus\cX_j)\setminus\cY$.
We have $\dim\cY\in[n_j\,..\,n_i+n_j-1]$ and, thus, $\dim\cY^\perp\in[1\,..\,n_i]$. 
Lemma \ref{lem:st} implies that $L(\cY)$ is invariant, and Lemma \ref{cor:inv} further implies that that $L(\cY^\perp)$ is invariant too.
Because $\cX_j\not\subseteq\cY$, we have $\cY^\perp\not\subseteq\cX_i$.
Hence, there is a fixed point in $\xi_{\beta,\theta'}\in D(\cY^\perp)$, where $\beta\neq 1$. Because $\beta\neq 1$, we know that $\rank\xi_{\beta,\theta'}\geq n_j$, but clearly $\rank \xi_{\beta,\theta'}\leq\dim\cY^\perp\leq n_i<n_j$, which is a contradiction.
\end{proof}

Lemma \ref{lem:div} tells a lot about the structure of the super-operator $\Phi$.
However, it does not address how $\Phi$ acts on $L(\cX_i,\cX_j)$ in the case when $\dim\cX_i=\dim\cX_j$. 
We will explore this action in the next section.

%=========================================================
\section{Non-negativity of central minors} \label{sec:struc}
%=========================================================

In this section we prove Lemma \ref{lem:struc}.
The main tool we use in the proof is the fact that positive super-operators map positive semi-definite operators to positive semi-definite operators and that all central minors of a positive semi-definite operator are non-negative.
(If one was interested only in the case of complete positivity, one could use an even stronger statement that all diagonal minors of the Choi matrix of a completely positive super-operator are non-negative.)
As in the previous section, let $\Phi\in T(\cX)$ be a PTP super-operator satisfying $\Phi^2=\Phi$.

\medskip

\noindent {\bf Lemma \ref{lem:struc}.}
{\em
 Let $\cY$ and $\cZ$ be two $m$-dimensional orthogonal subspaces of $\cX$ such that $\Phi(\mu)=\Tr(\mu)\rho$ for all $\mu\in L(\cY)$ and $\Phi(\mu)=\Tr(\mu)\sigma$ for all $\mu\in L(\cZ)$, where $\rho\in D(\cY)$ and $\sigma\in D(\cZ)$ both have rank  $m$.
 Suppose there exists a Hermitian operator $\xi\in L(\cZ,\cY)\oplus L(\cY,\cZ)$ fixed by $\Phi$ such that $\xi\neq 0$. Then, let 
\[
\Pi_\cY\xi\Pi_\cZ=c\sum_{k=1}^m r_ky_kz_k^*
\]
be a singular value decomposition of $\Pi_\cY\xi\Pi_\cZ$, where $c>0$, 
$(r_1,\ldots,r_m)$ is a a probability vector, and $\{y_1,\ldots,y_m\}$ and $\{z_1,\ldots,z_m\}$ are orthonormal bases of $\cY$ and $\cZ$, respectively. We have
\begin{align}
&
\rho=\sum_{k=1}^m r_ky_ky_k^\ast \quad \text{and} \quad
\sigma=\sum_{k=1}^mr_kz_kz_k^\ast;
\label{f1}
\\ &
\Phi(y_iz_i^*+z_iy_i^*)=\sum_{k=1}^m r_k(y_kz_k^*+z_ky_k^*)=\xi/c \quad\text{for all }i\in[1\,..\,m];
\label{f2}
\\ &
\Phi(y_iz_j^*+z_iy_j^*)=0 \quad\text{for all }i,j\in[1\,..\,m]\text{ such that }i\neq j.
\label{f3}
\end{align}
}

\begin{proof}
Note that, since $L(\cY)$ and $L(\cZ)$ are invariant under $\Phi$, Lemma \ref{lem:pos2} implies that $L(\cZ,\cY)\oplus L(\cY,\cZ)$ is invariant under $\Phi$ as well.
Let $\xi'=\xi/c=\sum_{i=1}^m r_i(y_iz_i^*+z_iy_i^*)$.
For all $i,j,k,l\in[1\,..\,m]$, let $u_{i,j}^{k,l}=y_k^*\Phi(y_iz_j^*)z_l$, $v_{i,j}^{k,l}=y_k^*\Phi(z_iy_j^*)z_l$, and $w_{i,j}^{k,l}=u_{i,j}^{k,l}+v_{i,j}^{k,l}$.
Because $\xi'$ is a fixed point, linearity implies that
\begin{equation}
\label{eq:rr}
   \sum_{i=1}^m r_i w_{i,i}^{k,k}
   = \sum_{i=1}^m r_i y_k^*\Phi\left(y_iz_i^*+z_iy_i^*\right)z_k
   =y_k^*\Phi(\xi')z_k
   =y_k^*\xi' z_k
   = r_k
\end{equation}
for all $k\in[1\,..\,m]$. And, if we sum the real part of (\ref{eq:rr}) over all $k$, we get that
\begin{equation}
\label{eq:re1}
  \sum_{i=1}^m r_i\sum_{k=1}^m \Re(w_{i,i}^{k,k})=1.
\end{equation} 

Let $\rho^{k,l}=y_k^*\rho y_l$ and $\sigma^{k,l}=z_k^*\sigma z_l$ for all for all $k,l\in[1\,..\,m]$. All $\rho^{k,k}$ and $\sigma^{k,k}$ are strictly positive as both $\rho$ and $\sigma$ have full rank.
Let $A_i=\Phi((y_i+z_i)(y_i^*+z_i^*))=\rho+\sigma+\Phi(y_iz_i^*+z_iy_i^*)\succcurlyeq 0$.
Therefore, for all $i,k\in[1\,..\,m]$, we have
\[
  \left|
  \begin{array}{cc}
       y_k^\ast A_i y_k & y_k^\ast A_i z_k \\
       z_k^\ast A_i y_k & z_k^\ast A_i z_k
  \end{array}
  \right|
   =
   \left|
  \begin{array}{cc}
       \rho^{k,k} & w_{i,i}^{k,k} \\
       w_{i,i}^{k,k\,\ast} & \sigma^{k,k}
  \end{array}
  \right|=\rho^{k,k}\sigma^{k,k}-|w_{i,i}^{k,k}|^2 \geq 0
\]
and, thus, $\Re(w_{i,i}^{k,k})\leq\sqrt{\rho^{k,k}\sigma^{k,k}}$ with equality if and only if $w_{i,i}^{k,k}=\sqrt{\rho^{k,k}\sigma^{k,k}}$.
 Because both $(\sqrt{\rho^{1,1}},\ldots,\sqrt{\rho^{m,m}})$ and $(\sqrt{\sigma^{1,1}},\ldots,\sqrt{\sigma^{m,m}})$ are unit vectors, their inner product is $1$ if and only if they are equal, and strictly less that $1$ otherwise. Hence,
\begin{equation}
\label{eq:re2}
  \sum_{k=1}^m \Re(w_{i,i}^{k,k})\leq\sum_{k=1}^m\sqrt{\rho^{k,k}\sigma^{k,k}}\leq 1
\end{equation}
for all $i\in[1\,..\,m]$. 
 Therefore, in order for (\ref{eq:re1}) to hold, we need that, for all $i$ such that $r_i\neq0$, both inequalities in (\ref{eq:re2}) are equalities, which is the case if and only if $w_{i,i}^{k,k}=\rho^{k,k}=\sigma^{k,k}$ for all $k\in[1\,..\,m]$.
  From (\ref{eq:rr}) we get that 
\[
  r_k=\sum_{i=1}^m r_iw_{i,i}^{k,k}=\sum_{i,\,r_i\neq0} r_i\rho^{k,k}=\rho^{k,k}>0.
\] 
Therefore $w_{i,i}^{k,k}=\rho^{k,k}=\sigma^{k,k}=r_k>0$ for all $i,k\in[1\,..\,m]$.

For all $i,k,l\in[1\,..\,m]$ such that $k\neq l$, we have 
\begin{align*}
  \left|
  \begin{array}{ccc}
       y_l^\ast A_i y_l   &   y_l^\ast A_i y_k   &   y_l^\ast A_i z_l \\
       y_k^\ast A_i y_l   &   y_k^\ast A_i y_k   &   y_k^\ast A_i z_l \\
       z_l^\ast A_i y_l   &   z_l^\ast A_i y_k  &   z_l^\ast A_i z_l
  \end{array}
  \right|
  &  =
   \left|
  \begin{array}{ccc}
       \rho^{l,l} & \rho^{l,k} & w_{i,i}^{l,l}\\
       \rho^{k,l} & \rho^{k,k} & w_{i,i}^{k,l} \\
       w_{i,i}^{l,l\,*} & w_{i,i}^{k,l\,*} & \sigma^{l,l}
  \end{array}
  \right|
  =
  \left|
  \begin{array}{ccc}
       r_l & \rho^{k,l\,*} & r_l \\
       \rho^{k,l} & r_k & w_{i,i}^{k,l} \\
       r_l & w_{i,i}^{k,l\,*} & r_l
  \end{array}
  \right| \\
  &
  = - r_l(\rho^{k,l}-w_{i,i}^{k,l})(\rho^{k,l\,*}-w_{i,i}^{k,l\,*})
  = -r_l |\rho^{l,k}-w_{i,i}^{l,k}|^2 \geq 0,
\end{align*} 
  and, thus, $y_k^\ast A_i z_l=w_{i,i}^{k,l}=\rho^{k,l}$.
  By symmetry, $\sigma^{k,l}=z_k^\ast A_i y_l=(y_l^* A_i z_k)^*=\rho^{l,k\,*}=\rho^{k,l}$.
  Let us use the fact that $\xi'$ is a fixed point again:
  on the one hand,
  \[
   \sum_{i=1}^m r_i w_{i,i}^{k,l}
   = \sum_{i=1}^m r_i y_k^*\Phi\left(y_iz_i^*+z_iy_i^*\right)z_l
   =y_k^*\Phi(\xi')z_l
   =y_k^*\xi' z_l
   = 0,
\]
 while, on the other,
 \[
   \sum_{i=1}^m r_i w_{i,i}^{k,l}
   = \sum_{i=1}^m r_i \rho^{k,l}
   =\rho^{k,l}.
\]
Hence, 
  $\rho^{k,l}=\sigma^{k,l}=w_{i,i}^{k,l}=0$ for all $i,k,l\in[1\,..\,m]$ such that $k\neq l$.
  This proves equality (\ref{f1}) of the lemma, and equality (\ref{f2}) comes from the fact that $z_k^*\Phi\left(y_iz_i^*+z_iy_i^*\right)y_l=(y_l^*\Phi\left(y_iz_i^*+z_iy_i^*\right)z_k)^*=w_{i,i}^{l,k\,*}=\delta(k,l)\,r_k$.

  To prove equality (\ref{f3}), let us start by proving the following claim:
  \begin{claim}
  \label{cluu}
  For all $i,j,k\in[1\,..\,m]$ such that $i\neq j$, we have $u_{i,j}^{k,k}+v_{j,i}^{k,k\,*}=0$.
  \end{claim}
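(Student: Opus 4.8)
The plan is to prove Claim~\ref{cluu} by producing, for each fixed $i$ and $k$, a \emph{single} positive semidefinite operator that encodes all of the quantities $u_{i,j}^{k,k}+v_{j,i}^{k,k\,\ast}$ for $j\ne i$ at once, and then to read off their vanishing by testing positivity against the vector $y_k-z_k$ and optimizing over a vector of free parameters.

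The first ingredient is the observation that $y_k-z_k$ is annihilated by $A_i$ as a quadratic form. Using (\ref{f1}) together with $\xi/c=\sum_l r_l(y_lz_l^\ast+z_ly_l^\ast)$ from (\ref{f2}) (so that $A_i=\rho+\sigma+\xi/c$), one checks
\[
(y_k-z_k)^\ast A_i(y_k-z_k)=r_k+r_k-2r_k=0,\qquad (y_k-z_k)^\ast\sigma(y_k-z_k)=r_k .
\]
Moreover, since $\Phi(y_iz_j^\ast)\in L(\cZ,\cY)\oplus L(\cY,\cZ)$ and $z_k^\ast\Phi(y_iz_j^\ast)y_k=\overline{y_k^\ast\Phi(z_jy_i^\ast)z_k}=v_{j,i}^{k,k\,\ast}$, one has
\[
(y_k-z_k)^\ast\Phi(y_iz_j^\ast)(y_k-z_k)=-\bigl(u_{i,j}^{k,k}+v_{j,i}^{k,k\,\ast}\bigr)=:-Q_j .
\]

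Next, fixing $i$ and $k$ and letting $(c_j)_{j\ne i}$ be arbitrary complex numbers, I would set $v:=y_i+z_i+\sum_{j\ne i}c_j z_j$ and feed $P:=vv^\ast\succcurlyeq0$ into $\Phi$. Using $\Phi(\mu)=\Tr(\mu)\rho$ on $L(\cY)$, $\Phi(\mu)=\Tr(\mu)\sigma$ on $L(\cZ)$, $\Phi(z_iz_j^\ast)=0$ for $j\ne i$, and $\Phi(y_iz_i^\ast+z_iy_i^\ast)=\xi/c$ from (\ref{f2}), expanding $P$ gives
\[
\Phi(P)=A_i+\Bigl(\sum_{j\ne i}|c_j|^2\Bigr)\sigma+\sum_{j\ne i}\overline{c_j}\,\Phi(y_iz_j^\ast)+\sum_{j\ne i}c_j\,\Phi(y_iz_j^\ast)^\ast\ \succcurlyeq\ 0 .
\]
Testing this against $y_k-z_k$ and inserting the three identities above yields
\[
0\ \le\ \Bigl(\sum_{j\ne i}|c_j|^2\Bigr) r_k-2\,\Re\!\Bigl(\sum_{j\ne i}\overline{c_j}\,Q_j\Bigr).
\]
Finally, I would choose $c_j:=Q_j$: then both sums equal $\sum_{j\ne i}|Q_j|^2$, so the inequality becomes $0\le(r_k-2)\sum_{j\ne i}|Q_j|^2$. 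Since $r_k\le\sum_l r_l=1<2$, this forces $\sum_{j\ne i}|Q_j|^2=0$, i.e.\ $u_{i,j}^{k,k}+v_{j,i}^{k,k\,\ast}=0$ for every $j\ne i$; as $i$ and $k$ were arbitrary, Claim~\ref{cluu} follows.

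The only genuinely non-routine point is the design of $P$. Perturbing the ``good'' vector $y_i+z_i$ (the one whose $\Phi$-image annihilates $y_k-z_k$) by a single $z_j$ yields only the weak bound $|Q_j|\le r_k$; perturbing instead by the whole combination $\sum_{j\ne i}c_j z_j$ and optimizing is what converts this loss into a gain, the decisive point being the comparison of the constant $2$ (from the Hermitian cross terms) with $r_k\le 1$.
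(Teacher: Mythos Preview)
Your proof is correct, and it takes a genuinely different route from the paper's.

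The paper argues one triple $(i,j,k)$ at a time by contradiction: assuming $u_{i,j}^{k,k}+v_{j,i}^{k,k\,\ast}\ne 0$, it sets $\alpha=(u_{i,j}^{k,k}+v_{j,i}^{k,k\,\ast})/r_k$, builds $B_{i,j}^k=\Phi((y_j+\alpha^\ast y_i+z_j)(y_j+\alpha^\ast y_i+z_j)^\ast)$ (a perturbation of $y_j+z_j$ in the $y$-direction), and computes the $2\times2$ central minor of $B_{i,j}^k$ on $\{y_k,z_k\}$; the chosen $\alpha$ makes the leading term vanish and forces a contradiction. Your argument instead perturbs $y_i+z_i$ in the $z$-direction, then tests the resulting $\Phi(P)\succcurlyeq0$ against the single vector $y_k-z_k$, exploiting the key identity $(y_k-z_k)^\ast A_i(y_k-z_k)=0$ (indeed $A_i=\sum_l r_l(y_l+z_l)(y_l+z_l)^\ast$, so $y_k-z_k$ is even in the kernel of $A_i$). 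This replaces the $2\times2$ determinant by a one-line scalar inequality and a quadratic optimization in the free parameters $c_j$, which is arguably cleaner and dispatches all $j\ne i$ at once.

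One minor remark on your closing commentary: the multi-$j$ packaging is elegant but not essential. Even with a single $j$ the inequality $|c_j|^2 r_k-2\,\Re(\overline{c_j}Q_j)\ge0$ for \emph{all} $c_j\in\C$ already forces $Q_j=0$ (take $c_j=tQ_j$ and let $t\to0^+$). The ``weak bound $|Q_j|\le r_k$'' arises only if one artificially restricts to $|c_j|=1$; once $c_j$ is allowed to scale, the single-$j$ version suffices. This does not affect the validity of your proof.
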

  \begin{proof}
  Suppose the contrary, and let $\alpha=(u_{i,j}^{k,k}+v_{j,i}^{k,k\,*})/r_k\neq 0$.
  Let 
  \[
  B_{i,j}^k=\Phi((y_j+\alpha^* y_i+z_j)(y_j^*+\alpha y_i^*+z_j^*))=(1+\alpha\alpha^*)\rho+\sigma+\xi'+\alpha^*\Phi(y_i z_j^*)+\alpha\Phi(z_j y_i^*)\succcurlyeq 0.
  \]
   We have
   \begin{align}
  \left|
  \begin{array}{cc}
       y_k^\ast B_{i,j}^k y_k   &   y_k^\ast B_{i,j}^k z_k   \\
       z_k^\ast B_{i,j}^k y_k   &   z_k^\ast B_{i,j}^k z_k 
  \end{array}
  \right|
  &  =
   \left|
  \begin{array}{cc}
       (1+\alpha\alpha^*)r_k  & r_k+\alpha^* u_{i,j}^{k,k}+\alpha\,v_{j,i}^{k,k} \\
       r_k+\alpha\, u_{i,j}^{k,k\,*}+\alpha^* v_{j,i}^{k,k\,*} & r_k
  \end{array}
  \right|
  \notag
\\
%  &
%  = (1+\alpha\alpha^*)r_k^2-(r_k+\alpha^* u_{i,j}^{k,k}+\alpha\,v_{j,i}^{k,k})(r_k+\alpha\, u_{i,j}^{k,k\,*}+\alpha^* v_{j,i}^{k,k\,*})
%  \notag
 % \\
  &
  = \alpha\alpha^*r_k^2-r_k(\alpha^* u_{i,j}^{k,k}+\alpha\,v_{j,i}^{k,k}+\alpha\, u_{i,j}^{k,k\,*}+\alpha^* v_{j,i}^{k,k\,*})
  -\left|\alpha^* u_{i,j}^{k,k}+\alpha\,v_{j,i}^{k,k}\right|^2
  \notag
  \\
  &
  =\left|\alpha \,r_k-(u_{i,j}^{k,k}+v_{j,i}^{k,k\,*})\right|^2-\left|u_{i,j}^{k,k}+v_{j,i}^{k,k\,*}\right|^2
  -\left|\alpha^* u_{i,j}^{k,k}+\alpha\,v_{j,i}^{k,k}\right|^2\geq 0.
  \label{alineq}
\end{align}
 Because of our choice of $\alpha$, the first term of (\ref{alineq}) vanishes and, thus, the other two terms must be $0$, which is a contradiction. 
  \end{proof}
  
  For all $i,j\in[1\,..\,m]$ such that $i\neq j$ and an arbitrary complex number $\beta$ on the unit circle (i.e., $\beta\beta^*=1$), let 
  \begin{align*}
  C_{i,j}^\beta & = \Phi((y_i+\beta\,y_j+z_i+\beta\,z_j)(y_i^*+\beta^*y_j^*+z_i^*+\beta^*z_j^*)) \\
  &=
  2\rho+2\sigma+2\xi'+ \beta\,\Phi(y_jz_i^*+z_jy_i^*)+\beta^*\Phi(y_iz_j^*+z_iy_j^*)
  \succcurlyeq 0.
  \end{align*}
  We have
  \begin{align*}
  &
  \left|
  \begin{array}{cc}
       y_k^\ast C_{i,j}^\beta y_k   &   y_k^\ast C_{i,j}^\beta z_k   \\
       z_k^\ast C_{i,j}^\beta y_k   &   z_k^\ast C_{i,j}^\beta z_k 
  \end{array}
  \right| \\
  & \quad =
   \left|
  \begin{array}{cc}
       2r_k  & 2r_k+\beta\, (u_{j,i}^{k,k}+v_{j,i}^{k,k})+\beta^* (u_{i,j}^{k,k} + v_{i,j}^{k,k}) \\
      2r_k+\beta^*(u_{j,i}^{k,k\,*}+v_{j,i}^{k,k\,*})+\beta\,(u_{i,j}^{k,k\,*}+v_{i,j}^{k,k\,*}) & 2r_k
  \end{array}
  \right|
\\
  & \quad =
  \left|
  \begin{array}{cc}
       2r_k  & 2r_k+\beta\, (u_{j,i}^{k,k}-u_{i,j}^{k,k\,*})+\beta^* (u_{i,j}^{k,k} - u_{j,i}^{k,k\,*}) \\
      2r_k+\beta^*(u_{j,i}^{k,k\,*}-u_{i,j}^{k,k})+\beta\,(u_{i,j}^{k,k\,*}-u_{j,i}^{k,k}) & 2r_k
  \end{array}
  \right|
  \\
  & \quad =
  -\left|\beta\, (u_{j,i}^{k,k}-u_{i,j}^{k,k\,*})+\beta^* (u_{i,j}^{k,k} - u_{j,i}^{k,k\,*})\right|^2   \geq 0,
\end{align*}
  where the second equality is due to Claim \ref{cluu}.
 Because we can choose $\beta$ arbitrarily, we have $u_{i,j}^{k,k}-u_{j,i}^{k,k\,*}=0$.
  Claim \ref{cluu} then implies  $w_{i,j}^{k,k}=u_{i,j}^{k,k}+v_{i,j}^{k,k}=0$.
  This means that $y_l^*C_{i,j}^\beta z_l=2 r_l$ for all $l\in[1\,..\,m]$.
  Therefore, for all $i,j,k,l\in[1\,..\,m]$ such that $i\neq j$ and $k\neq l$, we have 
  \begin{align*}
  &
  \left|
  \begin{array}{ccc}
       y_l^\ast C_{i,j}^\beta y_l   &  y_l^\ast C_{i,j}^\beta y_k   &  y_l^\ast C_{i,j}^\beta z_l \\
       y_k^\ast C_{i,j}^\beta y_l   &  y_k^\ast C_{i,j}^\beta y_k   &  y_k^\ast C_{i,j}^\beta z_l \\
       z_l^\ast C_{i,j}^\beta y_l   &  z_l^\ast C_{i,j}^\beta y_k   &  z_l^\ast C_{i,j}^\beta z_l 
  \end{array}
  \right| \\
  & \quad =
   \left|
  \begin{array}{ccc}
       2r_l & 0  & 2r_l \\
      0 & 2r_k & \beta\, (u_{j,i}^{k,l}+v_{j,i}^{k,l})+\beta^* (u_{i,j}^{k,l} + v_{i,j}^{k,l}) \\
      2 r_l & \beta^* (u_{j,i}^{k,l\,*}+v_{j,i}^{k,l\,*})+\beta\, (u_{i,j}^{k,l\,*} + v_{i,j}^{k,l\,*}) & 2 r_l
  \end{array}
  \right|
\\
  & \quad =
  -2r_l\left|\beta\, (u_{j,i}^{k,l}+v_{j,i}^{k,l})+\beta^* (u_{i,j}^{k,l} + v_{i,j}^{k,l})\right|^2   \geq 0.
\end{align*}  
  Again, because $\beta$ is arbitrary, we have $w_{i,j}^{k,l}=u_{i,j}^{k,l}+v_{i,j}^{k,l}=0$.
  Since $w_{i,j}^{k,l}=0$ for all $k,l\in [1\,..\,m]$ (including $k=l$), we have $\Phi(y_iz_j^*+z_iy_j^*)=0$ for all $i,j\in[1\,..\,m]$ such that $i\neq j$. 
  \end{proof}
  
  For all $i,j\in[1\,..\,m]$, Lemma \ref{lem:struc} states what is the (operator) value of $\Phi(y_iz_j^*+z_iy_j^*)$.
  However, there is still some ambiguity in what values $\Phi(y_iz_j^*)$ can take.
  In Section \ref{sec:spec} we show that, in the special case when all the eigenvalues of $\rho$ are distinct,  $\Phi(y_iz_j^*)$ can take finite number of different values, and we present all of them.
  
  The general case is still unsolved. The two main tools at our disposal are the fact that every point in the image of $\Phi$ is also the fixed point of $\Phi$ and the fact that $\Phi$ is positive.
  Results in Section \ref{sec:spec} are obtained using both of them, and, if we want to solve the general case, we will most likely also have to use them both.
  However, the proof of Lemma \ref{lem:struc} uses only the latter fact, that is, the positivity of $\Phi$.
  It is not clear how far we can get by using this fact alone, yet, it is still enough for proving the following two lemmas, which will be used in the next section.
  Let us use the same assumptions and notation as in Lemma \ref{lem:struc} and its proof.
  
\begin{lemma} \label{lem:poshalf}
For all $i,k\in[1\,..\,m]$, $u_{i,i}^{k,k}\geq 0$ and $v_{i,i}^{k,k}\geq 0$.
\end{lemma}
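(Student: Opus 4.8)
The plan is to test positivity of $\Phi$ on a one-parameter family of rank-one operators and read off a single $2\times2$ central minor. Fix $i,k\in[1\,..\,m]$ and, for a complex number $\beta$ on the unit circle ($\beta\beta^*=1$), put
\[
D_i^\beta=\Phi\bigl((y_i+\beta z_i)(y_i+\beta z_i)^*\bigr)=\rho+\sigma+\beta^*\Phi(y_iz_i^*)+\beta\,\Phi(z_iy_i^*)\succcurlyeq 0 ,
\]
using $\Phi(y_iy_i^*)=\rho$ and $\Phi(z_iz_i^*)=\sigma$. Recall from the proof of Lemma \ref{lem:struc} that $L(\cZ,\cY)\oplus L(\cY,\cZ)$ is invariant under $\Phi$, so both $\Phi(y_iz_i^*)$ and $\Phi(z_iy_i^*)$ have vanishing $\Pi_\cY(\cdot)\Pi_\cY$ and $\Pi_\cZ(\cdot)\Pi_\cZ$ blocks, and that $\rho^{k,k}=\sigma^{k,k}=r_k>0$. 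Evaluating the four entries of $D_i^\beta$ on $\spn\{y_k,z_k\}$ then gives $y_k^*D_i^\beta y_k=z_k^*D_i^\beta z_k=r_k$ and $y_k^*D_i^\beta z_k=\beta^*u_{i,i}^{k,k}+\beta\,v_{i,i}^{k,k}$.

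Next I would use that the $2\times2$ central minor of $D_i^\beta$ on $\spn\{y_k,z_k\}$ is nonnegative, i.e.
\[
r_k^2\ \ge\ \bigl|\beta^*u_{i,i}^{k,k}+\beta\,v_{i,i}^{k,k}\bigr|^2 \qquad\text{for every }\beta\text{ with }\beta\beta^*=1 .
\]
As $\beta$ runs over the unit circle its argument runs over all values, so one can choose $\beta$ so that $\beta^*u_{i,i}^{k,k}$ and $\beta\,v_{i,i}^{k,k}$ are nonnegative multiples of one another; the supremum of the right-hand side over $\beta$ is then $\bigl(|u_{i,i}^{k,k}|+|v_{i,i}^{k,k}|\bigr)^2$, whence $r_k\ge|u_{i,i}^{k,k}|+|v_{i,i}^{k,k}|$. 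On the other hand, the proof of Lemma \ref{lem:struc} established $u_{i,i}^{k,k}+v_{i,i}^{k,k}=w_{i,i}^{k,k}=r_k$. Combining,
\[
r_k=\bigl|u_{i,i}^{k,k}+v_{i,i}^{k,k}\bigr|\le|u_{i,i}^{k,k}|+|v_{i,i}^{k,k}|\le r_k ,
\]
so the triangle inequality is an equality; hence $u_{i,i}^{k,k}$ and $v_{i,i}^{k,k}$ are nonnegative real multiples of one common unit complex number, and since their sum $r_k$ is a positive real, that unit number is $1$. Therefore $u_{i,i}^{k,k}\ge0$ and $v_{i,i}^{k,k}\ge0$.

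There is no serious obstacle here: the argument uses only positivity of $\Phi$, linearity, and the identities already proved in Lemma \ref{lem:struc}. The two points that need a little care are the bookkeeping of which off-diagonal block each of $\Phi(y_iz_i^*)$ and $\Phi(z_iy_i^*)$ occupies when computing the diagonal entries of the minor, and the observation that the phase freedom in $\beta$ is exactly what is needed to saturate the bound $|u+v|\le|u|+|v|$ and thereby force both $u_{i,i}^{k,k}$ and $v_{i,i}^{k,k}$ to be nonnegative reals.
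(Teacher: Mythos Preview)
Your proof is correct and follows essentially the same approach as the paper: both introduce the positive operator $D_i^\beta=\Phi((y_i+\beta z_i)(y_i+\beta z_i)^*)$ for $|\beta|=1$, compute the same $2\times2$ central minor on $\spn\{y_k,z_k\}$, and exploit the freedom in $\beta$ together with $u_{i,i}^{k,k}+v_{i,i}^{k,k}=r_k$ to force both numbers to be nonnegative reals. The only cosmetic difference is that the paper substitutes $v_{i,i}^{k,k}=r_k-u_{i,i}^{k,k}$ and argues that $u_{i,i}^{k,k}$ and $r_k-u_{i,i}^{k,k}$ must share a phase, whereas you phrase the same conclusion via equality in the triangle inequality; the content is identical.
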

\begin{proof}
We already know that $u_{i,i}^{k,k}+v_{i,i}^{k,k}=w_{i,i}^{k,k}=r_k>0$.
For an arbitrary complex number $\beta$ on the unit circle, let
\[
  D_i^\beta  = \Phi((y_i+\beta\,z_i)(y_i^*+\beta^*z_i^*)) 
  =
  \rho+\sigma+ \beta^*\Phi(y_iz_i^*)+\beta\,\Phi(z_iy_i^*)
  \succcurlyeq 0.
\]
 We have
 \begin{align*}
  \left|
  \begin{array}{cc}
       y_k^\ast D_i^\beta y_k   &   y_k^\ast D_i^\beta z_k   \\
       z_k^\ast D_i^\beta y_k   &   z_k^\ast D_i^\beta z_k 
  \end{array}
  \right| 
  &  =
   \left|
  \begin{array}{cc}
       r_k  & \beta^* u_{i,i}^{k,k}+\beta\, v_{i,i}^{k,k} \\
      \beta\,u_{i,i}^{k,k\,*}+\beta^* v_{i,i}^{k,k\,*} & r_k
  \end{array}
  \right|
\\
  &  =
  r_k^2-|u_{i,i}^{k,k}+\beta^2(r_k-u_{i,i}^{k,k})|^2  \geq 0,
\end{align*}
which implies that the complex numbers $u_{i,i}^{k,k}$ and $r_k-u_{i,i}^{k,k}$ must have the same phase (unless either of them is $0$).
Since $r_k>0$, this phase must be $0$.
\end{proof}

\begin{lemma} \label{lem:nohalf}
If $u_{i,i}^{k,k}\neq v_{i,i}^{k,k}$ or $u_{j,j}^{k,k}\neq v_{j,j}^{k,k}$, then $u_{i,j}^{k,k}=0$.
\end{lemma}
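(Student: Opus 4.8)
The plan is to use only the positivity of $\Phi$: for every $v\in\cX$ the operator $\Phi(vv^*)$ is positive semi-definite, so each of its central minors is non-negative. I would feed in rank-one operators built from $y_i,z_i,y_j,z_j$ (in the spirit of the operators $A_i$, $B_{i,j}^k$, $C_{i,j}^\beta$, $D_i^\beta$ already used), and read off the $2\times 2$ central minor indexed by $y_k,z_k$. The key point is that, because $L(\cY)$ and $L(\cZ)$ are invariant and $\Phi$ restricts to $\mu\mapsto\Tr(\mu)\rho$ on $L(\cY)$ and to $\mu\mapsto\Tr(\mu)\sigma$ on $L(\cZ)$, the diagonal entries $y_k^*\Phi(vv^*)y_k$ and $z_k^*\Phi(vv^*)z_k$ are completely pinned down (each is $r_k$ times a sum of squared coefficients of $v$), while the off-diagonal entry $y_k^*\Phi(vv^*)z_k$ is a short linear combination of $u_{i,i}^{k,k},v_{i,i}^{k,k},u_{i,j}^{k,k},v_{j,i}^{k,k}$ that can be simplified using the already established relations $v_{i,i}^{k,k}=r_k-u_{i,i}^{k,k}$ with $u_{i,i}^{k,k},v_{i,i}^{k,k}\geq 0$ real (Lemma \ref{lem:poshalf}) and $v_{j,i}^{k,k}=-u_{i,j}^{k,k\,*}$ (Claim \ref{cluu}).

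Concretely, for $v=\beta y_i+\gamma z_i+\delta z_j$ one obtains $y_k^*\Phi(vv^*)y_k=|\beta|^2 r_k$, $z_k^*\Phi(vv^*)z_k=(|\gamma|^2+|\delta|^2)r_k$, and
\[
y_k^*\Phi(vv^*)z_k=\overline{\beta\bar\gamma}\,r_k+2\ii\,u_{i,i}^{k,k}\,\Im(\beta\bar\gamma)+2\ii\,\Im(\beta\bar\delta\,u_{i,j}^{k,k}).
\]
Non-negativity of the $\{y_k,z_k\}$ minor, optimized over the phases of $\gamma,\delta$ (equivalently, over all coefficients of $v$), then yields
\[
(2u_{i,i}^{k,k}-r_k)^2+4|u_{i,j}^{k,k}|^2\leq r_k^2,
\]
and, swapping the roles of $i$ and $j$ and using $|u_{j,i}^{k,k}|=|u_{i,j}^{k,k}|$, also $(2u_{j,j}^{k,k}-r_k)^2+4|u_{i,j}^{k,k}|^2\leq r_k^2$. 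This already handles the extreme cases (if $u_{i,i}^{k,k}=r_k$, hence $v_{i,i}^{k,k}=0$, the first inequality forces $u_{i,j}^{k,k}=0$), but it is not tight enough in general: when $u_{i,i}^{k,k}\neq v_{i,i}^{k,k}$ it only gives $|u_{i,j}^{k,k}|<r_k/2$. Sharpening this is the main obstacle.

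To close the gap I would bring in a four-term test vector $v=\beta y_i+\gamma z_i+\delta y_j+\varepsilon z_j$, which lets the "$j$-diagonal" cross term interact with the "$ij$" cross term through $\Phi(y_iz_j^*)=-\Phi(z_iy_j^*)$ and $\Phi(y_jz_i^*)=-\Phi(y_iz_j^*)^*$; tuning the phases so that the $r_k$-part of the $\{y_k,z_k\}$ minor entry cancels produces further inequalities linking $u_{i,i}^{k,k},u_{j,j}^{k,k},u_{i,j}^{k,k}$ (e.g.\ of the shape $(u_{i,i}^{k,k}-u_{j,j}^{k,k})^2+4|u_{i,j}^{k,k}|^2\leq r_k^2$ and $2|u_{i,j}^{k,k}|\leq r_k-|u_{i,i}^{k,k}+u_{j,j}^{k,k}-r_k|$). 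If these still do not suffice, the next step is a $3\times 3$ central minor at $\{y_k,z_k,w\}$ for a suitable extra basis vector $w$, handled by the same bookkeeping together with the relations $w_{i,j}^{k,l}=0$. I expect the delicate part to be exactly this combination step: assembling the various quadratic constraints (plus the real non-negativity of both $u_{i,i}^{k,k}$ and $v_{i,i}^{k,k}$) into the implication ``$u_{i,j}^{k,k}\neq 0$'' $\Rightarrow$ ``$u_{i,i}^{k,k}=v_{i,i}^{k,k}=r_k/2$ and $u_{j,j}^{k,k}=v_{j,j}^{k,k}=r_k/2$'', which is equivalent to the stated lemma.
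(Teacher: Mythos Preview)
Your framework is right---positivity of $\Phi$, rank-one inputs, and the $2\times2$ minor at $\{y_k,z_k\}$---and your bookkeeping of the off-diagonal entry via $v_{i,i}^{k,k}=r_k-u_{i,i}^{k,k}$ and $v_{j,i}^{k,k}=-u_{i,j}^{k,k\,*}$ is correct. But the gap you flag is real, and the proposed remedies (four-term test vectors, $3\times3$ minors, stacking more global quadratic inequalities) are not the mechanism that closes it. Optimizing over all phases produces constraints like $(2u_{i,i}^{k,k}-r_k)^2+4|u_{i,j}^{k,k}|^2\le r_k^2$, and these are genuinely slack whenever $u_{i,i}^{k,k}\in(0,r_k)$; no finite combination of such global bounds will force $u_{i,j}^{k,k}=0$ from the hypothesis $u_{j,j}^{k,k}\neq v_{j,j}^{k,k}$ alone.

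The idea you are missing is a \emph{perturbative} (first-order) argument rather than a global optimization. The paper uses the three-term vector $y_j+\alpha\,y_i+e^{I\theta}z_j$ with $\alpha,\theta$ \emph{small}. At $\alpha=\theta=0$ this is $y_j+z_j$, and there the $\{y_k,z_k\}$ minor equals $r_k^2-|w_{j,j}^{k,k}|^2=0$ exactly: you are sitting on the boundary of positivity. Turning on $\theta$ alone moves you off the boundary only at order $\theta^2$, because $|e^{-I\theta}u_{j,j}^{k,k}+e^{I\theta}v_{j,j}^{k,k}|^2=r_k^2+O(\theta^2)$. But turning on both $\alpha$ and $\theta$ produces a cross term of order $\alpha\theta$, namely $4\,\Im(\alpha e^{-I\theta}u_{i,j}^{k,k})\,(u_{j,j}^{k,k}-v_{j,j}^{k,k})\,\theta$, whose sign you control by choosing the phase of $\alpha$. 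If $u_{i,j}^{k,k}\neq0$ and $u_{j,j}^{k,k}\neq v_{j,j}^{k,k}$, this first-order term can be made strictly negative and dominates the $O(\alpha^2)+O(\theta^2)$ remainder, contradicting positivity. That local analysis near a degenerate point is the whole proof; no four-term vectors or higher minors are needed.
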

\begin{proof}
Let us assume that  $u_{j,j}^{k,k}\neq v_{j,j}^{k,k}$ and $u_{i,j}^{k,k}\neq0$; the other case is analogous.
Because of Lemma \ref{lem:poshalf}, both $u_{j,j}^{k,k}$ and $v_{j,j}^{k,k}$ are non negative, and they sum up to $r_k$.
Therefore, for any small $\theta>0$, we have 
\[
\left|e^{-I\theta}u_{i,i}^{k,k}+e^{I\theta}v_{i,i}^{k,k}\right|^2 
= (u_{i,i}^{k,k})^2+2u_{i,i}^{k,k}v_{i,i}^{k,k}\cos(2\theta)+(v_{i,i}^{k,k})^2=r_k^2+O(\theta^2).
\]
Let $\alpha$ be a small complex number such that the imaginary part of $\alpha e^{-I\theta} u_{i,j}^{k,k}$ has the opposite sign as $u_{j,j}^{k,k}-v_{j,j}^{k,k}$, namely, $\Im( \alpha e^{-I\theta} u_{i,j}^{k,k})
   ( u_{j,j}^{k,k}-v_{j,j}^{k,k})<0$.
For
 \begin{align*}
  E_{i,j}^{\alpha,\theta} 
  & =
  \Phi((y_j+\alpha\, y_i+e^{I\theta} z_j)(y_j^*+\alpha^* y_i^*+ e^{-I\theta} z_j^*)) \\
  & =
  (1+\alpha\alpha^*)\rho+\sigma+e^{-I\theta}\Phi(y_jz_j^*)+e^{I\theta}\Phi(z_jy_j^*)+\alpha e^{-I\theta}\Phi(y_i z_j^*)+\alpha^*e^{I\theta}\Phi(z_j y_i^*)\succcurlyeq 0,
  \end{align*}
  we have
  \begin{align*}
  &
  \left|
  \begin{array}{cc}
       y_k^\ast E_{i,j}^{\alpha,\theta}  y_k   &   y_k^\ast E_{i,j}^{\alpha,\theta}  z_k   \\
       z_k^\ast E_{i,j}^{\alpha,\theta}  y_k   &   z_k^\ast E_{i,j}^{\alpha,\theta}  z_k 
  \end{array}
  \right| \\
  %&
   % =
   %\left|
 % \begin{array}{cc}
%       (1+\alpha\alpha^*)r_k  
 %      & e^{-I\theta}u_{j,j}^{k,k}+e^{I\theta}v_{j,j}^{k,k}+\alpha e^{-I\theta} u_{i,j}^{k,k}+\alpha^*e^{I\theta}v_{j,i}^{k,k} \\
  %     e^{I\theta} u_{j,j}^{k,k\,*}+e^{-I\theta}v_{j,j}^{k,k\,*}+\alpha^* e^{I\theta} u_{i,j}^{k,k\,*}+\alpha e^{-I\theta} v_{j,i}^{k,k\,*} &
  %     r_k
 % \end{array}
 % \right| \\
  &
    =
   \left|
  \begin{array}{cc}
       (1+\alpha\alpha^*)r_k  
       & e^{-I\theta}u_{j,j}^{k,k}+e^{I\theta}v_{j,j}^{k,k}+\alpha e^{-I\theta} u_{i,j}^{k,k}-\alpha^*e^{I\theta}u_{i,j}^{k,k\,*} \\
       e^{I\theta} u_{j,j}^{k,k}+e^{-I\theta}v_{j,j}^{k,k}+\alpha^* e^{I\theta} u_{i,j}^{k,k\,*}-\alpha e^{-I\theta} u_{i,j}^{k,k}  
           &  r_k
  \end{array}
  \right| \\
  %&
  % =
  % r_k^2-
   %\left(
  % \left|e^{-I\theta}u_{j,j}^{k,k}+e^{I\theta}v_{j,j}^{k,k}\right|^2
  % +
  % ( \alpha e^{-I\theta} u_{i,j}^{k,k}-\alpha^*e^{I\theta}u_{i,j}^{k,k\,*} )
  % ( e^{I\theta} u_{j,j}^{k,k}+e^{-I\theta}v_{j,j}^{k,k} - e^{-I\theta}u_{j,j}^{k,k} - e^{I\theta}v_{j,j}^{k,k} )
   %\right)+O(\alpha^2) \\
   &
   =
   r_k^2-
   \left(
   \left|e^{-I\theta}u_{j,j}^{k,k}+e^{I\theta}v_{j,j}^{k,k}\right|^2
   +
   ( \alpha e^{-I\theta} u_{i,j}^{k,k}-\alpha^*e^{I\theta}u_{i,j}^{k,k\,*} )
   ( u_{j,j}^{k,k}-v_{j,j}^{k,k})(e^{I\theta}-e^{-I\theta})
   \right)+O(\alpha^2) \\
   %&
   %= - 2I\Im( \alpha e^{-I\theta} u_{i,j}^{k,k})
   %( u_{j,j}^{k,k}-v_{j,j}^{k,k})2I\sin\theta
  % +O(\alpha^2)+O(\theta^2) \\
   &
   = 4\Im( \alpha e^{-I\theta} u_{i,j}^{k,k})
   ( u_{j,j}^{k,k}-v_{j,j}^{k,k})\theta
   +O(\alpha^2)+O(\theta^2) \geq 0.
\end{align*}
Since $\alpha$ and $\theta$ can be chosen to be arbitrarily small, this is a contradiction.
\end{proof}

%=========================================================
\section{Case of distinct eigenvalues} \label{sec:spec}
%=========================================================

 As in Lemma \ref{lem:struc}, let $\Phi$ be a PTP super-operator such that $\Phi^2=\Phi$,
let $\cY$ and $\cZ$ be two $m$-dimensional orthogonal spaces such that $\Phi(\mu)=\Tr(\mu)\rho$ for all $\mu\in L(\cY)$ and $\Phi(\mu)=\Tr(\mu)\sigma$ for all $\mu\in L(\cZ)$, where $\rho\in D(\cY)$ and $\sigma\in D(\cZ)$ both have rank  $m$,
let $\xi\in L(\cZ,\cY)\oplus L(\cY,\cZ)$ be a Hermitian operator fixed by $\Phi$ such that $\xi\neq 0$,
and let
\[
\Pi_\cY\xi\Pi_\cZ=c\sum_{k=1}^m r_ky_kz_k^*
\]
be the singular value decomposition of $\Pi_\cY\xi\Pi_\cZ$.
 In this section we consider a special case of Lemma \ref{lem:struc}, the case when all the eigenvalues of $\rho$ are distinct (this implies that all the eigenvalues of $\sigma$ are distinct as well).
We will show that in this case:
\begin{lemma}
\label{speclem}
We have $\Phi(y_iz_j^*)=0$ for all $i,j\in[1\,..\,m]$ such that $i\neq j$ and either
\begin{equation}
\label{eq:spec1}
\Phi(y_iz_i^*)=\frac{1}{2}\sum_{k=1}^mr_k(y_kz_k^*+z_ky_k^*)
\end{equation}
for all $i\in[1\,..\,m]$ or there exists a disjoint partition of the set $[1\,..\,m]$ into sets $S^0$ and $S^1$ such that,
for all $b\in\{0,1\}$ and $i\in S^b$,
\begin{equation}
\label{eq:spec2}
\Phi(y_iz_i^*)=\sum_{k\in S^b}r_ky_kz_k^*+\sum_{k\in S^{1-b}}r_kz_ky_k^*.
\end{equation}
\end{lemma}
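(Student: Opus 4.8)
The plan is to use the distinctness of the eigenvalues of $\rho$ to force the action of $\Phi$ on $L(\cZ,\cY)\oplus L(\cY,\cZ)$ to be ``diagonal'' in the bases $\{y_k\}$ and $\{z_k\}$, after which everything reduces to a small matrix identity plus one last positivity argument. I keep the notation $u_{i,j}^{k,l},v_{i,j}^{k,l},w_{i,j}^{k,l}$ from the proof of \lem{struc} and write $\xi':=\xi/c=\sum_kr_k(y_kz_k^*+z_ky_k^*)$. \emph{Step 1 (diagonality).} First I would show that every Hermitian fixed point $\eta\in L(\cZ,\cY)\oplus L(\cY,\cZ)$ has $\Pi_\cY\eta\Pi_\cZ$ diagonal with respect to $\{y_k\}$ and $\{z_k\}$: if $\eta\neq0$, then \lem{struc} applied to $\eta$ gives, for a singular value decomposition $\Pi_\cY\eta\Pi_\cZ=\tilde c\sum_k\tilde r_k\tilde y_k\tilde z_k^*$, that $\rho=\sum_k\tilde r_k\tilde y_k\tilde y_k^*$ and $\sigma=\sum_k\tilde r_k\tilde z_k\tilde z_k^*$; since $r_1,\dots,r_m$ are distinct, these spectral decompositions are unique up to phases and reordering, so each $\tilde y_k$ is a phase multiple of some $y_{\pi(k)}$ with $\tilde r_k=r_{\pi(k)}$, and likewise $\tilde z_k$ of $z_{\pi'(k)}$; distinctness forces $\pi=\pi'$, whence $\Pi_\cY\eta\Pi_\cZ$ is diagonal. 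Applying this to the Hermitian fixed points $\Phi(y_iz_j^*+z_jy_i^*)$, $\Phi(I\,y_iz_j^*-I\,z_jy_i^*)$ and $\Phi(I\,y_iz_i^*-I\,z_iy_i^*)$, and using the identities $w_{i,j}^{k,l}=\delta(i,j)\delta(k,l)r_k$ from the proof of \lem{struc}, the invariance of $L(\cZ,\cY)\oplus L(\cY,\cZ)$, and \lem{poshalf}, one obtains for all $i,j$
\[
\Phi(y_iz_j^*)=\sum_ku_{i,j}^{k,k}\,y_kz_k^*+\sum_k\overline{v_{j,i}^{k,k}}\,z_ky_k^*,
\]
where $v_{j,i}^{k,k}=-u_{j,i}^{k,k}$ when $i\neq j$, and, setting $s_i^k:=u_{i,i}^{k,k}/r_k\in[0,1]$, we have $\Phi(y_iz_i^*)=\sum_kr_k\big(s_i^k\,y_kz_k^*+(1-s_i^k)\,z_ky_k^*\big)$.

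\emph{Steps 2--3 (matrix identity and dichotomy).} Since $\Phi^2=\Phi$, each $\Phi(y_iz_i^*)$ is fixed; applying $\Phi$ to the last formula and matching the coefficient of $y_lz_l^*$ gives $s_i^l=\sum_kr_k\big(s_i^ks_k^l+(1-s_i^k)(1-s_k^l)\big)$. Writing $s_i^k=\tfrac12+T_{ik}$ and $D=\mathrm{diag}(r_1,\dots,r_m)$, and using $\sum_kr_k=1$, this collapses to $T=2TDT$, so $M:=2TD$ is idempotent; hence $\Tr M=\rank M\in\{0,1,2,\dots\}$, while $|\Tr M|=|2\sum_ir_iT_{ii}|\le\sum_ir_i=1$ because $|T_{ii}|\le\tfrac12$ by \lem{poshalf}. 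Thus $\rank M\in\{0,1\}$. If $\rank M=0$ then $T=0$ and $s_i^k\equiv\tfrac12$. If $\rank M=1$, write $M=vw^*$ with $w^*v=1$; from $|v_i||w_k|=|M_{ik}|=2|T_{ik}|r_k\le r_k$ and $1=|w^*v|\le\sum_k|v_k||w_k|\le\sum_kr_k=1$ we get $|v_k||w_k|=r_k$ for every $k$, hence all $|v_i|$ are equal, hence $|T_{ik}|=\tfrac12$ and $s_i^k\in\{0,1\}$ for all $i,k$. Then $\epsilon_{ik}:=2T_{ik}\in\{-1,1\}$ satisfies $\epsilon_{il}=\sum_kr_k\epsilon_{ik}\epsilon_{kl}$, and a convex combination of $\pm1$'s equal to $\pm1$ must be constant, so $\epsilon_{ik}\epsilon_{kl}=\epsilon_{il}$ for all $i,k,l$; this cocycle identity gives $\epsilon_{ii}=1$ and $\epsilon_{ik}=\delta_i\delta_k$ for some $\delta\in\{-1,1\}^m$, i.e. $s_i^k=\tfrac12(1+\delta_i\delta_k)$, which is exactly the partition of \eqref{eq:spec2} with $S^b=\{k:\delta_k=(-1)^b\}$.

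\emph{Step 4 (the case $i\neq j$).} It remains to show $\Phi(y_iz_j^*)=0$ for $i\neq j$. If $s_i^k\in\{0,1\}$ for all $i,k$, then \lem{nohalf} (applicable since $s_i^k\neq\tfrac12$) gives $u_{i,j}^{k,k}=0$, so $\Phi(y_iz_j^*)=0$ by Step 1 and we land in \eqref{eq:spec2}. If instead $s_i^k\equiv\tfrac12$, then $\Phi(y_kz_k^*)=\tfrac12\xi'$ for every $k$, and applying $\Phi$ to the diagonal expression for $\Phi(y_iz_j^*)$ and matching coefficients yields $\Phi(y_iz_j^*)=\tfrac12G_{ij}\xi'$ for a scalar $G_{ij}$ satisfying $\overline{G_{ij}}=-G_{ji}$ (by comparing the two expressions for $u_{i,j}^{k,k}$). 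Finally, evaluating $\Phi\big((y_i+z_i+\gamma(y_j-z_j))(y_i+z_i+\gamma(y_j-z_j))^*\big)\succcurlyeq0$ on the blocks $\spn\{y_k,z_k\}$, exactly as in the proofs of Lemmas~\ref{lem:poshalf} and \ref{lem:nohalf}, forces $G_{ij}=0$ as $\gamma$ ranges over $\C$; hence $\Phi(y_iz_j^*)=0$ and, together with $\Phi(y_iz_i^*)=\tfrac12\xi'$, we land in \eqref{eq:spec1}.

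\emph{Main obstacle.} The heart of the proof is Step 1: it is the only place the distinctness hypothesis enters, and it is what rigidifies the a priori under-determined action of $\Phi$ on $L(\cZ,\cY)\oplus L(\cY,\cZ)$. After that the dichotomy is a short trace/rank computation, and the one genuinely delicate point is Step 4 in the half--half case, where idempotency alone leaves the scalars $G_{ij}$ free and one must select precisely the right positive-semidefinite operator to eliminate them.
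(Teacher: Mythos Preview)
Your proposal is correct, and the overall architecture (Step~1 from uniqueness of the spectral decomposition, then $\Phi^2=\Phi$ plus positivity) matches the paper's. The genuine novelty is in Steps~2--3: where the paper uses the ratio clause of its Claim (that $|y_k^*\chi z_k|/r_k$ is constant in $k$) to pin down $|2\alpha_i^k-1|$ and then runs a good/bad case analysis culminating in the identity $(1-2\gamma_i)(R_0-\Gamma_0+R_1-\Gamma_1)=0$, you bypass all of that with the clean linear-algebra move $T=2TDT\Rightarrow M:=2TD$ idempotent, $|\Tr M|\le 1$, hence $\rank M\in\{0,1\}$, and the rank-one case is dispatched by the inequality chain $1=|w^*v|\le\sum|v_k||w_k|\le\sum r_k=1$. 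This is a real alternative argument: you never invoke the ratio clause of Claim~\ref{cl:spec}, only diagonality, and you never separate indices into ``good'' and ``bad''. The tradeoff is that the paper's route generalises more transparently to the setting of Theorem~\ref{specthe} (several blocks $\cX_1,\dots,\cX_l$), whereas your matrix identity is tailored to a single pair $\cY,\cZ$.

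One small overcomplication: in Step~4, the half--half case is not delicate and needs no positivity at all. From Step~1 and Claim~\ref{cluu} you already have $\Phi(y_iz_j^*)=\sum_k u_{i,j}^{k,k}(y_kz_k^*-z_ky_k^*)$; since in this case $\Phi(y_kz_k^*)=\tfrac12\xi'=\Phi(z_ky_k^*)$, applying $\Phi$ once more kills the right-hand side, so $\Phi(y_iz_j^*)=\Phi^2(y_iz_j^*)=0$ immediately. Equivalently, your scalar $G_{ij}$ is $\sum_k(u_{i,j}^{k,k}+\overline{v_{j,i}^{k,k}})$, which vanishes by Claim~\ref{cluu} before any positivity is invoked. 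The paper handles the ``all bad'' case exactly this way. Your proposed positivity argument with $(y_i+z_i+\gamma(y_j-z_j))$ would also work, but it is unnecessary, and your remark that ``idempotency alone leaves the scalars $G_{ij}$ free'' is not accurate.
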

Note: we allow $S^0$ or $S^1$ to be empty.
 \begin{proof}
 Because all the eigenvalues of $\rho$ and $\sigma$ are distinct---and this is the only place where we use this assumption---spectral decompositions of $\rho$ and $\sigma$ are unique. 
 Note that, for any Hermitian operator $\chi\in L(\cY\oplus \cZ)$ fixed by $\Phi$, Lemma \ref{lem:struc} shows that the singular value decomposition of $\Pi_\cY\chi\Pi_\cZ$ implies spectral decompositions of both $\rho$ and $\sigma$.
 This means that $y_1,\ldots,y_m$ and $z_1,\ldots,z_m$ must be, up to their phases, left singular and right singular vectors  of $\Pi_\cY\chi\Pi_\cZ$, respectively. Since the singular values of $\Pi_\cY\chi\Pi_\cZ$ determines the eigenvalues of $\rho$ and $\sigma$, the following holds:
 \begin{claim}
 \label{cl:spec}
 For any Hermitian operator $\chi\in L(\cY\oplus\cZ)$ fixed by $\Phi$, we have $y_k^*\chi z_l=0$ for all $k,l\in[1\,..\,m]$ such that $k\neq l$
 and $|y_k^*\chi z_k|/r_k=|y_l^*\chi z_l|/r_l$ for all $k,l\in[1\,..\,m]$.
 \end{claim}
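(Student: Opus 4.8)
The plan is to apply \lem{struc} not to $\xi$ itself but to the ``off-diagonal'' part of $\chi$, and then to use that the spectral decompositions of $\rho$ and of $\sigma$ are unique once their eigenvalues are distinct. First I would split $\chi=\Pi_\cY\chi\Pi_\cY+\Pi_\cZ\chi\Pi_\cZ+\chi_{\mathrm{od}}$, where $\chi_{\mathrm{od}}:=\Pi_\cY\chi\Pi_\cZ+\Pi_\cZ\chi\Pi_\cY$ is a Hermitian element of $L(\cZ,\cY)\oplus L(\cY,\cZ)$. The three subspaces $L(\cY)$, $L(\cZ)$, $L(\cZ,\cY)\oplus L(\cY,\cZ)$ of $L(\cY\oplus\cZ)$ are linearly independent, and each is invariant under $\Phi$: the first two because $\Phi$ maps $L(\cY)$ into $\spn\{\rho\}\subseteq L(\cY)$ and $L(\cZ)$ into $\spn\{\sigma\}\subseteq L(\cZ)$, and the third by \lem{pos2}, as already noted in the proof of \lem{struc}. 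Comparing the $L(\cZ,\cY)\oplus L(\cY,\cZ)$-components on the two sides of $\chi=\Phi(\chi)$ then gives $\Phi(\chi_{\mathrm{od}})=\chi_{\mathrm{od}}$, so $\chi_{\mathrm{od}}$ is a Hermitian fixed point of $\Phi$ lying in $L(\cZ,\cY)\oplus L(\cY,\cZ)$.

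If $\chi_{\mathrm{od}}=0$, then $y_k^*\chi z_l=0$ for all $k,l$ and the claim is immediate, so assume $\chi_{\mathrm{od}}\neq0$. Then \lem{struc} applies verbatim with $\xi$ replaced by $\chi_{\mathrm{od}}$. Fixing a singular value decomposition $\Pi_\cY\chi\Pi_\cZ=c'\sum_{k=1}^m r'_k\,y'_k(z'_k)^*$ with $c'>0$, $(r'_1,\ldots,r'_m)$ a probability vector, and $\{y'_1,\ldots,y'_m\}$, $\{z'_1,\ldots,z'_m\}$ orthonormal bases of $\cY$ and $\cZ$, equality \eqref{f1} for $\chi_{\mathrm{od}}$ forces $\rho=\sum_k r'_k\,y'_k(y'_k)^*$ and $\sigma=\sum_k r'_k\,z'_k(z'_k)^*$. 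Hence the multiset $\{r'_1,\ldots,r'_m\}$ is the eigenvalue multiset of $\rho$, which is $\{r_1,\ldots,r_m\}$, so after relabelling the SVD we may assume $r'_k=r_k$ for every $k$. Since the eigenvalues of $\rho$ --- and therefore those of $\sigma$ --- are pairwise distinct, the spectral decompositions of $\rho$ and $\sigma$ are unique; comparing with \eqref{f1} for the original $\xi$, namely $\rho=\sum_k r_k y_k y_k^*$ and $\sigma=\sum_k r_k z_k z_k^*$, uniqueness yields $y'_k=\omega_k y_k$ and $z'_k=\omega'_k z_k$ for some unit-modulus complex numbers $\omega_k,\omega'_k$.

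Therefore $\Pi_\cY\chi\Pi_\cZ=c'\sum_{k=1}^m r_k\,\nu_k\,y_k z_k^*$ with $\nu_k:=\omega_k(\omega'_k)^*$, $|\nu_k|=1$, and since $y_k\in\cY$ and $z_l\in\cZ$ we get $y_k^*\chi z_l=y_k^*(\Pi_\cY\chi\Pi_\cZ)z_l=c'r_k\nu_k\,\delta(k,l)$. This is exactly the claim: $y_k^*\chi z_l=0$ whenever $k\neq l$, and $|y_k^*\chi z_k|/r_k=c'$, independent of $k$.

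Given \lem{struc}, the only delicate point is the final matching step --- identifying the left and right singular vectors of $\Pi_\cY\chi\Pi_\cZ$ with the $y_k$ and $z_k$ under a \emph{consistent} labelling. This is exactly where distinctness of the eigenvalues is used: it makes every $r_k$ a simple eigenvalue of both $\rho$ and $\sigma$, so the relabelling of the SVD is forced and the residual freedom is only the unimodular phases $\nu_k$, which are invisible to $|y_k^*\chi z_k|$. With repeated eigenvalues one could only match singular subspaces with eigenspaces, which would not pin down the entries $y_k^*\chi z_l$; that is the obstruction to extending this analysis to \lem{struc} in full generality.
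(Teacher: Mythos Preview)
Your argument is correct and is essentially the paper's own justification, which is the short paragraph immediately preceding the claim: apply \lem{struc} to the off-diagonal part $\Pi_\cY\chi\Pi_\cZ+\Pi_\cZ\chi\Pi_\cY$ and then use that distinct eigenvalues force uniqueness of the spectral decompositions of $\rho$ and $\sigma$, so the singular vectors of $\Pi_\cY\chi\Pi_\cZ$ must be the $y_k$ and $z_k$ up to phases. You have also made explicit a step the paper leaves implicit, namely that the off-diagonal part of $\chi$ is itself a fixed point because $L(\cY)$, $L(\cZ)$, and $L(\cZ,\cY)\oplus L(\cY,\cZ)$ are each $\Phi$-invariant; this is a welcome clarification rather than a deviation.
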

As before, for all $i,j,k,l\in[1\,..\,m]$, let $u_{i,j}^{k,l}=y_k^*\Phi(y_iz_j^*)z_l$ and $v_{i,j}^{k,l}=y_k^*\Phi(z_iy_j^*)z_l$.
 Due to Claim \ref{cl:spec}, for $k\neq l$, we have
\[
   u_{i,j}^{k,l}+v_{j,i}^{k,l}=y_k^*\Phi(y_iz_j^*+z_jy_i^*)z_l=0
   \quad \text{and} \quad
   I u_{i,j}^{k,l}-I v_{j,i}^{k,l}=y_k^*\Phi(I y_iz_j^*- I z_jy_i^*)z_l=0,
\]
which imply $u_{i,j}^{k,l}=0$ and $v_{j,i}^{k,l}=0$.

For all $i,k\in[1\,..\,m]$, let $\alpha_i^k=u_{i,i}^{k,k}/r_k$.
Since $r_k=u_{i,i}^{k,k}+v_{i,i}^{k,k}$, we have $v_{i,i}^{k,k}=r_k(1-\alpha_i^k)$, and Lemma \ref{lem:poshalf} implies that $\alpha_i^k\in[0,1]$.
We have 
\[
  \Phi(y_iz_i^*)=\sum_{k=1}^m \alpha_i^k r_k y_k z_k^* + \sum_{k=1}^m (1-\alpha_i^k)r_k z_k y_k^*
\]
and, therefore,
\[
 y_k^*\Phi(I y_iz_i^*-Iz_iy_i^*)z_k  = I(2\alpha_i^k -1) r_k.
\]
Claim \ref{cl:spec} then implies that $|2\alpha_i^k -1|=|2\alpha_i^l -1|$ for all $k,l\in[1\,..\,m]$.
Thus, either $\alpha_i^k=\alpha_i^l$ or $\alpha_i^k=1-\alpha_i^l$.
Notice that, if there exists $k\in[1\,..\,m]$ such that $\alpha_i^k=1/2$, then $\alpha_i^k=1/2$ for all $k\in[1\,..\,m]$.
Let us say that $i\in[1\,..\,m]$ is good if $\alpha_i^k\neq1/2$ for all $k\in[1\,..\,m]$, and bad otherwise (that is, $\alpha_i^k=1/2$ for all $k$).

If all $i\in[1\,..\,m]$ are bad, then we have
\[
\Phi(y_iz_j^*)=\Phi(\Phi(y_iz_j^*)) \propto \sum_{k=1}^m r_k(y_kz_k^*+z_ky_k^*).
\]
However, Claim \ref{cluu} states that $y_k^*\Phi(y_iz_j^*)z_k+(y_k^*\Phi(z_jy_i^*)z_k)^*=0$, which implies 
$\Phi(y_iz_j^*)=0$. Therefore,
\[
\Phi(y_iz_j^*)=\delta(i,j)\frac{1}{2}\sum_{k=1}^mr_k(y_kz_k^*+z_ky_k^*)
\]
for all $i,j\in[1\,..\,m]$.

Now let us assume that there exists at least one good $i$.
For each good $i$, because exactly one of $\alpha_i^k=\alpha_i^l$ and $\alpha_i^k=1-\alpha_i^l$ holds for any $k,l\in[1\,..\,m]$, we can partition all the indices in $[1\,..\,m]$ into two sets $S_i^0$ and $S_i^1$ such that
 $\alpha_i^k=\alpha_i^l$ whenever $k,l\in S_i^0$ or $k,l\in S_i^1$, and   $\alpha_i^k=1-\alpha_i^l$ otherwise.
 Note that, due to symmetry, it does not matter which set in the partition we call $S_i^0$ and which $S_i^1$.
Suppose there exist good $i$ and $j\neq i$.
Then, for all $k\in[1\,..\,m]$ and $\beta\in\C$, we have
\[
  y_k^*\Phi(I y_iz_i^*-I z_iy_i^*+\beta y_jz_j^*+\beta^*z_jy_j^*)z_k
   =
  (I (2\alpha_i^k-1)+\beta \alpha_j^k+\beta^* (1-\alpha_j^k)) r_k.
\]
Let $k,l\in[1\,..\,m]$ be such that $k\neq l$.
Claim \ref{cl:spec} implies that
\[
   |I (2\alpha_i^k-1)+\beta \alpha_j^k+\beta^* (1-\alpha_j^k)|
   =
   |I (2\alpha_i^l-1)+\beta \alpha_j^l+\beta^* (1-\alpha_j^l)|.
\]
If $\alpha_i^k=\alpha_i^l$ and $\alpha_j^k=1-\alpha_j^l$, then
\[
   |I (2\alpha_i^k-1)+\beta \alpha_j^k+\beta^* (1-\alpha_j^k)| 
   =
   |I (2\alpha_i^k-1)+\beta (1-\alpha_j^k)+\beta^* \alpha_j^k|,
\]
which clearly cannot hold for all $\beta$.
Thus, either both $\alpha_i^k=\alpha_i^l$ and $\alpha_j^k=\alpha_j^l$ or both $\alpha_i^k=1-\alpha_i^l$ and $\alpha_j^k=1-\alpha_j^l$.
This means that we can choose $S_i^0=S_j^0$ and $S_i^1=S_j^1$ (the only other alternative would be $S_i^0=S_j^1$ and $S_i^1=S_j^0$, which does not change anything as the sets $S_i^0$ and $S_i^1$ have symmetric roles), and thus we can drop lower indices form $S^0$ and $S^1$. 
Therefore, there is a unique $\gamma_i\in[0,1]$ for every good $i$ such that, for all good $j$ and all $k\in [1\,..\,m]$,
we have $\alpha_j^k=\gamma_j$ whenever $j,k\in S^0$ or $j,k\in S^1$, and $\alpha_j^k=1-\gamma_j$ otherwise.
We can drop the requirement that $j$ must be good by simply defining $\gamma_i=1/2$ for all bad $i$.
%because for bad $i$ and all $k,l\in[1\,..\,m]$, both $\alpha_i^k=\alpha_i^l=1/2$ and $\alpha_i^k=1-\alpha_i^l=1/2$ hold.

Suppose $i$ is good and, without loss of generality, $i\in S^0$.
Then
\[
\Phi(y_iz_i^*)
=\sum_{k\in S^0}\gamma_ir_ky_kz_k^*
+\sum_{k\in S^1}(1-\gamma_i)r_ky_kz_k^*
+\sum_{k\in S^0}(1-\gamma_i)r_kz_ky_k^*
+\sum_{k\in S^1}\gamma_ir_kz_ky_k^*.
\]
Hence, because $\Phi^2=\Phi$, we have
\begin{align*}
\gamma_i r_i
= & \;
y_i^*\Phi(y_iz_i^*)z_i \\
= & \;
y_i^*\Phi(\Phi(y_iz_i^*))z_i \\
%= &
%\sum_{k\in S^0}\gamma_ir_k \,y_i^*\Phi(y_kz_k^*)z_i
%+\sum_{k\in S^1}(1-\gamma_i)r_k \,y_i^*\Phi(y_kz_k^*)z_i
%\\
%&
%+\sum_{k\in S^0}(1-\gamma_i)r_k \,(z_i^*\Phi(y_kz_k^*)y_i)^*
%+\sum_{k\in S^1}\gamma_ir_k \,(z_i^*\Phi(y_kz_k^*)y_i)^*
%\\
= &
\sum_{k\in S^0}\gamma_ir_k \alpha_k^i r_i
+\sum_{k\in S^1}(1-\gamma_i)r_k\alpha_k^i r_i
+\sum_{k\in S^0}(1-\gamma_i)r_k(1-\alpha_k^i) r_i
+\sum_{k\in S^1}\gamma_ir_k(1-\alpha_k^i) r_i \\
= & \Big(\!
\sum_{k\in S^0}\gamma_ir_k \gamma_k 
+ \sum_{k\in S^1}(1-\gamma_i)r_k (1-\gamma_k) 
+ \sum_{k\in S^0}(1-\gamma_i)r_k (1-\gamma_k) 
+ \sum_{k\in S^1}\gamma_ir_k \gamma_k 
\Big)r_i.
\end{align*}
For each $b\in\{0,1\}$, let $R_b=\sum_{k\in S^b}r_b$ and $\Gamma_b=\sum_{k\in S^b}\gamma_b r_b$.
Because $R_0+R_1=1$, we have
\[
(R_0+R_1)
\gamma_i
=
\gamma_i  \Gamma_0
+(1-\gamma_i)(R_1-\Gamma_1)
+(1-\gamma_i)(R_0-\Gamma_0)
+\gamma_i \Gamma_1.
\]
Hence,
\[
(1-2\gamma_i)(R_0-\Gamma_0+R_1-\Gamma_1)=0.
\]
Because $i$ is good and $\Gamma_b\in[0,R_b]$ for both $b$, $1-2\gamma_i\neq 0$ and therefore we must have $R_0=\Gamma_0$ and $R_1=\Gamma_1$.
This means that $\gamma_i=1$ for all $i\in[1\,..\,m]$. Finally, Lemma \ref{lem:nohalf} implies that $\Phi(y_iz_j^*)=0$ for all $i,j\in[1\,..\,m]$ such that $i\neq j$.
\end{proof}

Let us prove an even stronger result.
Given two orthogonal $m$-dimensional spaces $\cX_1$ and $\cX_2$ such that, for both $i\in\{1,2\}$, there exists $\rho_i\in D(\cX_i)$ that has distinct, strictly positive eigenvalues and that satisfies $\Phi(\mu)=\Tr(\mu)\rho_i$ for all $\mu\in L(\cX_i)$,
Lemma \ref{speclem} specifies how $\Phi$ can act on $L(\cX_1,\cX_2)$.
That is, either $\Phi$ maps all $\mu\in L(\cX_1,\cX_2)$ to $0$, or its action follows equation (\ref{eq:spec1}) or (\ref{eq:spec2}).
Now, suppose we have three such spaces $\cX_1$, $\cX_2$, and $\cX_3$.
Is it possible that the action of $\Phi$ on $L(\cX_1,\cX_2)$ follows (\ref{eq:spec1}) while the action on $L(\cX_2,\cX_3)$ follows (\ref{eq:spec2})?
The following theorem shows that the answer is no.
Even more, it shows that, first, the action of $\Phi$ on $L(\cX_1,\cX_2)$ and $L(\cX_2,\cX_3)$ (if it is non-zero) determines the action of $\Phi$ on  $L(\cX_1,\cX_3)$ and, second, if the action on $L(\cX_1,\cX_2)$ and $L(\cX_2,\cX_3)$ follows equation (\ref{eq:spec2}), then the partition of $[1\,..\,m]$ into sets $S^0$ and $S^1$ must be the same in both cases.

\begin{theorem}
\label{specthe}
Let $\Phi$ be a PTP super-operator such that $\Phi^2=\Phi$,
let  $\cX_1,\ldots,\cX_l$ be $m$-dimensional mutually orthogonal spaces,
and, for all $i\in[1\,..\,l]$, let $\rho_i\in D(\cX_i)$ be such that $\rank\rho_i=m$, all the eigenvalues of $\rho_i$ are distinct, and $\Phi(\mu)=\Tr(\mu)\rho_i$ for all $\mu\in L(\cX_i)$. Then we have:
\begin{enumerate}
  \item For any $i,j,k\in[1\,..\,l]$, if $\Phi[L(\cX_i,\cX_j)]\neq 0$ and  $\Phi[L(\cX_j,\cX_k)]\neq 0$, then $\Phi[L(\cX_i,\cX_k)]\neq 0$.
  \item Suppose that $\Phi[L(\cX_i,\cX_j)]\neq 0$ for all $i,j\in[1\,..\,l]$.
   Then we can choose phases of eigenvectors $x_{i,1},\ldots,x_{i,m}$ of $\rho_i$ for all $i\in[1\,..\,l]$ so that either
\begin{equation} \label{eqthe1}
\Phi(x_{i,g}x_{j,h}^*)=\delta(g,h)\frac{1}{2}\sum_{k=1}^mr_k(x_{i,k} x_{j,k}^*+x_{j,k} x_{i,k}^*)
\end{equation}
for all $i,j\in[1\,..\,l]$ and $g,h\in[1\,..\,m]$
or 
there exists a disjoint partition of the set $[1\,..\,m]$ into sets $S^0$ and $S^1$ such that,
for all $b\in\{0,1\}$ and $g\in S^b$,
\begin{equation} \label{eqthe2}
\Phi(x_{i,g}x_{j,h}^*)=\delta(g,h)\Big(\!\!\sum_{k\in S^b} r_k x_{i,k} x_{j,k}^*+\sum_{k\in S^{1-b}}r_k x_{j,k} x_{i,k}^* \Big)
\end{equation}
for all $i,j\in[1\,..\,l]$ and $h\in[1\,..\,m]$,
where $r_1,\ldots,r_m$ are the eigenvalues of $\rho_1,\ldots,\rho_l$ (they all have the same eigenspectrum). 
\end{enumerate}
\end{theorem}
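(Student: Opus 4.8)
The plan is to deduce everything from positivity of $\Phi$ applied to a handful of rank-one operators $ww^{*}$, re-using the ``non-negativity of central minors'' device from the proofs of Lemma~\ref{lem:struc} and Lemma~\ref{speclem}, together with the elementary observation that a positive semidefinite Hermitian matrix containing a $2\times2$ all-ones principal submatrix must, at the two rows indexing that submatrix, have identical entries in every column (the two corresponding columns of any Gram factorization coincide, by Cauchy--Schwarz). First I would set up notation. Because each $\rho_i$ has $m$ distinct eigenvalues, its eigenbasis is unique up to phases; fix unit eigenvectors $x_{i,1},\dots,x_{i,m}$ of $\rho_i$ ordered by decreasing eigenvalue. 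Whenever $\Phi[L(\cX_i,\cX_j)]\neq0$, Lemma~\ref{lem:struc} (applied to a Hermitian fixed point built from some nonzero $\Phi(\theta)$, $\theta\in L(\cX_i,\cX_j)$) forces $\rho_i=\sum_k r_k x_{i,k}x_{i,k}^{*}$ and $\rho_j=\sum_k r_k x_{j,k}x_{j,k}^{*}$ for a single probability vector $(r_k)$ -- so all the $\rho_i$ share the spectrum $r_1,\dots,r_m$ -- and Lemma~\ref{speclem}, read in these bases, gives $\Phi(x_{i,g}x_{j,h}^{*})=0$ for $g\neq h$, while $\Phi(x_{i,g}x_{j,g}^{*})$ is, up to unimodular factors $\omega^{(ij)}_k$ coming from the singular-value decomposition used there, the right-hand side of \eqref{eq:spec1} or of \eqref{eq:spec2}. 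In particular each $\Phi(x_{i,g}x_{j,h}^{*})$ is supported on the off-diagonal blocks $L(\cX_j,\cX_i)\oplus L(\cX_i,\cX_j)$, and for the unimodular $\beta=\overline{\omega^{(ij)}_g}$ one gets $\bar\beta\,\Phi(x_{i,g}x_{j,g}^{*})+\beta\,\Phi(x_{j,g}x_{i,g}^{*})=\xi'_{ij}$, the Hermitian fixed point of \eqref{f2}, whose $(x_{i,k},x_{j,k})$-entry has modulus $r_k$ and whose remaining entries vanish.

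For part~1 I would assume $i,j,k$ distinct (otherwise the statement is trivial) and $\Phi[L(\cX_i,\cX_j)],\Phi[L(\cX_j,\cX_k)]\neq0$ but $\Phi[L(\cX_i,\cX_k)]=0$, fix one index $g$, and apply $\Phi$ to $ww^{*}$ with $w=x_{i,g}+\overline{\omega^{(ij)}_g}\,x_{j,g}+\omega^{(jk)}_g\overline{\omega^{(ij)}_g}\,x_{k,g}$; the coefficients are chosen so that the $\cX_i$-$\cX_j$ and $\cX_j$-$\cX_k$ cross terms of $\Phi(ww^{*})$ become $\xi'_{ij}$ and $\xi'_{jk}$, while the $\cX_i$-$\cX_k$ cross term is killed by hypothesis, so $\Phi(ww^{*})=\rho_i+\rho_j+\rho_k+\xi'_{ij}+\xi'_{jk}$. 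This is block diagonal over the $m$ triples $\spn\{x_{i,k'},x_{j,k'},x_{k,k'}\}$, whose $k'$-th block equals $r_{k'}$ times a unit-diagonal Hermitian matrix with $(1,2)$ and $(2,3)$ entries of modulus $1$ and $(1,3)$ entry $0$, a matrix of determinant $-1<0$; this contradicts $\Phi(ww^{*})\succcurlyeq0$, so $\Phi[L(\cX_i,\cX_k)]\neq0$.

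For part~2, fix the phases of $x_{1,\bullet}$ arbitrarily and, for each $j\in[2\,..\,l]$, choose the phases of $x_{j,\bullet}$ from the singular-value decomposition of the Hermitian fixed point of pair $(1,j)$ so that $\omega^{(1j)}_k=1$; then $\Phi(x_{1,g}x_{j,g}^{*})$ is exactly the right-hand side of \eqref{eq:spec1} or of \eqref{eq:spec2}. Running the part-1 computation on $w=x_{1,g}+x_{i,g}+x_{j,g}$, now with all three pairs nonzero, produces a positive unit-diagonal $3\times3$ block whose $(1,2)$ and $(1,3)$ entries both equal $1$; the vanishing all-ones $2\times2$ block on rows/columns $\{1,2\}$ forces equality in the third column, pinning the $(x_{i,k'},x_{j,k'})$-entry of $\Phi(x_{i,g}x_{j,g}^{*}+x_{j,g}x_{i,g}^{*})$ to $r_{k'}$ for all $g,k'$. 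Combining this Hermitian-part information with the anti-Hermitian information obtained from $w=x_{1,g}+Ix_{i,g}+x_{j,g}$ (and $w=x_{1,g}+Ix_{i,g}+Ix_{j,g}$), whose imaginary coefficients isolate the anti-Hermitian parts of the relevant $\Phi(x_{\cdot,g}x_{\cdot,g}^{*})$, forces $\Phi(x_{i,g}x_{j,g}^{*})$ into the clean form of \eqref{eq:spec1} or of \eqref{eq:spec2} for some partition $\mathcal P_{ij}$ of $[1\,..\,m]$, with $x_{i,k},x_{j,k}$ in place of $y_k,z_k$.

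Finally I would read off kind-consistency from the same test operators. For distinct $i,j$, $w=x_{1,g}+Ix_{i,g}+Ix_{j,g}$ produces a positive unit-diagonal block whose $(2,3)$ entry is $1$ and whose $(1,2)$ (resp.\ $(1,3)$) entry is $0$ if pair $(1,i)$ (resp.\ $(1,j)$) is of kind \eqref{eq:spec1} and is $-I\epsilon^{(1i)}_{g,k'}$ (resp.\ $-I\epsilon^{(1j)}_{g,k'}$) -- where $\epsilon^{(\cdot)}_{g,k'}=\pm1$ according as $k'$ lies in the same block of the relevant partition as $g$ or not -- if it is of kind \eqref{eq:spec2}; were one of $(1,i),(1,j)$ of kind \eqref{eq:spec1} and the other of kind \eqref{eq:spec2}, the vanishing all-ones $2\times2$ block on $\{2,3\}$ would force the corresponding $\epsilon^{(\cdot)}_{g,\cdot}\equiv0$, impossible since each $\epsilon^{(\cdot)}_{g,g}=1$. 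Hence all pairs $(1,j)$ have the same kind, and $w=x_{1,g}+Ix_{i,g}+x_{j,g}$ shows, by the same device, that every $(i,j)$ has that kind and -- in the \eqref{eq:spec2} case -- that $\epsilon^{(ij)}_{g,k'}=\epsilon^{(1i)}_{g,k'}=\epsilon^{(1j)}_{g,k'}$ for all $g,k'$, i.e.\ $\mathcal P_{ij}=\mathcal P_{1i}=\mathcal P_{1j}=:\{S^0,S^1\}$ for all $i,j$. As the right-hand side of \eqref{eq:spec2} depends on the partition only through same-block membership, this is precisely \eqref{eqthe1} in the first case and \eqref{eqthe2}, for a fixed labelling of $\{S^0,S^1\}$, in the second. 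The main obstacle I expect is not any individual inequality but the bookkeeping: tracking the unimodular factors $\omega^{(ij)}_k$ and the signs $\epsilon^{(ij)}_{g,k}$ across all pairs, and choosing in each test vector $w$ the real/imaginary coefficients that isolate exactly the Hermitian or the anti-Hermitian part one needs.
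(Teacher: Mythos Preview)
Your proposal is correct and follows essentially the same route as the paper: the paper also reduces to triples, applies $\Phi$ to rank-one operators $A_q=\Phi\big((x_{1,q}+x_{2,q}+x_{3,q})(\cdots)^*\big)$ and $B_q=\Phi\big((x_{1,q}+Ix_{2,q}+x_{3,q})(\cdots)^*\big)$, and reads off everything from non-negativity of the $3\times3$ central minors indexed by $x_{1,k},x_{2,k},x_{3,k}$. Your ``all-ones $2\times2$ block forces identical rows'' observation is exactly the Gram/Cauchy--Schwarz repackaging of the paper's explicit determinant $-r_k\,|r_k-x_{1,k}^{*}A_qx_{3,k}|^2\ge0$, and your test vectors $w$ with coefficients $1,I$ are the paper's $A_q,B_q$; the only cosmetic differences are that the paper argues part~1 directly (computing the $(1,3)$ entry to be $r_k$) rather than by contradiction, and handles the mixed-case exclusion and the equality of partitions with $B_q$ alone rather than with an additional $w=x_{1,g}+Ix_{i,g}+Ix_{j,g}$.
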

\begin{proof}
Without loss of generality, we assume that $l=3$ as the result for larger $l$ follows by induction. 
Let us first consider the first statement of the theorem. Suppose that there exist  $\mu_{12}\in L(\cX_1,\cX_2)\oplus L(\cX_2,\cX_1)$ and  $\mu_{23}\in L(\cX_2,\cX_3)\oplus L(\cX_3,\cX_2)$ such that $\Phi(\mu_{12})\neq0$ and $\Phi(\mu_{23})\neq0$.
Without loss of generality, we assume that both $\mu_{12}$ and $\mu_{23}$ are Hermitian.
We can choose right singular vectors of $\Pi_{\cX_1}\Phi(\mu_{12})\Pi_{\cX_2}$ and left singular vectors of $\Pi_{\cX_2}\Phi(\mu_{23})\Pi_{\cX_3}$ so that their phases coincide (they must be equal up to their phases as the spectral decomposition of $\rho_2$ is unique).
This means that, due to Lemma \ref{speclem}, for each $i\in\{1,2,3\}$, we can choose phases of eigenvectors $x_{i,1},\ldots,x_{i,m}$ of $\rho_i$ so that, for each pair $(i,j)\in\{(1,2),\,(2,3)\}$, exactly one of the following two cases holds:
\begin{enumerate}
\item \label{case1}
for all $g,h\in[1\,..\,m]$:
\[
\Phi(x_{i,g}x_{j,h}^*)=\delta(g,h)\frac{1}{2}\sum_{k=1}^mr_k(x_{i,k} x_{j,k}^*+x_{j,k} x_{i,k}^*);
\]
\item \label{case2} there exists a disjoint partition of the set $[1\,..\,m]$ into sets $S^0_{ij}$ and $S^1_{ij}$ such that,
for all $b\in\{0,1\}$ and $g\in S^b_{ij}$,
\[
\Phi(x_{i,g}x_{j,h}^*)=\delta(g,h)\Big(\!\!\sum_{k\in S^b_{ij}} r_k x_{i,k} x_{j,k}^*+\sum_{k\in S^{1-b}_{ij}}r_k x_{j,k} x_{i,k}^* \Big)
\]
for all $h\in[1\,..\,m]$.
\end{enumerate}

%Let us consider $\Phi((x_{i,q}+x_{j,q})(x_{i,q}^*+x_{j,q}^*))$ and $\Phi((x_{i,q}+Ix_{j,q})(x_{i,q}^*-Ix_{j,q}^*))$.
% In particular, for any $(i,j)\in\{(1,2),\,(2,3)\}$, we have
%\[
% \Pi_{\cX_i}\Phi((x_{i,q}+x_{j,q})(x_{i,q}^*+x_{j,q}^*))\Pi_{\cX_j} = \sum_{k=1}^mr_kx_{i,k} x_{j,k}^*
%\]
%regardless of whether Case \ref{case1} or Case \ref{case2} holds, while
%\[
% \Pi_{\cX_i}\Phi((x_{i,q}+Ix_{j,q})(x_{i,q}^*-Ix_{j,q}^*))\Pi_{\cX_j} = 0
%\]
%in Case \ref{case1} and
%\[
% \Pi_{\cX_i}\Phi((x_{i,q}+Ix_{j,q})(x_{i,q}^*-Ix_{j,q}^*))\Pi_{\cX_j} = -I\sum_{k\in S^{b(q)}_{ij}} r_k x_{i,k} x_{j,k}^*+I\sum_{k\in S^{1-b(q)}_{ij}}r_k x_{j,k} x_{i,k}^* 
%\]
%in Case \ref{case2}, where $b(q)\in\{0,1\}$ is such that $q\in S_{ij}^{b(q)}$.

For $q\in[1\,..\,m]$, let $A_q=\Phi((x_{1,q}+x_{2,q}+x_{3,q})(x_{1,q}^*+x_{2,q}^*+x_{3,q}^*))\succcurlyeq 0$ and $B_q=\Phi((x_{1,q}+Ix_{2,q}+x_{3,q})(x_{1,q}^*-Ix_{2,q}^*+x_{3,q}^*))\succcurlyeq 0$.
For any $(i,j)\in\{(1,2),\,(2,3)\}$, regardless of whether Case \ref{case1} or Case \ref{case2} holds, we have
$x_{i,k}^\ast A_q x_{j,k} =x_{i,k}^*\Phi(x_{i,q}x_{j,q}^* +  x_{j,q}x_{i,q}^*)x_{j,k}=r_k$ for all $k\in[1\,..\,m]$.
Thus,
\begin{align*}
  \left|
  \begin{array}{ccc}
       x_{1,k}^\ast A_q x_{1,k}   &  x_{1,k}^\ast A_q x_{2,k}   &  x_{1,k}^\ast A_q x_{3,k} \\
       x_{2,k}^\ast A_q x_{1,k}   &  x_{2,k}^\ast A_q x_{2,k}   &  x_{2,k}^\ast A_q x_{3,k} \\
       x_{3,k}^\ast A_q x_{1,k}   &  x_{3,k}^\ast A_q x_{2,k}   &  x_{3,k}^\ast A_q x_{3,k} 
  \end{array}
  \right|
  &=
  \left|
  \begin{array}{ccc}
       r_k   &  r_k   &  x_{1,k}^\ast A_q x_{3,k} \\
       r_k   &  r_k   &  r_k \\
       (x_{1,k}^\ast A_q x_{3,k})^*   &  r_k   &  r_k 
  \end{array}
  \right| \\
  &=-r_k\big|r_k-x_{1,k}^\ast A_q x_{3,k}\big|^2 \geq 0,
\end{align*}
which implies $x_{1,k}^\ast A_q x_{3,k}=r_k$ for all $k,q\in[1\,..\,m]$.
Because the spectral decomposition of $\Pi_{\cX_1}A_q\Pi_{\cX_3}$ determines the eigenvectors of $\rho_1$ and $\rho_3$, we get that
\[
\Phi(x_{1,q}x_{3,q}^* +  x_{3,q}x_{1,q}^*)=\sum_{k=1}^m r_k (x_{1,k}x_{3,k}^* +  x_{3,k}x_{1,k}^*)
\]
for all $q\in[1\,..\,m]$. Thus,  Case \ref{case1} or Case \ref{case2} must also hold for $(i,j)=(1,3)$.

If Case \ref{case1} holds for two pairs, say, $(1,2)$ and $(1,3)$, and Case \ref{case2} for the third pair, then one can easily show that $B_q$ is not positive semi-definite (by considering the same central minor as of $A_q$ above), which is a contradiction.
Similarly, if Case \ref{case1} holds for one pair, say, $(1,2)$, and Case \ref{case2} for the other two pairs, then $B_q$ is also not positive semi-definite.
Hence, Case \ref{case1} must hold for all $(1,2)$, $(1,3)$, and $(2,3)$
or Case \ref{case2} must hold for all $(1,2)$, $(1,3)$, and $(2,3)$.

It is left to show that, if Case \ref{case2} holds, then the partition of $[1\,..\,m]$ into sets $S^0$ and $S^1$ must be the same for all three pairs $(1,2)$, $(1,3)$, and $(2,3)$.
Suppose the contrary: without loss of generality, there exists $k,q\in[1\,..\,m]$ such that $q\in S^0_{12}$, $k\in S^0_{12}$, $q\in S^0_{23}$, and $k\in S^1_{23}$.  We have
\[
  \left|
  \begin{array}{ccc}
       x_{1,k}^\ast B_q x_{1,k}   &  x_{1,k}^\ast B_q x_{2,k}   &  x_{1,k}^\ast B_q x_{3,k} \\
       x_{2,k}^\ast B_q x_{1,k}   &  x_{2,k}^\ast B_q x_{2,k}   &  x_{2,k}^\ast B_q x_{3,k} \\
       x_{3,k}^\ast B_q x_{1,k}   &  x_{3,k}^\ast B_q x_{2,k}   &  x_{3,k}^\ast B_q x_{3,k} 
  \end{array}
  \right|
  =
  \left|
  \begin{array}{ccc}
       r_k   &  -I r_k   &  r_k \\
       I r_k   &  r_k   &  -I r_k \\
       r_k   &  I r_k   &  r_k 
  \end{array}
  \right|
  =-4r_k\geq 0,
\]
which is a contradiction.
\end{proof}

One can see that every super-operator $\Phi$ that acts on the space $\bigoplus_{i=1}^l\cX_i$ as described by equation (\ref{eqthe1}) or (\ref{eqthe2}) is positive and trace-preserving and satisfies $\Phi^2=\Phi$.
Therefore, if all the density operators $\rho_i$ given in Lemma \ref{lem:div} have distinct eigenvalues, Theorem \ref{specthe} completely characterizes how $\Phi$ can act on $\cX$, and therefore completely characterizes the fixed space of $\Phi$.

%=========================================================
\section{CPTP projections} \label{sec:compos}
%=========================================================

In this section we investigate how much more we can say about the fixed space of a super-operator $\Phi$ if we assume complete-positivity of $\Phi$ instead of assuming just positivity.
As CPTP super-operators are the special case of PTP super-operators, all the results shown above applies to them too.
In particular, let us consider Lemma  \ref{lem:struc}.

Let $\Phi$ be CPTP super-operator satisfying $\Phi^2=\Phi$.
We know that its Choi matrix $J(\Phi)$ is positive semi-definite.
Thus, its central minor
\[
\left|
  \begin{array}{cc}
        z_k^* \Phi(y_iy_i^*) z_k   &  z_k^* \Phi(y_iz_j^*) y_l   \\
       y_l^* \Phi(z_jy_i^*) z_k   &  y_l^* \Phi(z_jz_j^*) y_l
  \end{array}
  \right|
\]
must be non-negative,
where $y_i,y_l,z_k,z_l$ are (not necessarily distinct) vectors of any orthonormal basis of $\cX$.
In particular, if $z_k^* \Phi(y_iy_i^*) z_k=0$, then $z_k^* \Phi(y_iz_j^*) y_l=0$.
Therefore, if we have two orthogonal spaces $\cY$ and $\cZ$ such that $L(\cY)$ and $L(\cZ)$ are invariant under $\Phi$, then, not only we can say that $L(\cY,\cZ)\oplus L(\cZ,\cY)$ is invariant (as it is the case for all positive super-operators), but we can also say that both $L(\cY,\cZ)$ and $L(\cZ,\cY)$ are invariant.
Hence, Lemma \ref{lem:struc} in the case of CPTP super-operators becomes:
\begin{lemma}
\label{lem:stcomp}
 Let $\cY$ and $\cZ$ be two $m$-dimensional orthogonal subspaces of $\cX$ such that $\Phi(\mu)=\Tr(\mu)\rho$ for all $\mu\in L(\cY)$ and $\Phi(\mu)=\Tr(\mu)\sigma$ for all $\mu\in L(\cZ)$, where $\rho\in D(\cY)$ and $\sigma\in D(\cZ)$ both have rank  $m$.
 Suppose there exists an operator $\xi\in L(\cZ,\cY)$ fixed by $\Phi$ such that $\xi\neq 0$. Then, let 
\[
\xi=c\sum_{k=1}^m r_ky_kz_k^*
\]
be the singular value decomposition of $\xi$, where $c>0$, 
$(r_1,\ldots,r_m)$ is a a probability vector, and $\{y_1,\ldots,y_m\}$ and $\{z_1,\ldots,z_m\}$ are orthonormal bases of $\cY$ and $\cZ$, respectively. We have
\[
\rho=\sum_{k=1}^m r_ky_ky_k^\ast, \;\; \sigma=\sum_{k=1}^mr_kz_kz_k^\ast,
 \;\;\text{and}  \;\;
\Phi(y_iz_j^*)=\delta(i,j)\sum_{k=1}^m r_ky_kz_k^*=\xi/c  \text{ for all }i,j\in[1\,..\,m].
\]
\end{lemma}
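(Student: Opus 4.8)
The plan is to deduce the statement directly from Lemma~\ref{lem:struc} together with the extra invariance supplied by complete positivity. As established in the discussion preceding this lemma, since $L(\cY)$ and $L(\cZ)$ are invariant under $\Phi$, the non-negativity of the relevant $2 \times 2$ central minor of the Choi matrix $J(\Phi)$ forces $L(\cZ,\cY)$ and $L(\cY,\cZ)$ to be \emph{each} invariant under $\Phi$, not merely their direct sum. This is the only place where complete positivity, rather than positivity, is used.

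Next I would manufacture a Hermitian fixed point suitable for Lemma~\ref{lem:struc}. Put $\hat\xi = \xi + \xi^*$; since $\Phi(\xi)=\xi$ and $\Phi$ is Hermiticity-preserving, $\Phi(\xi^*) = \Phi(\xi)^* = \xi^*$, so $\hat\xi$ is a Hermitian fixed point lying in $L(\cZ,\cY)\oplus L(\cY,\cZ)$. Because $\xi\in L(\cZ,\cY)$, $\xi^*\in L(\cY,\cZ)$, and $\Pi_\cY\Pi_\cZ=0$, one has $\Pi_\cY\hat\xi\Pi_\cZ=\xi$; in particular $\hat\xi\neq0$, and the singular value decomposition $\xi=c\sum_k r_k y_k z_k^*$ from the statement is exactly a singular value decomposition of $\Pi_\cY\hat\xi\Pi_\cZ$. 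Applying Lemma~\ref{lem:struc} to $\hat\xi$ then yields at once $\rho=\sum_k r_k y_k y_k^*$ and $\sigma=\sum_k r_k z_k z_k^*$, together with $\Phi(y_iz_i^*+z_iy_i^*)=\xi/c+\xi^*/c$ for all $i$ and $\Phi(y_iz_j^*+z_iy_j^*)=0$ for $i\neq j$.

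Finally I would split these two operator equalities along the direct sum $L(\cZ,\cY)\oplus L(\cY,\cZ)$ (a genuine direct sum since $\cY\perp\cZ$). By the first paragraph, $\Phi(y_iz_j^*)\in L(\cZ,\cY)$ and $\Phi(z_iy_j^*)\in L(\cY,\cZ)$ for all $i,j$, while $\xi/c=\sum_k r_k y_k z_k^*\in L(\cZ,\cY)$ and $\xi^*/c\in L(\cY,\cZ)$. Matching the $L(\cZ,\cY)$-components of $\Phi(y_iz_j^*+z_iy_j^*)=0$ gives $\Phi(y_iz_j^*)=0$ for $i\neq j$, and matching those of $\Phi(y_iz_i^*+z_iy_i^*)=\xi/c+\xi^*/c$ gives $\Phi(y_iz_i^*)=\xi/c=\sum_k r_k y_k z_k^*$. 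Together these say $\Phi(y_iz_j^*)=\delta(i,j)\,\xi/c$ for all $i,j\in[1\,..\,m]$, which is the assertion.

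I do not expect a genuine obstacle: the lemma is essentially a corollary of Lemma~\ref{lem:struc} once the sharpened invariance is available (indeed one could alternatively re-run the proof of Lemma~\ref{lem:struc} using the stronger fact that diagonal minors of $J(\Phi)$ are non-negative). The only points demanding care are verifying the identity $\Pi_\cY\hat\xi\Pi_\cZ=\xi$, so that the two singular value decompositions coincide and the same data $c$, $r_k$, $y_k$, $z_k$ propagate through, and the trivial-intersection fact $L(\cZ,\cY)\cap L(\cY,\cZ)=\{0\}$ that legitimizes the component-matching step.
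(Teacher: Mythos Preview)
Your proposal is correct and follows essentially the same route as the paper: the paper derives Lemma~\ref{lem:stcomp} from Lemma~\ref{lem:struc} by first using the positive semi-definiteness of the Choi matrix to upgrade the invariance of $L(\cZ,\cY)\oplus L(\cY,\cZ)$ to separate invariance of $L(\cZ,\cY)$ and $L(\cY,\cZ)$, and then reading off the conclusion. Your write-up simply makes explicit the two steps the paper leaves implicit---forming the Hermitian fixed point $\hat\xi=\xi+\xi^*$ so that Lemma~\ref{lem:struc} applies with $\Pi_\cY\hat\xi\Pi_\cZ=\xi$, and then splitting the resulting identities along the direct sum using $L(\cZ,\cY)\cap L(\cY,\cZ)=\{0\}$---and both steps are handled correctly.
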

Notice that Lemma \ref{lem:stcomp} completely characterizes how $\Phi$ acts on space $L(\cY\oplus\cZ)$.
Now, consider the statement of Lemma \ref{lem:div}.
 Lemma \ref{lem:stcomp} shows that there may be an operator $\mu\in L(\cX_i,\cX_j)$ such that $\Phi(\mu)\neq 0$ only if $\rho_i$ and $\rho_j$ have the same eigenspectrum.
 Suppose eigenspectra of $\rho_i$, $\rho_j$, and $\rho_k$ are equal, and there are operators $\mu_{ij}\in L(\cX_i,\cX_j)$ and $\mu_{jk}\in L(\cX_j,\cX_k)$ such that $\Phi(\mu_{ij})\neq 0$ and $\Phi(\mu_{jk})\neq 0$.
 Unless all the eigenvalues of $q_j$ are distinct, Lemma \ref{lem:stcomp} applied to the pair $\cX_i$ and $\cX_j$ and the pair $\cX_j$ and $\cX_k$ does not necessarily give the same basis of $\cX_j$.
 However, one can show that we can change basis of $\cX_j$ and $\cX_k$ obtained in the second application of the lemma so that the lemma still holds and basis of $\cX_j$ obtained in both applications agree.
 Then we can easily use complete-positivity of $\Phi$ (in fact, positivity would be enough) to specify its action on $L(\cX_i,\cX_k)$.
 This gives us the following lemma.

\begin{lemma}
\label{the:sym}
Let  $\cX_1,\ldots,\cX_l$ be $m$-dimensional mutually orthogonal subspaces of $\cX$ and, for all $i\in[1\,..\,l]$, let $\rho_i\in D(\cX_i)$ be such that $\rank\rho_i=m$ and $\Phi(\mu)=\Tr(\mu)\rho_i$ for all $\mu\in L(\cX_i)$. Then:
\begin{enumerate}
  \item For any $i,j,k\in[1\,..\,l]$, if $\Phi[L(\cX_i,\cX_j)]\neq 0$ and  $\Phi[L(\cX_j,\cX_k)]\neq 0$, then $\Phi[L(\cX_i,\cX_k)]\neq 0$.
  \item If $\Phi[L(\cX_i,\cX_j)]\neq 0$ for all $i,j\in[1\,..\,l]$, then there exist a probability vector $(r_1,\ldots,r_m)$ and an orthonormal basis $\{x_{i,1},\ldots,x_{i,m}\}$ of each $\cX_i$ such that
\begin{equation}\label{eq:xx}
\Phi(x_{i,g}x_{j,h}^\ast)=\delta(g,h)\sum_{k=1}^m r_k x_{i,k} x_{j,k}^\ast
\end{equation}
for all $i,j\in[1\,..\,l]$ and $g,h\in[1\,..\,m]$.
\end{enumerate}
\end{lemma}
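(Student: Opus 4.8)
The plan is to reduce everything to repeated applications of \lem{stcomp} (the CPTP form of \lem{struc}) glued together by positivity of $\Phi$. First I record the setup: since $\Phi(\mu)=\Tr(\mu)\rho_i$ for every $\mu\in L(\cX_i)$, each $L(\cX_i)$ is invariant, and, because $\Phi$ is completely positive, so is each $L(\cX_i,\cX_j)$ (the Choi-matrix central-minor observation made just before \lem{stcomp}). Hence, whenever $\Phi[L(\cX_i,\cX_j)]\neq 0$, the operator $\xi=\Phi(\mu)$ for a suitable $\mu\in L(\cX_i,\cX_j)$ is a nonzero fixed point lying in $L(\cX_i,\cX_j)$, so \lem{stcomp} applies to the pair $(\cX_i,\cX_j)$; in particular it forces $\rho_i$ and $\rho_j$ to have the same eigenvalues, and since in the second statement every pair is non-vanishing, all $\rho_i$ share one eigenspectrum, which we take to be the probability vector $(r_1,\dots,r_m)$ (with every $r_m>0$ because $\rank\rho_i=m$).

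For the first statement, apply \lem{stcomp} to $(\cX_i,\cX_j)$ and to $(\cX_j,\cX_k)$. Both applications diagonalize $\rho_j$, and the two resulting orthonormal bases of $\cX_j$, being eigenbases of $\rho_j$ with the same eigenvalues, differ by a unitary that is block-diagonal across the eigenspaces of $\rho_j$. The key point is that the singular value decomposition in \lem{stcomp} carries exactly this much gauge: replacing the left and right singular vectors inside a block of equal singular values by their images under one common unitary leaves $\xi$, the operators $\rho$ and $\sigma$, and the identity $\Phi(y_gz_h^\ast)=\delta(g,h)\xi/c$ all unchanged. So I can rotate the second application's $\cX_j$-basis, dragging its $\cX_k$-basis along, until it agrees with the first's, producing orthonormal bases $\{x_{i,m}\}$, $\{x_{j,m}\}$, $\{x_{k,m}\}$ with $\rho_i=\sum_m r_m x_{i,m}x_{i,m}^\ast$ (and similarly for $j$ and $k$), with $\Phi(x_{i,g}x_{j,h}^\ast)=\delta(g,h)\sum_m r_m x_{i,m}x_{j,m}^\ast$, the analogous identity for $(j,k)$, and, by Hermiticity-preservation, those for $(j,i)$ and $(k,j)$. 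Now put $A_q=\Phi((x_{i,q}+x_{j,q}+x_{k,q})(x_{i,q}+x_{j,q}+x_{k,q})^\ast)\succcurlyeq 0$, expand it with these formulas, and write $\eta_{mn}=x_{i,m}^\ast\Phi(x_{i,q}x_{k,q}^\ast)x_{k,n}$: the $3\times 3$ principal minor of $A_q$ on $\{x_{i,m},x_{j,m},x_{k,m}\}$ equals $-r_m|\eta_{mm}-r_m|^2$, while the one on $\{x_{i,m},x_{j,n},x_{k,n}\}$ with $m\neq n$ equals $-r_n|\eta_{mn}|^2$; non-negativity forces $\eta_{mm}=r_m$ and $\eta_{mn}=0$, whence $\Phi(x_{i,q}x_{k,q}^\ast)=\sum_m r_m x_{i,m}x_{k,m}^\ast\neq 0$, which is the claim.

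For the second statement I run the same argument around a hub $\cX_1$. Using the gauge freedom above, I align all invocations of \lem{stcomp} for the pairs $(\cX_1,\cX_j)$ so that they share one basis $\{x_{1,m}\}$ of $\cX_1$ (diagonalizing $\rho_1$), obtaining bases $\{x_{j,m}\}$ of every $\cX_j$ with $\Phi(x_{1,g}x_{j,h}^\ast)=\delta(g,h)\sum_m r_m x_{1,m}x_{j,m}^\ast$. For any pair $i,j\neq 1$, the minor computation above applied to the triple $(\cX_i,\cX_1,\cX_j)$ gives $\Phi(x_{i,q}x_{j,q}^\ast)=\sum_m r_m x_{i,m}x_{j,m}^\ast$; this is a nonzero fixed point of $\Phi$ in $L(\cX_j,\cX_i)$, so feeding it back into \lem{stcomp}, whose singular value decomposition is then $\sum_m r_m x_{i,m}x_{j,m}^\ast$ (so $c=1$), upgrades the identity to $\Phi(x_{i,g}x_{j,h}^\ast)=\delta(g,h)\sum_m r_m x_{i,m}x_{j,m}^\ast$ for all $g,h$. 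Together with the diagonal blocks $\Phi(x_{i,g}x_{i,h}^\ast)=\delta(g,h)\rho_i=\delta(g,h)\sum_m r_m x_{i,m}x_{i,m}^\ast$, this establishes \eqref{eq:xx}.

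The determinant evaluations are short and routine. The step that needs care --- and the one the discussion preceding this lemma is gesturing at --- is the alignment: showing precisely that the only obstruction to matching the $\cX_j$-bases obtained from two applications of \lem{stcomp} is a block-diagonal unitary on the degenerate singular-value subspaces, and that this residual freedom can be absorbed by a compensating rotation of the companion space without disturbing any conclusion of \lem{stcomp}. I expect that to be where the real work lies.
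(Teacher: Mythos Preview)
Your proposal is correct and follows essentially the same route the paper sketches before stating the lemma: apply \lem{stcomp} pairwise, use the gauge freedom of the SVD on degenerate singular-value blocks to align the common $\cX_j$-basis, and then read off the action on $L(\cX_i,\cX_k)$ from positivity via the $3\times3$ principal minors of $A_q=\Phi\big((x_{i,q}+x_{j,q}+x_{k,q})(x_{i,q}+x_{j,q}+x_{k,q})^\ast\big)$ --- exactly the determinant computation the paper carries out in the proof of Theorem~\ref{specthe}. Your hub-and-spoke organization for part~2 and the ``feed the resulting fixed point back into \lem{stcomp}'' step are a clean way to finish; the only point worth making explicit (and you already flag it) is that the conclusion of \lem{stcomp} holds for \emph{any} valid SVD of $\xi$, since its proof uses only positivity and the fixed-point equation, not uniqueness of the decomposition.
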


Consider the statement of Lemma \ref{the:sym}.
 There exist two spaces $\cY$ and $\cZ$ of dimension $l$ and $m$, respectively, such that $\cY\otimes\cZ=\bigoplus_{i=1}^l\cX_i$ and $x_{i,j}=y_i\otimes z_j$ for all $i\in[1\,..\,l]$ and $j\in[1\,..\,m]$, where $\{y_1,\ldots,y_l\}$ and $\{z_1,\ldots,z_m\}$ are orthonormal bases of $\cY$ and $\cZ$, respectively.
 Let $\rho=\sum_{k=1}^m r_k z_k z_k^*\in D(\cZ)$, and let $\Gamma_{L(\cZ)}^\rho$ be the super-operator that, for all $\mu\in L(\cZ)$, maps $\mu$ to $\Tr(\mu)\rho$.
 Then, we can rewrite (\ref{eq:xx}) as
\[
\Phi(y_iy_j^*\otimes z_gz_h^*)
 = \Tr(z_gz_h^*)\sum_{k=1}^m r_k (y_iy_j^*\otimes z_kz_k^*) 
 = \I_{L(\cY)}(y_iy_j^*)\otimes\Gamma_{L(\cZ)}^\rho(z_gz_h^*).
\]
Hence, Lemmas \ref{lem:div} and \ref{the:sym} together imply:
\begin{theorem}
\label{the:main}
 Let $\Phi\in T(\cV)$ be a CPTP super-operator satisfying $\Phi^2=\Phi$.
 Then there is a unique subspace $\cX\subseteq\cV$ such that $\Phi[L(\cV)]\subseteq L(\cX)$ and the following holds.
There exist spaces $\cY_1,\ldots,\cY_n,\cZ_1,\ldots,\cZ_n$ and, for all $i\in[1\,..\,n]$, density operators $\rho_i\in D(\cZ_i)$ of rank $\dim\cZ_i$ such that 
\[
    \cX=\bigoplus_{i=1}^n\cY_i\otimes\cZ_i
\]
 and $\Phi$ restricted to the subspace $L(\cX)$ is
\begin{equation} \label{eq:form}
 \Phi_{L(\cX)} = \bigoplus_{i=1}^n \I_{L(\cY_i)}\otimes\Gamma_{L(\cZ_i)}^{\rho_i}.
\end{equation}
\end{theorem}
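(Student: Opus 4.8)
The plan is to reduce the statement to the non-degenerate situation of Sections~\ref{sec:pos}--\ref{sec:compos} and then assemble Lemma~\ref{lem:div} with Lemma~\ref{the:sym}. First I would define $\cX^\perp=\{y\in\cV:\Phi(\mu)y=0\text{ for all }\mu\in L(\cV)\}$, a subspace of $\cV$, and set $\cX=\cV\setminus\cX^\perp$, exactly as in the paragraph preceding Lemma~\ref{lem:div}. Since $\Phi$ is Hermiticity-preserving, for $y\in\cX^\perp$ one has $\Phi(\mu)^\ast y=\Phi(\mu^\ast)y=0$, so $\Phi(\mu)$ also annihilates $\cX^\perp$ on the left; hence $\Phi[L(\cV)]\subseteq L(\cX)$ and the restriction $\Phi_{L(\cX)}$ is a well-defined element of $T(\cX)$, routinely still CPTP and still idempotent. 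For every non-zero $\psi\in\cX$ there is $\mu\in L(\cV)$ with $\Phi(\mu)\psi\neq0$; replacing $\mu$ by $\Phi(\mu)\in L(\cX)$ and expanding in density operators, some $\tau\in D(\cX)$ has $\Phi(\tau)\psi\neq0$, so the standing assumption of Section~\ref{sec:pos} holds. Applying Lemma~\ref{lem:div} to $\Phi_{L(\cX)}$ gives orthogonal $\cX_1,\ldots,\cX_l$ with $\cX=\bigoplus_i\cX_i$, full-rank $\rho_i\in D(\cX_i)$ with $\Phi(\mu)=\Tr(\mu)\rho_i$ on $L(\cX_i)$, and $\Phi[L(\cX_i,\cX_j)]=0$ whenever $\dim\cX_i\neq\dim\cX_j$.

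Next I would organise the indices into equivalence classes. For each dimension value $d$ let $I_d=\{i:\dim\cX_i=d\}$; by Lemma~\ref{lem:div}, $\Phi[L(\cX_i,\cX_j)]=0$ when $i,j$ lie in different $I_d$'s. On each $I_d$ define $i\sim j$ iff $\Phi[L(\cX_i,\cX_j)]\neq0$; this is reflexive (it contains $\rho_i\neq0$), symmetric (as $\Phi$ is Hermiticity-preserving and $L(\cX_i,\cX_j)^\ast=L(\cX_j,\cX_i)$), and transitive by part~1 of Lemma~\ref{the:sym} applied to the $d$-dimensional family $\{\cX_i:i\in I_d\}$. This yields a partition $\{1,\ldots,l\}=\bigsqcup_{p=1}^n C_p$ with $\Phi[L(\cX_i,\cX_j)]=0$ whenever $i,j$ lie in distinct classes. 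Fixing a class $C_p$ with common dimension $m_p$ and applying part~2 of Lemma~\ref{the:sym}, I get a probability vector $(r^p_1,\ldots,r^p_{m_p})$ and orthonormal bases $\{x_{i,1},\ldots,x_{i,m_p}\}$ of each $\cX_i$, $i\in C_p$, with $\Phi(x_{i,g}x_{j,h}^\ast)=\delta(g,h)\sum_k r^p_k\,x_{i,k}x_{j,k}^\ast$; taking $g=h$, $i=j$ identifies $\rho_i=\sum_k r^p_k\,x_{i,k}x_{i,k}^\ast$, and full rank of $\rho_i$ forces every $r^p_k>0$. Introducing a $|C_p|$-dimensional $\cY_p$ with basis $\{y_i:i\in C_p\}$, an $m_p$-dimensional $\cZ_p$ with basis $\{z_1,\ldots,z_{m_p}\}$, and identifying $\bigoplus_{i\in C_p}\cX_i\cong\cY_p\otimes\cZ_p$ via $x_{i,j}\mapsto y_i\otimes z_j$, a direct check on the spanning operators $y_iy_j^\ast\otimes z_gz_h^\ast$ shows that $\Phi$ restricted to $L(\cY_p\otimes\cZ_p)$ equals $\I_{L(\cY_p)}\otimes\Gamma^{\rho_p}_{L(\cZ_p)}$, where $\rho_p=\sum_k r^p_k z_kz_k^\ast\in D(\cZ_p)$ has rank $\dim\cZ_p$.

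To assemble: since $\Phi$ kills $L(\cX_i,\cX_j)$ for $i,j$ in distinct classes, for any $\mu\in L(\cX)$ only the block-diagonal parts survive under $\Phi$, so $\Phi_{L(\cX)}=\bigoplus_{p=1}^n\I_{L(\cY_p)}\otimes\Gamma^{\rho_p}_{L(\cZ_p)}$ with $\cX=\bigoplus_p\cY_p\otimes\cZ_p$, which is the claimed form. For uniqueness of $\cX$: any subspace $\cX'$ with $\Phi[L(\cV)]\subseteq L(\cX')$ satisfies $\cV\setminus\cX'\subseteq\cX^\perp$, hence $\cX\subseteq\cX'$; conversely, if $\Phi_{L(\cX')}$ has the stated tensor-sum form with each density operator of full rank, then for a non-zero $\psi\in\cX'$ with non-zero component $\psi_{q_0}$ in the block $\cY'_{q_0}\otimes\cZ'_{q_0}$, pick $y'\in\cY'_{q_0}$ with $\phi:=(y'^\ast\otimes I)\psi_{q_0}\neq0$ and a unit $z'\in\cZ'_{q_0}$; then $\mu=(y'\otimes z')(y'\otimes z')^\ast$ satisfies $\Phi(\mu)=y'y'^\ast\otimes\rho'_{q_0}$ and $\Phi(\mu)\psi=y'\otimes\rho'_{q_0}\phi\neq0$ by invertibility of $\rho'_{q_0}$, so $\psi\notin\cX^\perp$ and $\cX'\subseteq\cX$; thus $\cX'=\cX$.

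I expect the only genuinely delicate points to be (i) making $\Phi[L(\cX_i,\cX_j)]\neq0$ into a bona fide equivalence relation, which is precisely why the dimension-matching clause of Lemma~\ref{lem:div} (to restrict to a fixed $I_d$) and the transitivity clause of Lemma~\ref{the:sym} are both needed; and (ii) the second half of the uniqueness argument, showing the tensor-sum structure prevents $\cX'$ from being strictly larger than the canonical $\cX$. Everything else — the reduction to the non-degenerate case, verifying CPTP-ness of the restriction, and matching $\Phi$ with $\I\otimes\Gamma^{\rho_p}$ on spanning operators — is routine bookkeeping.
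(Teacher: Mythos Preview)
Your proposal is correct and follows essentially the same route as the paper: reduce to $\cX$, invoke Lemma~\ref{lem:div} for the block decomposition, use Lemma~\ref{the:sym} to group equal-dimensional blocks into equivalence classes and obtain the tensor form $\I_{L(\cY_p)}\otimes\Gamma^{\rho_p}$ on each class, then assemble via the direct-sum convention. You are more explicit than the paper about the equivalence relation and about uniqueness of $\cX$ (which the paper does not spell out); one tiny wrinkle is that ``$\psi\notin\cX^\perp$ for every non-zero $\psi\in\cX'$'' yields only $\cX'\cap\cX^\perp=\{0\}$, not $\cX'\subseteq\cX$ directly, but combined with your already-established $\cX\subseteq\cX'$ this gives $\cX'=\cX\oplus(\cX'\cap\cX^\perp)=\cX$, so the conclusion stands.
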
 
From Theorem \ref{the:main}, it is easy to see that the fixed space of $\Phi$ is $\bigoplus_{i=1}^n L(\cY_i)\otimes \rho_i$.
That together with Theorem \ref{the:AW} implies Theorem \ref{th:KNPV}.

%======================================================================
\section{Discussion} \label{sec:conc}
%======================================================================

For a positive trace-preserving super-operator, in the general case, we still do not have a complete characterization of its fixed space. However, Lemmas \ref{lem:div} and \ref{lem:struc} together with Theorem \ref{the:AW} tell a lot about the structure of the fixed space.
These lemmas allow us to obtain a complete characterization of the fixed space in two special cases: one, when we assume that all the density operators $\rho_i$ given by Lemma \ref{lem:div} have distinct eigenvalues, and other, when we assume the complete positivity.
In these two cases, the structure of super-operator $\Phi$ must be very similar as shown by Theorem \ref{specthe} and Lemma \ref{the:sym}, respectively. 
If we assume both of these assumptions simultaneously, then we can see that $\Phi$ still can have any structure admitted by Lemma \ref{the:sym}.
That is, for CPTP $\Phi$, the structure of its fixed space does not depend on whether or not all the density operators $\rho_i$ given by Lemma \ref{lem:div} have distinct eigenvalues.

I conjecture that it is also so if we only assume positivity instead of complete positivity, namely, I conjecture that we can drop from the statement of Theorem \ref{specthe} the requirement that all the eigenvalues of $\rho_i$ are distinct.
The proof of such a result would most likely be based on Lemma \ref{lem:struc} and would use both the fact that every point in the image of $\Phi$ is also the fixed point of $\Phi$ and the fact that $\Phi$ is positive---just like the proof of Lemma \ref{speclem} does.

The reason why the special case when all the eigenvalues of operators $\rho_i$ are distinct is easier is because in this case the spectral decomposition of $\rho_i$ is unique.
When some of the eigenvalues appear multiple times, the spectral decomposition is unique up to the choice of orthonormal basis for each eingenspace.
Similarly as in the proof of Lemma \ref{speclem}, we can show that no operator in the image of $\Phi$ can map a vector from the eigenspace corresponding to one eigenvalue to a vector overlapping the eigenspace corresponding to a different eigenvalue.
Therefore, it might be useful to consider each eigenspace separately, in particular, to consider the case when all the eigenvalues of $\rho_i$ are the same.
So far, it is not clear what happens in this case, and it is an open problem for future research.

%======================================================================
\section*{Acknowledgments}
%======================================================================
I would like to thank Raymond Laflamme and  John Watrous for introducing me to this problem.
 Also I would like to thank Dominic Berry, Tsuyoshi Ito, and Seiichiro Tani for fruitful discussions and useful suggestions.
 This work was supported by Mike and Ophelia Lazaridis Fellowship.

\begin{bibdiv}
\begin{biblist}

\bib{nielsen}{book}{
  author = {Nielsen, M. A.},
  author = {Chuang, I. L.},
  title = {Quantum Computation and Quantum Information},
  publisher = {Cambridge University Press},
  address = {Cambridge, England},
  year = {2000}
}

\bib{laflamme}{book}{
  author = {Kaye, P.},
  author = {Laflamme, R.},
  author = {Mosca, M.},
  title = {An Introduction to Quantum Computing},
  publisher = {Oxford University Press},
  address = {Oxford, England},
  year = {2007}
}

\bib{AW}{inproceedings}{
  author = {Aaronson, S.},
  author = {Watrous, J.},
  booktitle = {Proceedings of the Royal Society A 465(2102)},
  pages={631\ndash647},
  title = {Closed timelike curves make quantum and classical computing equivalent},
  year ={2009}
}

\bib{SZRL}{article}{
  author = {Souza, A. M.},
  author = {Zhang, J.},
  author = {Ryan, C. A.},
  author = {Laflamme, R.},
  title = {Experimental magic state distillation for fault-tolerant quantum computing},
  journal = {Nature Communications},
  volume = {2},
  number = {169},
  year = {2011}
}

\bib{PW}{article}{
  author = {Piani, M.},
  author = {Watrous, J.},
  title = {All entangled states are useful for channel discrimination},
  journal = {Physical Review Letters},
  volume = {102},
  number = {250501},
  year = {2009}
}

\bib{KNPV}{article}{
 author={Blume-Kohout, R.},
 author={Ng, H. K.},
 author={Poulin, D.},
 author={Viola, L.},
 title = {Information-preserving structures: A general framework for quantum zero-error information},
 journal = {Physical Review A},
 volume = {82},
 number = {062306},
 year = {2010}
}

\bib{TD}{article}{
 author={Terhal, B.},
 author={DiVincenzo,  D.},
 title={On the problem of equilibration and the computation of correlation functions on a quantum computer},
 journal={Phys. Rev. A},
 volume={61},
 number={022301},
 year={2000}
}

\bib{gdbook}{book}{
  author = {Granas, A.},
  author = {Dugundji, J.},
  title = {Fixed Point Theory},
  publisher = {Springer},
  address = {New York},
  year = {2003}
}

%  note = {September 29, 2010. 2:24pm.}
\bib{WatrousLNotes}{misc}{
  author = {Watrous, J.},
  title = {Theory of Quantum Information},
  note = {Lecture notes, University of Waterloo, 2008. Available at \href{http://www.cs.uwaterloo.ca/~watrous/quant-info/}{http://www.cs.uwaterloo.ca/$\sim$watrous/quant-info/}}
}

\end{biblist}
\end{bibdiv}

\appendix

%======================================================================
\section{Proofs of Lemmas \ref{lem:pos1} and \ref{lem:pos2}} \label{app}
%======================================================================

\noindent {\bf Lemma \ref{lem:pos1}.}
{\em
Suppose $x,y,z\in\cX$ satisfy $z^*\Phi(xx^*)z=0$ and $z^*\Phi(yy^*)z=0$. Then $\Phi(xy^*)z=0$.
}
\begin{proof} 
Let $\beta=z^*\Phi(xy^*)z$, and, thus, $z^*\Phi(yx^*)z=\beta^*$. We have
\[
z^*\Phi((x-\beta y)(x^*-\beta^*y^*))z = -\beta^* z^*\Phi(xy^*)z-\beta z^*\Phi(yx^*)z =-2|\beta|^2\geq0,
\]
which implies $\beta=0$, i.e., $z^*\Phi(xy^*)z=0$ and $z^*\Phi(yx^*)z=0$. Hence,
\[
  z^*\Phi((x+y)(x^*+y^*))z = 0
  \quad\text{and}\quad
  z^*\Phi((x-I y)(x^*+I y^*))z = 0.
\]
Thus, because $\Phi((x+y)(x^*+y^*))$ and $\Phi((x-Iy)(x^*+Iy^*))$ are positive semi-definite, we have
 \[
 0=\Phi((x+y)(x^*+y^*))z = \Phi(xy^*)z+\Phi(yx^*)z
\]
and
\[
 0=\Phi((x-Iy)(x^*+Iy^*))z = I(\Phi(xy^*)z-\Phi(yx^*)z),
\]
which implies $\Phi(xy^*)z=0$.
\end{proof}

\noindent {\bf Lemma \ref{lem:pos2}.}
{\em
   Suppose $x\in\cX$ and $\cZ\subseteq\cX$ satisfy $\Pi_\cZ\Phi(xx^*)\Pi_\cZ=0$. Then $\Pi_\cZ\Phi(xy^*)\Pi_\cZ=0$ for all $y\in\cX$.
}
\begin{proof}
Choose an arbitrary $y\in\cX$. We need to prove that $u^*\Phi(xy^*)v=0$ for all $u,v\in\cZ$. The following lemma is the core of the proof:
\begin{lemma} \label{lem:pos2help}
For all $z\in\cZ$, $z^*\Phi(xy^*)z=0$.
\end{lemma}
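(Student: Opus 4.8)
The statement to prove is Lemma~\ref{lem:pos2help}: for all $z\in\cZ$, $z^*\Phi(xy^*)z=0$, under the hypothesis $\Pi_\cZ\Phi(xx^*)\Pi_\cZ=0$, which in particular gives $z^*\Phi(xx^*)z=0$ for every $z\in\cZ$.

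The plan is to reduce this to Lemma~\ref{lem:pos1}, which already handles the case where both $z^*\Phi(xx^*)z=0$ and $z^*\Phi(yy^*)z=0$ hold. The obstacle is that we have no control over $z^*\Phi(yy^*)z$ --- it may well be nonzero, so Lemma~\ref{lem:pos1} does not apply directly. The trick I would use is to decompose $y$ relative to the "nullity" behaviour of $z$. Concretely, fix $z\in\cZ$ and set $\lambda = z^*\Phi(xx^*)z = 0$. Consider, for a scalar parameter $t\in\R$ (or $t\in\C$), the operator $\Phi\big((x)(x+ty)^*\big)z = \Phi(xx^*)z + t\,\Phi(xy^*)z$. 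We do not know this vanishes, but we do know $z^*\Phi\big((x+ty)(x+ty)^*\big)z \geq 0$ for all $t$ since $\Phi$ is positive and $z^*(\cdot)z\geq 0$ on positive semidefinite operators. Expanding,
\[
z^*\Phi\big((x+ty)(x+ty)^*\big)z = \underbrace{z^*\Phi(xx^*)z}_{=0} + t\big(z^*\Phi(xy^*)z + z^*\Phi(yx^*)z\big) + t^2 z^*\Phi(yy^*)z \geq 0.
\]
Writing $\beta = z^*\Phi(xy^*)z$ so that $z^*\Phi(yx^*)z = \beta^*$ (by Hermiticity-preservation), this is $2t\,\Re\beta + t^2 z^*\Phi(yy^*)z \geq 0$ for all real $t$; the linear-in-$t$ term forces $\Re\beta = 0$. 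Replacing $y$ by $Iy$ (equivalently using the parameter $t$ on $Iy$) gives $\Re(I\beta)=0$, hence $\Im\beta = 0$, so $\beta = 0$. That is exactly $z^*\Phi(xy^*)z = 0$.

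The main subtlety --- and the step I'd expect to need the most care --- is justifying that $z^*\Phi(yy^*)z$ and the cross terms combine into a genuine univariate quadratic with nonnegative values for \emph{all} real $t$, and that this quadratic having a root at $t=0$ with nonnegative values everywhere forces the linear coefficient to vanish; this is elementary but I want to be sure the sign conventions and the Hermiticity of $\Phi$ (which gives $(\Phi(xy^*))^* = \Phi(yx^*)$) are invoked cleanly. An alternative, perhaps cleaner, packaging: note $\Phi(xx^*)\succcurlyeq 0$ and $z^*\Phi(xx^*)z = 0$ imply $\Phi(xx^*)z = 0$ by the fact quoted in Section~\ref{sec:np} (if $A\succcurlyeq 0$ and $x^*Ax=0$ then $Ax=0$); then for any $w\in\cX$, consider $\Phi\big((x+sw)(x+sw)^*\big)\succcurlyeq 0$ and look at the $(z,z)$ entry as a function of $s$ --- the same argument runs. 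Either route lands on $z^*\Phi(xy^*)z = 0$.

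Once Lemma~\ref{lem:pos2help} is established, completing Lemma~\ref{lem:pos2} itself is a polarization argument: for arbitrary $u,v\in\cZ$ one writes $u^*\Phi(xy^*)v$ as a combination of terms of the form $z^*\Phi(xy^*)z$ with $z$ ranging over $u, v, u+v, u+Iv \in \cZ$ (all of which are in $\cZ$ since $\cZ$ is a subspace), each of which vanishes by Lemma~\ref{lem:pos2help}; hence $\Pi_\cZ\Phi(xy^*)\Pi_\cZ = 0$. I expect this final polarization step to be entirely routine.
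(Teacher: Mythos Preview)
Your argument is correct and very close in spirit to the paper's: both feed rank-one positives into $\Phi$ and read off the $(z,z)$ entry. The execution differs slightly. Rather than a one-parameter family, the paper makes a single tailored choice: with $\alpha=z^*\Phi(yy^*)z\geq 0$ and $\beta=z^*\Phi(xy^*)z$, it computes
\[
z^*\Phi\big((\alpha x-\beta y)(\alpha x-\beta y)^*\big)z=-\alpha|\beta|^2\geq 0,
\]
which forces $\beta=0$ whenever $\alpha>0$; the remaining case $\alpha=0$ is dispatched by Lemma~\ref{lem:pos1}. Your version trades that case split (and the appeal to Lemma~\ref{lem:pos1}) for two passes---one extracting $\Re\beta=0$ and one extracting $\Im\beta=0$ via $y\mapsto Iy$. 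Both routes are equally valid; yours is arguably more uniform, the paper's a touch slicker in the generic case. Your polarization step for finishing Lemma~\ref{lem:pos2} matches the paper exactly.
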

\begin{proof} 
Note that $z^*\Phi(xx^*)z=0$. Let $\alpha=z^*\Phi(yy^*)z\geq0$ and $\beta=z^*\Phi(xy^*)z$, and, thus, $z^*\Phi(yx^*)z=\beta^*$. If $\alpha=0$, then $z^*\Phi(xy^*)z=0$ due to Lemma \ref{lem:pos1}, therefore let us assume that $\alpha>0$. Now,
\[
z^*\Phi((\alpha x-\beta y)(\alpha x^*-\beta^*y^*))z=-\alpha|\beta|^2\geq 0
\]
implies $\beta=0$.
\end{proof}
 By applying Lemma \ref{lem:pos2help} to all $z\in\{u,v,u+v,u+Iv\}$ and using linearity, we get that $u^*\Phi(xy^*)v\pm v^*\Phi(xy^*)u=0$. Hence, $u^*\Phi(xy^*)v=0$.
\end{proof}

\end{document}